\newtheorem{lemma}{Lemma}[section]
\newtheorem{theorem}[lemma]{Theorem}
\newtheorem{definition}[lemma]{Definition}
\newtheorem{proposition}[lemma]{Proposition}
\newtheorem{claim}[lemma]{Claim}
\newtheorem{oprob}[lemma]{Open problem}
\newtheorem{remark}[lemma]{Remark}
\numberwithin{lemma}{section}
\newtheorem{informal theorem}[lemma]{Informal Theorem}
\newcommand{\set}[1]{\left \{{#1} \right \}}
\newcommand{\poly}{{\operatorname{poly}}}
\newcommand{\remove}[1]{}
\newcommand{\B}{\{0,1\}}
\newcommand{\ar}{\rightarrow}
\newcommand{\remph}[1]{\textsf{#1}}
\newcommand{\eps}{\epsilon}
\newcommand{\from}{\leftarrow}
\newcommand{\Hi}{H_{\infty}}
\newcommand{\Exp}{\mathbb{E}}
\newcommand{\half}{\frac{1}{2}}
\newcommand{\Con}{\operatorname{Con}}
\newcommand{\Red}{\operatorname{Red}}
\newcommand{\Enc}{\operatorname{Enc}}
\newcommand{\Dec}{\operatorname{Dec}}
\newcommand{\shared}{\operatorname{shared}}
\newcommand{\List}{\operatorname{List}}
\newcommand{\BSC}{\operatorname{BSC}}
\newcommand{\Hadamard}{\operatorname{Had}}
\newcommand{\RNSY}{\operatorname{RNSY}}
\newcommand{\NSY}{\operatorname{NSY}}
\newcommand{\Fix}{\operatorname{Fix}}
\newcommand{\Cond}{\operatorname{Cond}}
\newcommand{\MU}{\operatorname{MDIST}}
\newcommand{\MD}{\operatorname{MD}}
\newcommand{\WBV}{\operatorname{WBV}}
\newcommand{\newowf}{\operatorname{newOWF}}
\newcommand{\pred}{\operatorname{pred}}
\newcommand{\weight}{\operatorname{weight}}
\newcommand{\D}{D}
\title{Query complexity lower bounds \\ for local list-decoding and hard-core predicates \\ (even for small rate and huge lists)\footnote{A preliminary version of this manuscript appeared in the proceedings of ITCS 2021~\cite{RonZewiSV21}.}}
\author[1]{Noga Ron-Zewi\footnote{\texttt{noga@cs.haifa.ac.il}}}
\author[1]{Ronen Shaltiel\footnote{\texttt{ronen@cs.haifa.ac.il}}}
\author[2]{Nithin Varma\footnote{\texttt{nithvarma@gmail.com}}}
\affil[1]{\small University of Haifa, Israel}
\affil[2]{\small Max Planck Institute for Informatics, Saarland Informatics Campus, Germany}
\begin{document}

\setlist{itemsep=0\baselineskip}

\makeatletter
\renewcommand{\paragraph}{%
  \@startsection{paragraph}{4}%
  {\z@}{2ex \@plus 1ex \@minus .2ex}{-1em}%
  {\normalfont\normalsize\bfseries}%
}
\makeatother
\date{}
\begin{titlepage}

\maketitle
\thispagestyle{empty}

\begin{abstract}
A binary code $\Enc:\B^k \ar \B^n$ is $(\half-\eps,L)$-\emph{list decodable} if for every $w \in \B^n$, the set $\List(w)$, containing all messages $m \in \B^k$ such that the relative Hamming distance between $\Enc(m)$ and $w$ is at most $\half-\eps$, is of size at most $L$.
Informally, a $q$-query \emph{local list-decoder} for $\Enc$ is a randomized procedure $\Dec:[k]\times [L] \to \{0,1\}$ that when given oracle access to a string $w$, makes at most $q$ oracle calls, and for every message $m \in \List(w)$, with high probability, there exists $j \in [L]$ such that for every $i \in [k]$, with high probability, $\Dec^w(i,j)=m_i$.

We prove lower bounds on $q$, that apply even if $L$ is huge (say $L=2^{k^{0.9}}$) and the rate of $\Enc$ is small (meaning that $n \ge 2^{k}$):
\begin{itemize}
\item For $\eps \geq 1/k^{\nu}$ for some universal constant $0< \nu < 1$, we prove a lower bound of $q=\Omega(\frac{\log(1/\delta)}{\eps^2})$, where $\delta$ is the error probability of the local list-decoder. This bound is tight as there is a matching upper bound by Goldreich and Levin (STOC 1989) of $q=O(\frac{\log(1/\delta)}{\eps^2})$ for the Hadamard code (which has $n=2^k$). This bound extends an earlier work of Grinberg, Shaltiel and Viola (FOCS 2018) which only works if $n \le 2^{k^{\gamma}}$ for some universal constant $0<\gamma <1$, and the number of coins tossed by $\Dec$ is small (and therefore does not apply to the Hadamard code, or other codes with low rate).

\item For smaller $\eps$, we prove a lower bound of roughly $q = \Omega(\frac{1}{\sqrt{\eps}})$. To the best of our knowledge, this is the first lower bound on the number of queries of local list-decoders that gives $q \ge k$ for small~$\eps$.
\end{itemize}

Local list-decoders with small $\eps$ form the key component in the celebrated theorem of Goldreich and Levin that extracts a \emph{hard-core predicate} from a one-way function.
We show that black-box proofs \emph{cannot} improve the Goldreich-Levin theorem and produce a hard-core predicate that is hard to predict with probability $\half+\frac{1}{\ell^{\omega(1)}}$ when provided with a one-way function $f:\B^{\ell} \ar \B^{\ell}$, where $f$ is such that circuits of size $\poly(\ell)$ cannot invert $f$ with probability $\rho=1/2^{\sqrt{\ell}}$ (or even $\rho=1/2^{\Omega(\ell)}$). This limitation applies to any proof by black-box reduction (even if the reduction is allowed to use nonuniformity and has oracle access to $f$).
%
\end{abstract}
\end{titlepage}

\section{Introduction}
\label{sec:intro}

We prove limitations on local list-decoding algorithms and on reductions establishing hard-core predicates.

\subsection{Locally list-decodable codes}

List-decodable codes are a natural extension of (uniquely decodable) error-correcting codes, as it allows (list) decoding for error regimes where unique decoding is impossible. This is an extensively studied area; see \cite{Gur-survey06} for a survey. In this paper, we will be interested in list-decoding of binary codes.

\begin{definition}[List-decodable code]
For a function $\Enc:\B^k \ar \B^n$, and $w \in \B^n$, we define
\[ \List^{\Enc}_{\alpha}(w)=\set{m \in \B^k: \mathsf{dist}(\Enc(m),w) \le \alpha}.\footnote{For two strings $x,y \in \B^n$ we use $\mathsf{dist}(x,y)$ to denote the \emph{relative Hamming distance} between $x$ and $y$, namely, $\mathsf{dist}(x,y)=|\set{i \in [n]:x_i \ne y_i}|/n$.}\]
We say that $\Enc$ is $(\alpha,L)$-list-decodable if for every $w \in \B^n$, $|\List^{\Enc}_{\alpha}(w)| \le L$.
\end{definition}
The task of \emph{algorithmic} list-decoding is to produce the list $\List^{\Enc}_{\alpha}(w)$ on input $w \in \B^n$.

\emph{Local} unique decoding algorithms are algorithms that given an index $i \in [k]$, make few oracle queries to $w$, and reproduce the bit $m_i$ (with high probability over the choice of their random coins), where $m$ is such that
$\Enc(m)$ is the unique codeword close to $w$. This notion of \emph{local decoding} has many connections and applications in computer science and mathematics; see ~\cite{Yekhanin12} for a survey.

We will be interested in \emph{local} list-decoding algorithms
that receive oracle access to a received word $w \in \B^n$, as well as inputs $i \in [k]$ and $j \in [L]$. Informally, we will require that for every $m \in \List^{\Enc}_{\alpha}(w)$, with high probability, there exists a $j \in [L]$ such that for every $i \in [k]$, when the decoder $\Dec$ receives oracle access to $w$ and inputs $i,j$, it produces $m_i$ with high probability over its choice of random coins.
Local list decoding algorithms have many applications in theoretical computer science, for example in cryptography~\cite{GL89}, learning theory~\cite{KM93}, average-to-worst-case reductions \cite{Lip90}, and hardness amplification~\cite{BFNW93, STV01}. More formally, it is defined as follows.

\begin{definition}[Randomized local computation]
\label{dfn:randomized local computation}
We say that a procedure $P(i,R)$ \remph{locally computes} a string $m \in \B^k$ with error $\delta$, if for every $i \in [k]$, $\Pr[P(i,R)=m_i] \ge 1-\delta$ (where the probability is over a uniform choice of the ``string of random coins'' $R$).
\end{definition}

The definition of local list-decoders considers an algorithmic scenario that works in two steps:
\begin{itemize}
\item At the first step (which can be thought of as a preprocessing step) the local list-decoder $\Dec$ is given oracle access to $w$ and an index $j \in [L]$. It tosses random coins (which we denote by $r^{\shared}$).
\item At the second step, the decoder receives the additional index $i \in [k]$, and tosses additional coins $r$.
\item It is required that for every $w \in \B^n$ and $m \in \List^{\Enc}_{\alpha}(w)$, with probability at least $2/3$ over the choice of the shared coins $r^{\shared}$, there exists $j \in [L]$ such that when the local list-decoder receives $j$, it locally computes $m$ (using its ``non-shared'' coins $r$). The definition uses two types of random coins because the coins $r^{\shared}$ are ``shared'' between different choices of $i \in [k]$ and allow different $i$'s to ``coordinate''. The coins $r$, are chosen independently for different choices of $i \in [k]$.
\end{itemize}

This is formally stated in the next definition.

\begin{definition}[Local list-decoder]
\label{dfn:LLD}
Let $\Enc:\B^k \ar \B^n$ be a function. An \remph{$(\alpha,L,q,\delta)$-local list-decoder (LLD)} for $\Enc$ is an oracle procedure $\Dec^{(\cdot)}$ that receives oracle access to a word $w \in \B^n$, and makes at most $q$ calls to the oracle. The procedure $\Dec$ also receives inputs:
\begin{itemize}
\item $i \in [k]$ : The index of the symbol that it needs to decode.
\item $j \in [L]$ : An index to the list.
\item Two strings $r^{\shared},r$ that are used as random coins.
\end{itemize}
It is required that for every $w \in \B^n$, and for every $m \in \List^{\Enc}_{\alpha}(w)$, with probability at least $2/3$ over choosing a uniform string $r^{\shared}$, there exists $j \in [L]$ such that the procedure \[P_{w,j,r^{\shared}}(i,r)=\Dec^w(i,j,r^{\shared},r) \] locally computes $m$ with error $\delta$. If we omit $\delta$, then we mean $\delta=1/3$.
\end{definition}

\begin{remark}[On the generality of Definition \ref{dfn:LLD}]
The goal of this paper is to prove lower bounds on local list-decoders, and so, making local list-decoders as general as possible, makes our results stronger. We now comment on the generality of Definition \ref{dfn:LLD}.
\begin{itemize}
\item In Definition \ref{dfn:LLD} we do not require that $L = |\List^{\Enc}_{\alpha}(w)|$, and allow the local list-decoder to use a larger $L$. This means that on a given $w$, there may be many choices of $j \in [L]$ such that the procedure $P_{w,j,r^{\shared}}(i,r)=\Dec^w(i,j,r^{\shared},r)$ locally computes messages $m \not \in \List^{\Enc}_{\alpha}(w)$.
\item In Definition \ref{dfn:LLD} we do not place any restriction on the number of random coins used by the local list-decoder, making the task of local list-decoding easier.
\item We allow $\Dec$ to make \emph{adaptive} queries to its oracle.
\item We are only interested in the total number of queries made by the combination of the two steps. It should be noted that w.l.o.g., a local list-decoder can defer all its queries to the second step (namely, after it receives the input $i$), and so, this definition captures local list-decoding algorithms which make queries to the oracle at both steps.
\item To the best of our knowledge, all known local list-decoders in the literature are of the form presented in Definition \ref{dfn:LLD}.
\end{itemize}
\end{remark}

\subsubsection{Lower bounds on the query complexity of local list-decoders}
\label{sec:intro:code:results}

In this paper we prove lower bounds on the number of queries $q$ of $(\half-\eps,L,q,\delta)$-local list-decoders. Our goal is to show that the number of queries $q$ has to be large, when $\eps$ is \emph{small}.
Our lower bounds apply even if the size of the list $L$ is huge and approaches $2^k$ (note that a local list-decoder can trivially achieve $L=2^k$ with a list of all messages). Our lower bounds also apply even if the rate of the code is very small, and $n \ge 2^k$.

We remark that this parameter regime is very different than the one studied in lower bounds on the number of queries of local decoders for \emph{uniquely} decodable codes (that is, for $L=1$). By the Plotkin bound, uniquely decodable codes cannot have $\eps < \frac{1}{4}$, and so, the main focus in uniquely decodable codes is to show that local decoders for codes with ``good rate'' and ``large'' $\eps=\Omega(1)$, must make many queries. In contrast, we are interested in the case where $\eps$ is small, and want to prove lower bounds that apply to huge lists and small rate.

\paragraph{Lower bounds for large $\eps$.}
Our first result is a tight lower bound of $q=\Omega(\frac{\log(1/\delta)}{\eps^2})$ on the number of queries, assuming $\eps$ is sufficiently large, namely $\eps \ge \frac{1}{k^{\nu}}$ for some universal constant $0 < \nu \le 1$.

\begin{restatable}[Tight lower bounds for large $\eps$]{theorem}{largeeps}\label{thm:main:code:large eps}
There exists a universal constant $\nu >0$ such that for any $L \leq 2^{k^{0.9}}$, $\epsilon \in (k^{-\nu}, \frac 1 4)$,  and $\delta \in (k^{-\nu}, \frac 1 3)$, we have that
every $(\half-\eps,L,q,\delta)$-local list-decoder for $\Enc:\B^k \ar \B^n$ must have $q=\Omega(\frac{\log(1/\delta)}{\eps^2})$.
\end{restatable}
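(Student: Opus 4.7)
The plan is to reduce the LLD lower bound to a simple-vs-simple hypothesis testing problem on a BSC channel, then invoke a KL-based Neyman--Pearson / Chernoff--Stein bound, while carefully absorbing the $\log L$ bits of ``list advice'' via amortization over the $k$ message coordinates.

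First I would fix a hard input distribution: draw $M \in \B^k$ uniformly and let $W$ be the output of $\Enc(M)$ passed through a binary symmetric channel with crossover probability $\frac{1}{2} - 2\eps$. A Chernoff bound ensures $\mathsf{dist}(\Enc(M), W) \leq \frac{1}{2} - \eps$ except with negligible probability, so $M \in \List^{\Enc}_{1/2-\eps}(W)$ with high probability. The LLD guarantee then yields, with probability $\geq 2/3 - o(1)$ over $(M, W, r^{\shared})$, an index $J = J(W, M, r^{\shared}) \in [L]$ such that $\Dec^W(\cdot, J, r^{\shared}, \cdot)$ locally computes $M$ with per-bit error $\delta$. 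I would fix a ``good'' value of $r^{\shared}$ and view $J$ as a random variable in $(M, W)$ with $H(J) \leq \log L \leq k^{0.9}$.

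The key step is amortizing the advice $J$ across the $k$ coordinates. Since $M$ is uniform its bits are independent, and subadditivity of mutual information gives $\sum_{i \in [k]} I(M_i; J) \leq I(M; J) \leq \log L$. Hence for all but $o(k)$ coordinates, $I(M_i; J) = o(1)$, and by Pinsker the conditional distributions of $J$ under $M_i = 0$ and $M_i = 1$ are statistically close. For such a ``good'' $i$, replacing $J$ by an independent sample from its marginal alters the decoder's error by at most $o(\delta)$---provided $\nu$ is chosen small enough that $\sqrt{\log L / k} \ll \delta$---yielding a single-bit decoder that uses only $q$ queries to $W$ and succeeds with error $\delta + o(\delta)$ on the binary hypothesis test $M_i \in \{0,1\}$. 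For a $q$-query (possibly adaptive) decoder, the KL divergence between the transcript distributions under the two hypotheses is at most $q \cdot O(\eps^2)$: conditional on $M_{-i}$ each queried bit $W_j$ is a Bernoulli random variable whose mean shifts by $O(\eps)$ between hypotheses, giving per-query KL $O(\eps^2)$, and mixing over $M_{-i}$ only decreases KL. Chernoff--Stein then forces $q \cdot O(\eps^2) \geq \Omega(\log(1/\delta))$, i.e., $q = \Omega(\log(1/\delta)/\eps^2)$.

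The main obstacle is the amortization step: the advice $J$ could in principle concentrate all of its $\log L$ bits of entropy on a single coordinate $M_i$, in which case the reduction fails for that $i$. Handling this uniformly for arbitrary codes, arbitrary $n$, and arbitrary $r^{\shared}$ is the chief technical improvement over Grinberg--Shaltiel--Viola, whose analysis leveraged both $n \leq 2^{k^\gamma}$ and bounded randomness. The delicate choice of $\nu$ balances the constraints so that the per-coordinate ``advice leakage'' $\log L / k$ and the Pinsker loss $\sqrt{\log L/k}$ are both negligible compared to $\log(1/\delta) \leq \nu \log k$ appearing on the right-hand side of the Chernoff--Stein bound, at which point the theorem falls out.
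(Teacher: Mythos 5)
Your proposal takes a genuinely different route from the paper — mutual-information amortization over coordinates versus the paper's "useful message" plus iterative-fixing argument — but it has a fundamental gap at the point where you replace $J$ by an independent sample.

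The issue is that $I(M_i; J)$ being small only controls the correlation between $J$ and the \emph{single bit} $M_i$. It says nothing about the correlation between $J$ and $(M_{-i}, Z, W)$, which is large and essential: $J$ is, by construction, the index of the list entry that correctly decodes the actual received word $W$. If you replace $J$ by a fresh independent draw from its marginal, the decoder is typically handed a list index that is wrong for the particular $(M,W)$ at hand, and its error can jump to near $\tfrac 1 2$ no matter how small $I(M_i; J)$ is. So the claimed $o(\delta)$ perturbation is unjustified.

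The same problem undermines the subsequent KL bound. You argue that each queried bit of $W$ is a Bernoulli whose mean shifts by $O(\eps)$ between $M_i=0$ and $M_i=1$, giving per-query KL $O(\eps^2)$. That holds before conditioning, but the decoder also has access to $J$, and conditioning on $J$ being the correct index is a highly informative event about the noise $Z$: it says $Z$ lies in the set of noise patterns for which the $J$-th list entry succeeds. Restricted to the decoder's query locations, $Z$ is then no longer a product of near-fair Bernoullis, so the per-query KL bound does not apply. This is exactly the obstacle the paper fights: it isolates a single index $j'$ (Claim 4.10), restricts to ``useful'' messages $m$ for which $j'$ is decoding with probability at least $1/(6L)$, and invokes the fixing lemma from \cite{S20} (Lemma 4.14) to find a small set $B(m)$ and values $v(m)$ such that, after also conditioning the noise on $z|_{B(m)}=v(m)$, a $q$-query procedure cannot distinguish the conditioned noise from fresh $\BSC$ noise with the same coordinates fixed. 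The ``trench warfare'' iteration (Claims 4.21, 4.22) — repeatedly fixing a low-min-entropy coordinate of $B(\cdot)$ at the cost of a $2^{-(t+1)}$ factor in the message count — handles the residual dependence of $B(m),v(m)$ on $m$, and it is this loss that forces the constraints $\eps,\delta \geq k^{-\nu}$ and $L \leq 2^{k^{0.9}}$. Your proposal would need some analogous mechanism to control the skew in the noise distribution induced by conditioning on the list index; the mutual-information bookkeeping alone does not supply one.
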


Theorem \ref{thm:main:code:large eps} is tight in the sense that the Hadamard code (which has length $n=2^k$) has $(\half-\eps,L=O(1/\eps^2),q=O(\frac{\log(1/\delta)}{\eps^2}),\delta)$ local list-decoders \cite{GL89}.
In fact, the Hadamard code was the motivation for this research, and is a running example in this paper.

Our results show that even if we allow list sizes $L$ which approach $2^k$, it is impossible to reduce the number of queries for the Hadamard code.
Our results also show that even if we are willing to allow very small rate ($n \geq 2^k$), and huge list sizes, it is impossible to have codes whose local list-decoders make fewer queries than the local list-decoders for the Hadamard code.

\paragraph{Comparison to previous work.}
Theorem \ref{thm:main:code:large eps} improves and extends an earlier work by Grinberg, Shaltiel and Viola \cite{GSV18} that gave the same bound of $q=\Omega(\frac{\log(1/\delta)}{\eps^2})$ for a more limited parameter regime: Specifically, in \cite{GSV18}, for the lower bound to hold, it is also required that $n \le 2^{k^{\gamma}}$, for some universal constant $\gamma>0$, and that the total number of coins tossed by the local list-decoder is less than $k^{\gamma}-\log L$ (which in particular implies that $L \leq 2^{k^\gamma}$).\footnote{The work of \cite{GSV18} is concerned with proving lower bounds on the number of queries of ``nonuniform reductions for hardness amplification'' \cite{ViolaThesis,SV08,AS11,GSV18}. As explained in \cite{ViolaThesis,SV08,AS11,GSV18} such lower bounds translate into lower bounds on local list-decoders, by ``trading'' the random coins of a local list-decoder for ``nonuniform advice'' for the reduction, and proving a lower bound on the number of queries made by the reduction.} We stress that because of these two limitations, the lower bounds of \cite{GSV18} do not apply to the Hadamard code and other low rate codes.

\paragraph{Extensions to large alphabet and erasures.}
The scenario that we consider in Theorem \ref{thm:main:code:large eps} has \emph{binary} alphabet, and decoding from \emph{errors}. We remark that in the case of large alphabets, or decoding from erasures, there are local list-decoders which achieve $q=O(\frac{\log(1/\delta)}{\eps})$ (which is smaller than what is possible for binary alphabet and decoding from errors), as was shown for the case of Hadamard codes and erasures in \cite{RRV18}, and for large alphabets in \cite{IJKW10}.
Our results can be extended to give a matching lower bound of $q=\Omega(\frac{\log(1/\delta)}{\eps})$ for decoding from erasures (for any alphabet size), and also the same lower bound on decoding from errors for any alphabet size.

\paragraph{Lower bounds for small $\eps$.}
The best bound on $q$ that Theorem \ref{thm:main:code:large eps} (as well as the aforementioned lower bounds of \cite{GSV18}) can give is $q \ge k^{\nu}$ for a universal constant $\nu >0$. The next theorem shows that even for small $\eps < 1/k$, we can obtain a lower bound on $q$ which is polynomial in $1/\eps$.

\begin{restatable}[Tight lower bounds for small $\eps$]{theorem}{smalleps}\label{thm:main:code:small eps}
There exist universal constants $\beta>0$ and $a,c>1$ such that for any $L \le \beta \cdot 2^{k}$,  $\eps \in ( \frac{a}{\sqrt{n}}, \frac 1 4)$, and $\delta <\frac{1}{3}$ and we have that
every $(\half-\eps,L,q,\delta)$-local list-decoder for $\Enc:\B^k \ar \B^n$ must have $q \ge \frac{1}{c \sqrt{\epsilon}\log k} -  \log L$.
\end{restatable}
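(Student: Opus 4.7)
\emph{Plan.} My approach would be to combine a standard amplification with a combinatorial lower bound tailored to the small-$\eps$ regime. For the amplification step, I would run the given $(\half-\eps, L, q, 1/3)$-LLD $O(\log k)$ times on each index with independent random coins and take the majority, paying a factor $O(\log k)$ in queries to reduce the per-index error to $1/(10k)$. A union bound over $i \in [k]$ then ensures that for every $m \in \List^{\Enc}_{\half-\eps}(w)$ there is some $j \in [L]$ such that, with probability at least $1/2$ over $r^{\shared}$, the amplified decoder (using $q' = O(q \log k)$ queries) outputs $m$ \emph{exactly} on every index $i \in [k]$. It therefore suffices to prove that any such ``near-exact'' LLD satisfies $q' \ge \Omega(1/\sqrt{\eps}) - \log L$.

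\emph{Core step.} For the near-exact decoder, observe that on any fixed $w$, each pair $(j, r^{\shared})$ can locally compute at most one message, so a direct averaging argument shows $|\List^{\Enc}_{\half-\eps}(w)| = O(L)$ for every $w$. Hence the bound on $q'$ cannot come from list-size cardinality alone; it must come from the decoder's inability to distinguish nearby codewords using few queries. The hypothesis $\eps > a/\sqrt n$ is precisely the Johnson threshold, which I would exploit to exhibit, for every code $\Enc$ with a large-list $w$, a family of codewords in $\List^{\Enc}_{\half-\eps}(w)$ whose corresponding ``plausible'' received words are at mutual relative Hamming distance $O(\eps)$---so that a $q'$-query decoder is forced to separate them using a small number of highly correlated observations.

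\emph{Main obstacle.} The delicate point is obtaining $\sqrt\eps$ scaling rather than $\eps$ (the naive total-variation bound between two $O(\eps)$-close words, which would be strictly stronger and would fail in the stated regime) or $\eps^2$ (a Chernoff-style information argument, which requires $\eps$ larger than the Johnson threshold). I expect the right argument to go via a hybrid/coupling along roughly $1/\sqrt\eps$ levels of interpolation between carefully chosen pairs of received words, each level costing the decoder at least one query to resolve---the $\sqrt\eps$ being characteristic of second-moment / birthday-paradox-type distinguishing bounds, where the decoder's $q'$ noisy samples carry $O(\eps)$ signal each and must discriminate among $\Theta(1/\sqrt\eps)$ hypotheses. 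Quantitatively exhibiting, for an \emph{arbitrary} code $\Enc$, the structural features needed for such a hybrid (using only the LLD hypothesis together with the Johnson-threshold assumption $\eps > a/\sqrt n$) is the technically hardest part; the $\log k$ factor appearing in the denominator of the final bound arises entirely from the amplification step.
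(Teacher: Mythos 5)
Your amplification step matches the paper exactly (run the decoder $O(\log k)$ times per index and take majority, reducing per-index error to $O(1/k)$, costing a $\log k$ factor in queries). But from there on the proposal diverges from anything that works, and you acknowledge this yourself: the ``core step'' sketched in your last paragraph is left entirely open, and the heuristics offered (Johnson-bound structure of the list, hybrids over $\Theta(1/\sqrt\eps)$ interpolation levels, birthday-paradox distinguishing bounds) do not point toward a proof. In particular, your reading of the hypothesis $\eps > a/\sqrt n$ as ``the Johnson threshold'' to be exploited structurally is a misattribution: in the paper this hypothesis appears only so that a Chernoff bound guarantees that a codeword corrupted by $\BSC^n_{\half-2\eps}$ lands within distance $\half-\eps$ of the original with probability close to $1$ (Proposition~\ref{prop:LLD to ARLLD}); it plays no role in producing the $\sqrt\eps$ exponent.

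The missing idea is the reduction to the \emph{coin problem for constant-depth circuits}. After amplification, the paper converts the decoder to a deterministic ``approximate RNSY local list-decoder'' (Proposition~\ref{prop:LLD to ARLLD}), and then (Lemma~\ref{lem:exists C}) shows that for some fixed message $m$, the procedure
\[
z \;\longmapsto\; \bigvee_{j\in[L]} \bigwedge_{i\in[k]} \bigl[\Dec^{\Enc(m)\oplus z}(i,j)=m_i\bigr]
\]
is a depth-$3$ circuit on $z$ of size $O(L\cdot k\cdot 2^{2q'})$ (each $q'$-query decision-tree evaluation is a $2$-layer formula, and the top levels are a single AND of fan-in $k$ feeding a single OR of fan-in $L$), and that this circuit distinguishes $z\from\BSC^n_{\half-2\eps}$ from $z\from\BSC^n_{\half}$: in the former case $m$ is a plausible decoding so some index $j$ succeeds, and in the latter $w=\Enc(m)\oplus z$ is a fresh uniform string carrying no information about $m$, so exact recovery of all $k$ bits happens with probability $\le L\cdot 2^{-k}$. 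The $\sqrt\eps$ scaling then falls out of the known $\mathrm{AC}^0$ lower bound for the coin problem (Theorem~\ref{thm:coin-problem} / Corollary~\ref{cor:coin-problem}): a depth-$d$ circuit distinguishing $\BSC_{\half-\eps}$ from $\BSC_\half$ has size $\exp\bigl(\Omega(d\cdot(1/\eps)^{1/(d-1)})\bigr)$, which at $d=3$ is $\exp(\Omega(1/\sqrt\eps))$. Comparing this with the explicit circuit size gives $q'=\Omega(1/\sqrt\eps)-O(\log(Lk))$ and hence, after dividing out the $\log k$ amplification factor, the stated bound. Without this translation from a local decoder into a low-depth circuit and the appeal to an external circuit lower bound, there is no route from your setup to the $\sqrt\eps$ exponent; the proposal as written does not supply one.
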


Note that for sufficiently small $\eps = 1/(\log k)^{\omega(1)}$, we get $q=\Omega(\frac{1}{\eps^{1/2-o(1)}})$. It follows that together, Theorems \ref{thm:main:code:large eps} and Theorem \ref{thm:main:code:small eps} give a lower bound of $q=\Omega(\frac{1}{\eps^{1/2-o(1)}})$ that applies to every choice of $\eps \ge \Omega(\frac{1}{\sqrt{n}})$. To the best of our knowledge, Theorem \ref{thm:main:code:small eps} is the first lower bound on local list-decoders that is able to prove a lower bound of $q \ge k$ (and note that this is what we should expect when $\eps < \frac{1}{k})$.
We also remark that the requirement that $\eps$ is not too small compared to $n$ (as is made in Theorem \ref{thm:main:code:small eps}) is required (as we cannot prove lower bounds on the number of queries in case $\eps<\frac{1}{n}$).

Goldreich and Levin \cite{GL89} showed that locally list-decodable codes with small $\eps < 1/k$ can be used to give constructions of hard-core predicates. We explain this connection in the next section.

\subsection{Hard-core predicates}

The celebrated Goldreich-Levin theorem \cite{GL89} considers the following scenario: There is a computational task where the required output is \emph{non-Boolean} and is hard to compute on average. We would like to obtain a \emph{hard-core predicate}, which is a \emph{Boolean} value that is hard to compute on average.

The Goldreich-Levin theorem gives a solution to this problem, and in retrospect, the theorem can also be viewed as a $(\half-\eps,L^{\Hadamard}=O(\frac{1}{\eps^2}),q^{\Hadamard}=O(\frac{k}{\eps^2}),\delta=\frac{1}{2k})$-local list-decoder for the Hadamard code, defined by: $\Enc^{\Hadamard}:\B^k \ar \B^{n=2^k}$, where for every $r \in \B^k$, \[ \Enc^{\Hadamard}(x)_r=\left(\sum_{i \in [k]}x_i \cdot r_i \right)\mod 2. \]
In retrospect, the Goldreich-Levin theorem can also be seen as showing that \emph{any} locally list-decodable code with suitable parameters can be used to produce hard-core predicates.

We consider two scenarios for the Goldreich-Levin theorem depending on whether we want to extract a hard-core bit from a function $g:\B^{\ell} \ar \B^{\ell}$ that is \emph{hard to compute} on a random input, or to extract a hard-core bit from a one-way function $f:\B^{\ell} \ar \B^{\ell}$ that is \emph{hard to invert} on a random output.

\subsubsection{Functions that are hard to compute}
\label{sec:intro:HC:compute}

Here the goal is to transform a \emph{non-Boolean function} $g$ that is hard to compute on a random input, into a \emph{predicate} $g^{\pred}$ that is hard to compute on a random input. More precisely:

\begin{itemize}
\item\textit{Assumption:} There is a non-Boolean function that is hard to compute with probability $\rho$.

    \smallskip
    Namely, a function $g:\B^{\ell} \ar \B^{\ell}$ such that for every circuit $C$ of size $s$, \[\Pr_{x \from U_{\ell}}[C(x)=g(x)] \le \rho.\footnote{We use $U_{\ell}$ to denote the uniform distribution on $\ell$ bits.} \]
\item \textit{Conclusion:} There is a predicate $g^{\pred}:\B^{\ell'} \ar \B$ that is hard to compute with probability $\half+\eps$.
    Namely, for every circuit $C'$ of size $s'$, \[\Pr_{x \from U_{\ell'}}[C'(x)=g^{\pred}(x)] \le \half+\eps. \]
\item \textit{Requirements:} The goal is to show that for every $g$, there exists a function $g^{\pred}$ with as small an $\eps$ as possible, without significant losses in the other parameters (meaning that $s'$ is not much smaller than $s$, and $\ell'$ is not much larger than $\ell$).
\end{itemize}

The Goldreich-Levin theorem for this setting can be expressed as follows.

\begin{theorem}[Goldreich-Levin for functions that are hard to compute~\cite{GL89}]
\label{thm:GL compute}
For a function $g:\B^{\ell} \ar \B^{\ell}$, define $g^{\pred}:\B^{\ell'=2\ell} \ar \B$ by $g^{\pred}(x,r)=\Enc^{\Hadamard}(g(x))_r$, and  $\rho = \frac{\eps}{2 \cdot L^{\Hadamard}}=\poly(\eps)$. If
for every circuit $C$ of size $s$,
\[ \Pr_{x \from U_\ell}[C(x)=g(x)] \le \rho, \]
then for every circuit $C'$ of size $s'=\frac{s}{q^{\Hadamard} \cdot \poly(\ell)} = s \cdot \poly(\frac{\eps}{\ell})$,
\[ \Pr_{x \from U_{2\ell}}[C'(x)=g^{\pred}(x)] \le \half+\eps. \]
\end{theorem}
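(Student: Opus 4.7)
The plan is to prove the theorem by contrapositive. Suppose there is a circuit $C'$ of size $s'$ with $\Pr_{(x,r) \from U_{2\ell}}[C'(x,r) = g^{\pred}(x,r)] > \half + \eps$; I will use the Hadamard LLD to build a circuit $C$ of size $s$ with $\Pr_x[C(x) = g(x)] > \rho$, contradicting the hardness of $g$.

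First, I would pass from average-case agreement over $(x,r)$ to pointwise agreement in $x$. Setting $p(x) = \Pr_r[C'(x,r) = \Enc^{\Hadamard}(g(x))_r]$, a standard Markov-style averaging on the $[0,1]$-valued random variable $p(x)$ shows that for at least an $\eps$-fraction of inputs (the ``good'' $x$'s) one has $p(x) \ge \half + \eps/2$. For such an $x$, writing $w_x \in \B^{2^\ell}$ for the string defined by $w_x(r) := C'(x,r)$, we have $\mathsf{dist}(w_x, \Enc^{\Hadamard}(g(x))) \le \half - \eps/2$, so $g(x) \in \List^{\Enc^{\Hadamard}}_{\half-\eps/2}(w_x)$.

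Next, I would apply the Goldreich--Levin Hadamard local list-decoder $\Dec$ with parameters $(\half-\eps/2, L^{\Hadamard}, q^{\Hadamard}, \delta)$ to the oracle $w_x$, choosing the per-coordinate error small enough, say $\delta = 1/(4\ell)$. By Definition~\ref{dfn:LLD}, for every good $x$, with probability at least $2/3$ over $r^{\shared}$ there exists $j^\star \in [L^{\Hadamard}]$ such that $P_{w_x, j^\star, r^{\shared}}$ locally computes $g(x)$ with per-coordinate error $\delta$; a union bound over the $\ell$ coordinates then recovers the entire string $g(x)$ simultaneously with probability at least $1 - \ell\delta \ge 3/4$. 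The circuit $C$ on input $x$ samples $r^{\shared}$, a uniformly random $j \in [L^{\Hadamard}]$, and fresh coins $r_1, \ldots, r_\ell$, and outputs $\bigl(\Dec^{w_x}(i, j, r^{\shared}, r_i)\bigr)_{i \in [\ell]}$, implementing every oracle call by one evaluation of $C'$. Since the LLD makes at most $q^{\Hadamard}$ queries per coordinate, $C$ has size $O(\ell \cdot q^{\Hadamard} \cdot s') + \poly(\ell)$, which fits within $s$ by the stated budget $s' = s/(q^{\Hadamard} \cdot \poly(\ell))$.

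Multiplying the four independent success factors---the $\eps$-fraction of good $x$, the $2/3$ probability over $r^{\shared}$ that a correct $j^\star$ exists, the $1/L^{\Hadamard}$ chance that our uniform guess $j$ equals $j^\star$, and the $\ge 3/4$ union-bound probability of correctly decoding all coordinates---yields $\Pr[C(x) = g(x)] \ge \eps/(2L^{\Hadamard}) = \rho$. Fixing the internal coins of $C$ to their best setting (by nonuniformity) turns $C$ into a deterministic circuit of size $s$ with the same success probability, contradicting the hardness of $g$. This is the classical Goldreich--Levin argument viewed through the LLD abstraction, and no step is a genuine obstacle; the one place deserving care is the constant bookkeeping in the final inequality, where the averaging loss, the $2/3$ guarantee in the LLD definition, and the choice of $\delta$ must be chosen so that their product matches the claimed $\rho = \eps/(2L^{\Hadamard})$ without eating into the $\poly(\eps/\ell)$ factor in the circuit-size budget.
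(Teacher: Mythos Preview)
The paper does not actually prove this theorem: it is stated as a known result, attributed to \cite{GL89}, and used only as background motivation for the paper's lower bounds. So there is no ``paper's own proof'' to compare against.

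That said, your proposal is the standard Goldreich--Levin argument, phrased through the local list-decoding abstraction exactly as the paper sets it up, and it is correct. The averaging to isolate an $\eps$-fraction of good $x$, the invocation of the Hadamard LLD on the word $w_x(r)=C'(x,r)$, the random guess of the list index $j$, the union bound over $\ell$ coordinates with $\delta=1/(4\ell)$, and the final hardwiring of the best coins are all exactly right, and the product $\eps\cdot\frac{2}{3}\cdot\frac{1}{L^{\Hadamard}}\cdot\frac{3}{4}=\frac{\eps}{2L^{\Hadamard}}$ matches the stated $\rho$. Your own caveat is apt: because the averaging step drops from $\eps$ to $\eps/2$, the LLD is really being invoked at radius $\half-\eps/2$, so $L^{\Hadamard}$ and $q^{\Hadamard}$ are instantiated at $\eps/2$ rather than $\eps$; since the theorem statement only claims $\rho=\poly(\eps)$ and $s'=s\cdot\poly(\eps/\ell)$, this constant-factor slack is absorbed.
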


The Hadamard code can be replaced by any locally list-decodable code with list size $L$ for decoding from radius $\half-\eps$, with $q$ queries for $\delta = 1/(2k)$. For such a code (assuming also that the local list-decoder can be computed efficiently) one gets the same behavior. Specifically, if the initial function is sufficiently hard and $\rho=\frac{\eps}{2L}$, then the Boolean target function is hard to compute, up to $\half+\eps$ for circuits of size roughly $s'=s/q$.

\paragraph{Is it possible to improve the Goldreich-Levin theorem for $\rho \ll 1/s$?}
Suppose that we are given a function $g:\B^{\ell} \ar \B^{\ell}$ that is hard to compute for circuits of size $s=\poly(\ell)$, with success, say, $\rho=1/2^{\sqrt{\ell}}$. When applying Theorem \ref{thm:GL compute}, we gain nothing compared to the case that $\rho=1/\poly(\ell)$. In both cases, we can obtain $\eps=1/\poly(\ell)$, but not smaller! (Since otherwise $s'= s \cdot \poly(\epsilon/\ell)$ is smaller than $1$ and the result is meaningless).

This is disappointing, as we may have expected to obtain $\eps \approx \rho = 1/2^{\sqrt{\ell}}$, or at least, to gain over the much weaker assumption that $\rho=1/\poly(\ell)$. This leads to the following open problem:

\begin{oprob}[Improve Goldreich-Levin for functions that are hard to compute]
\label{oprob:compute}
Suppose we are given a function $g:\B^{\ell} \ar \B^{\ell}$ such that circuits $C$ of size $s=\poly(\ell)$ cannot compute $g$ with success $\rho = 1/2^{\sqrt{\ell}}$. Is it possible to convert $g$ into a predicate with hardness $\half+\eps$ for $\eps=1/\ell^{\omega(1)}$?
\end{oprob}

This is not possible to achieve using the Hadamard code, because the number of queries is $q \ge 1/\eps$, and Theorem \ref{thm:GL compute} requires $s \ge s' \cdot q \ge q \ge 1/\eps$, which dictates that $\eps \ge 1/s$.

Note that when $\rho$ is small, we can afford list-decodable codes with huge list sizes of $L \approx 1/\rho$. Motivated by this observation, we can ask the following question:

\begin{quote}
Is it possible to solve  Open Problem \ref{oprob:compute} by substituting the Hadamard code with a better code? Specifically, is it possible for local list-decoders to have $q = \frac{1}{\eps^{o(1)}}$ if allowed to use \emph{huge} lists of size say $2^{\sqrt{k}}$, approaching the trivial bound $2^k$? (Note that in the Hadamard code, the list size used is $\poly(1/\eps)=\poly(k)$ which is exponentially smaller).
\end{quote}

Theorem \ref{thm:main:code:small eps} shows
that it is impossible to solve Open Problem  \ref{oprob:compute} by replacing the Hadamard code with a different locally list-decodable code.

The natural next question is whether we can use other techniques (not necessarily local list-decoding) to achieve the goal stated above. In this paper, we show that this cannot be done by \emph{black-box techniques}:

\begin{informal theorem}[Black-box impossibility result for functions that are hard to compute]
\label{ithm:compute}
If $\rho \ge 2^{-\ell/5}$ and $\eps=\frac{1}{s^{\omega(1)}}$,
then there does not exist a map
that converts a function $g$ into a function $g^{\pred}$ together with a black-box reduction showing that $g^{\pred}$ is a hard-core predicate for $g$.
\end{informal theorem}

The parameters achieved in Theorem \ref{ithm:compute} rule out black-box proofs in which $\eps=\frac{1}{s^{\omega(1)}}$, not only for $s=\poly(\ell)$ and $\rho=2^{-\sqrt{\ell}}$ (as in Open Problem \ref{oprob:compute}) but also for $\rho=2^{-\Omega(\ell)}$, and allowing  $s$ as large as  $2^{\ell}$.

The precise statement of Theorem \ref{ithm:compute} is stated in Theorem \ref{thm:hard-core compute}, and the precise model is explained in Section \ref{sec:HC compute}.
To the best of our knowledge, this is the first result of this kind, that shows black-box impossibility results for Open Problem \ref{oprob:compute}. Moreover, we believe that the model that we introduce in Section \ref{sec:HC compute} is very general and captures all known black-box techniques. In particular, our model  allows the reduction to introduce nonuniformity when converting an adversary $C'$ that breaks $g^{\pred}$ into an adversary $C$ that breaks $g$. See discussion in Remark \ref{rem:model}.

\subsubsection{Functions that are hard to invert}
\label{sec:intro:HC:invert}

Here the goal is to transform a one-way function $f$ into a new one-way function $f^{\newowf}$ and a \emph{predicate} $f^{\pred}$ such that it is hard to compute $f^{\pred}(x)$ given $f^{\newowf}(x)$. More precisely:
\begin{itemize}
\item \textit{Assumption:} There is a one-way function that is hard to invert with probability $\rho$.

    \smallskip
    Namely, a function $f:\B^{\ell} \ar \B^{\ell}$ such that for every circuit $C$ of size $s$, \[ \Pr_{x \from U_{\ell}}[C(f(x)) \in f^{-1}(f(x))] \le \rho. \]
\item \textit{Conclusion:} There is a one-way function $f^{\newowf}:\B^{\ell'} \ar \B^{\ell'}$, and a predicate  $f^{\pred}:\B^{\ell'} \ar \B$, such that it is hard to predict $f^{\pred}(x)$ with advantage $\half+\eps$, when given access to $f^{\newowf}(x)$.
    Namely, for every circuit $C'$ of size $s'$, \[ \Pr_{x \from U_{\ell'}}[C'(f^{\newowf}(x)) = f^{\pred}(x)] \le \half+\eps. \]
\item \textit{Requirements:}  The goal is to show that for every $f$, there exist functions $f^{\newowf},f^{\pred}$ with as small an $\eps$ as possible, without significant losses in the other parameters (meaning that $s'$ is not much smaller than $s$, and $\ell'$ is not much larger than $\ell$).
\end{itemize}

The Goldreich-Levin theorem for this setting can be expressed as follows.

\begin{theorem}[Goldreich-Levin for functions that are hard to invert \cite{GL89}]
\label{thm:GL invert}
For a function $f:\B^{\ell} \ar \B^{\ell}$, define $f^{\newowf}:\B^{2\ell} \ar \B^{2\ell}$ by $f^{\newowf}(x,r)=(f(x),r)$, $f^{\pred}:\B^{2\ell} \ar \B$ by $f^{\pred}(x,r)=\Enc^{\Hadamard}(x)_r$, and $\rho = \frac{\eps}{2 \cdot L^{\Hadamard}}=\poly(\eps)$. If
for every circuit $C$ of size $s$,
\[ \Pr_{x \from U_\ell}[C(f(x)) \in f^{-1}(f(x))] \le \rho, \]
then for every circuit $C'$ of size $s'=\frac{s}{q^{\Hadamard} \cdot \poly(\ell)} = s \cdot \poly(\frac{\eps}{\ell})$,
\[ \Pr_{x \from U_{2\ell}}[C'(f^{\newowf}(x))=f^{\pred}(x)] \le \half+\eps. \]
\end{theorem}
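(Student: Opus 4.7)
The plan is to prove the theorem by a contrapositive reduction: assume a size-$s'$ circuit $C'$ with $\Pr_{x,r}[C'(f(x),r) = \Enc^{\Hadamard}(x)_r] \ge \half + \eps$, and build a size-$s$ circuit $C$ that inverts $f$ with probability exceeding $\rho = \eps/(2 L^{\Hadamard})$, contradicting one-wayness. The key reformulation is that $C'$, regarded as a function of its second argument, acts as a noisy Hadamard-encoded oracle for $x$: for $y = f(x)$, the string $w_y \in \B^{2^\ell}$ defined by $w_y(r) = C'(y, r)$ is close to $\Enc^{\Hadamard}(x)$. By averaging, the set $X = \set{x : \Pr_r[C'(f(x), r) = \Enc^{\Hadamard}(x)_r] \ge \half + \eps/2}$ has density at least $\eps/2$, and every $x \in X$ satisfies $\mathsf{dist}(w_{f(x)}, \Enc^{\Hadamard}(x)) \le \half - \eps/2$, placing $x \in \List^{\Enc^{\Hadamard}}_{\half - \eps/2}(w_{f(x)})$.

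The inverter $C$ on input $y$ feeds $w_y$ into the Hadamard $(\half - \eps/2, L^{\Hadamard}, q^{\Hadamard}, 1/(2\ell))$-local list-decoder, implementing each oracle query at $r$ by a single evaluation of $C'(y, r)$. Specifically, $C$ draws one shared coin string $r^{\shared}$, and for every $j \in [L^{\Hadamard}]$ decodes all indices $i \in [\ell]$ (with fresh per-bit coins $r$) to assemble a candidate $\tilde{x}_j \in \B^\ell$, then tests whether $f(\tilde{x}_j) = y$ and outputs $\tilde{x}_j$ on the first match. For $x \in X$, the LLD guarantee gives that with probability $\ge 2/3$ over $r^{\shared}$ there exists $j^\star$ whose procedure locally computes $x$ with per-bit error at most $1/(2\ell)$; a union bound over the $\ell$ bits then yields $\tilde{x}_{j^\star} = x$ with probability $\ge 1/2$, and $C$ succeeds by the verification step. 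Overall, $\Pr_x[C(f(x)) \in f^{-1}(f(x))] \ge (\eps/2)(2/3)(1/2) = \eps/6 > \rho$, and the size is $s = L^{\Hadamard} \cdot \ell \cdot q^{\Hadamard} \cdot s' + \poly(\ell/\eps) = s' \cdot \poly(\ell/\eps)$, matching $s' = s \cdot \poly(\eps/\ell)$.

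The only calibration that needs care is the per-bit decoding error $\delta = 1/(2\ell)$: it must be small enough that the $\ell$-fold union bound over bit-decodings delivers the full message with constant probability, and it is precisely what drives the $O(\log \ell / \eps^2)$ per-bit query count in the underlying Goldreich--Levin decoder. Beyond this, the argument is direct; because each of the $L^{\Hadamard}$ candidates can be verified by one evaluation of $f$, the list size enters only the final circuit size and not the inversion probability, so there is no real obstacle to completing the proof.
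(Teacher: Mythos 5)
The paper states Theorem~\ref{thm:GL invert} as a citation of \cite{GL89} and does not supply a proof; after Theorem~\ref{thm:GL compute} it only sketches the viewpoint that any locally list-decodable code with suitable parameters yields a hard-core predicate. Your proof is the standard Goldreich--Levin argument through that lens, and it is correct: averaging gives the set $X$ of density $\ge \eps/2$ for which $w_{f(x)}$ lies within radius $\half - \eps/2$ of $\Enc^{\Hadamard}(x)$; the Hadamard $(\half-\eps/2, L^{\Hadamard}, q^{\Hadamard}, 1/(2\ell))$-LLD then produces, for such $x$, a list containing $x$ with probability $\ge \tfrac{2}{3} \cdot \tfrac{1}{2} = \tfrac{1}{3}$ over the decoder's coins; verification via one evaluation of $f$ per candidate then extracts a correct preimage, and averaging over the coins fixes a deterministic circuit.

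Two small remarks, neither a gap. First, the verification step requires that $f$ itself be computable by a circuit of size $\poly(\ell)$ --- the standard assumption for one-way functions, implicitly carried in the theorem's $\poly(\ell)$ slack, but worth stating since the theorem as written allows arbitrary $f$. Second, your argument actually proves a \emph{stronger} inversion probability $\Omega(\eps)$ than the stated $\rho = \eps/(2L^{\Hadamard})$. The loss of the $1/L$ factor in the theorem's $\rho$ is what one incurs when the candidate list cannot be verified and one must output a random list element; that is the situation in the companion Theorem~\ref{thm:GL compute} (hard-to-compute functions), and the paper simply uses the same conservative $\rho$ in both statements. In the hard-to-invert setting, the ability to run $f$ forward and check each candidate is exactly what lets you recover the full $\Omega(\eps)$, as your proof correctly exploits.
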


\begin{remark}
The problem of obtaining a hard-core predicate for one-way functions is interesting only if an unbounded adversary $\phi:\B^{\ell'} \ar \B$ can predict $f^{\pred}(x)$ when given $f^{\newowf}(x)$ as input. If this is not required, then one can take $\ell'=\ell+1$, $f^{\pred}(x)=x_1$, and $f^{\newowf}(x_1,\ldots,x_{n+1})=f(x_2,\ldots,x_{n+1})$. However, this is trivial, and is not useful in applications. Therefore, when considering this problem, we will assume that there exists such a $\phi:\B^{\ell'} \ar \B$.

A natural example is the case where the original one-way function $f$ and the constructed function $f^{\newowf}$ are one-way permutations.
In fact, in the case that $f,f^{\newowf}(x)$ are permutations, the setup of ``functions that are hard to invert'' can be seen as a special case of the setup of ``functions that are hard to compute'' by taking $g=f^{-1}$, and $g^{\pred}(y)=f^{\pred}((f^{\newowf})^{-1}(y))$.

We point out that, in this setting, the circuit $C'$ that is trying to invert $f$ (that is, to compute $g$) has an advantage over its counterpart in the setup of ``functions that are hard to compute'': It can use the efficient algorithm that computes the ``forward direction'' of $f$, when trying to invert $f$. In terms of $g$, this means that the circuit $C'$ can compute $g^{-1}$ for free. This distinction is explained in Section \ref{sec:HC invert}.
\end{remark}

\paragraph{Is it possible to improve the Goldreich-Levin theorem for $\rho \ll 1/s$?}
The same problem that we saw with functions that are hard to compute, also shows up in the setup of functions that are hard to invert.
Suppose that we are given a function $f:\B^{\ell} \ar \B^{\ell}$ that is hard to invert for circuits of size $s=\poly(\ell)$ with success, say, $\rho=1/2^{\sqrt{\ell}}$. When applying Theorem \ref{thm:GL invert}, we gain nothing compared to the case that $\rho=1/\poly(\ell)$. In both cases, we can obtain $\eps=1/\poly(\ell)$, but not smaller! This is expressed in the next open problem:

\begin{oprob}[Improve Goldreich-Levin for functions that are hard to invert]
\label{oprob:invert}
Suppose we are given a one-way function $f:\B^{\ell} \ar \B^{\ell}$ such that circuits $C$ of size $s=\poly(\ell)$ cannot invert $f$ with success $\rho = 1/2^{\sqrt{\ell}}$. Is it possible to obtain a hard-core predicate $f^{\pred}$ with hardness $\half+\eps$ for $\eps=1/\ell^{\omega(1)}$ for some choice of one-way function $f^{\newowf}$?
\end{oprob}

In this paper, we show that this cannot be done by \emph{black-box techniques}. The formulation of Theorem \ref{ithm:invert} below, is very similar to that of Theorem \ref{ithm:compute} with some small modification in the parameters.

\begin{informal theorem}[Black-box impossibility result for functions that are hard to invert]
\label{ithm:invert}
If $\rho \ge 2^{-\ell/5}$, $s=2^{o(\ell)}$, and $\eps=\frac{1}{s^{\omega(1)}}$, then there does not exist
a map that converts a function $f$ into functions $f^{\newowf},f^{\pred}$ together with a black-box reduction showing that $f^{\pred}$ is a hard-core predicate for $f^{\newowf}$.
\end{informal theorem}

The precise statement of Theorem \ref{ithm:invert} is stated in Theorem \ref{thm:hard-core invert}, and the precise model is explained in Section \ref{sec:HC invert}.
To the best of our knowledge, this is the first result of this kind, that shows black-box impossibility results for Open Problem \ref{oprob:invert}. Moreover,
we believe that the model that we introduce in Section \ref{sec:HC invert} is very general, and captures all known black-box techniques. In particular, our model allows the reduction to compute the easy direction of the function $f$, and to introduce nonuniformity when converting an adversary $C'$ that breaks $f^{\pred}$ into an adversary $C$ that breaks $f$.

\begin{remark}[The model we use for black-box proofs]
\label{rem:model}
Many different models of ``black-box techniques'' for cryptographic primitives were studied in the literature and the reader is referred to \cite{RTV04} for a discussion and a taxonomy.
The model for ``black-box techniques''  that we use is described in detail in Section \ref{sec:Hard-core}. The notion that we use is more general than the notion of ``fully-BB reductions'' discussed in \cite{RTV04}, and incomparable to the other notions discussed in \cite{RTV04}, specifically:
\begin{itemize}
\item We require that there is a ``transformation map'' which given any function $f$ produces functions $f^{\newowf}$ and $f^{\pred}$, however, unlike the notions studied in \cite{RTV04}, we do not make the requirement that this transformation map can be efficiently computed.
\item We require that there is a reduction $\Red$ such that for any $f$, and for every adversary $C'$ (not necessarily efficient) breaking the security of $f^{\pred}$, $\Red^{f,C'}$ can be used to invert $f$. So far, this is similar in spirit to the notion of ``fully-BB reduction'' defined in \cite{RTV04}. However, we give reductions more power than is given in \cite{RTV04} in the case of fully-BB reductions. Specifically, we also allow $\Red$ to introduce nonuniformity (that could depend on $C'$ and $f$). More formally, for every adversary $C'$ that breaks the security of $f^{\pred}$, we require that there exists a short nonuniform advice string $\alpha$ such that $\Red^{C',f}(\cdot,\alpha)$ inverts $f$.
\end{itemize}
\end{remark}

\subsection{More related work}

\paragraph{Lower bounds on the number of queries of local decoders for \emph{uniquely} decodable codes.}
In this paper, we prove lower bounds on the number of queries of local list-decoders.
There is a long line of work that is concerned with proving lower bounds on the number of queries of uniquely decodable codes. As we have explained in Section \ref{sec:intro:code:results}, the parameter regime considered in the setting of uniquely decodable codes is very different than the parameter regime we consider here \cite{Yekhanin12}.

\paragraph{Lower bounds on nonuniform black-box reductions for hardness amplification.}
A problem that is closely related to proving lower bounds on the number of queries of local list-decoders is the problem of proving lower bounds on the number of queries of nonuniform black-box reductions for hardness amplification. We have already discussed this line of work \cite{ViolaThesis,SV08,AS11,GSV18,S20} in Section \ref{sec:intro:code:results}.

Lower bounds on such reductions can be translated to lower bounds on local list-decoders (as long as the number of coins tossed by the local list-decoders is small). We remark that for the purpose of hardness amplification, it does not make sense to take codes with small rate (namely, codes with $n = 2^{k^{\Omega(1)}}$). The focus of Theorem \ref{thm:main:code:large eps} is to handle such codes.
Additionally, when using codes for hardness amplification, it does not make sense to take $\eps<1/k$ (or even $\eps<1/\sqrt{k}$). In contrast, the parameter regime considered in Theorem \ref{thm:main:code:small eps} focuses on small $\eps$.

Motivated by hardness amplification, there is also a related line of work studying limitations on the complexity of local list decoders (and specifically, whether these decoders need to compute the majority function) \cite{ViolaThesis,SV08,GR08,AS11,GSV18,S20}.
Another approach to prove limitations on hardness amplification is to show that assuming certain cryptographic assumptions, hardness amplification that is significantly better than what is currently known is impossible, see e.g., \cite{DJMW12} for a discussion.

\paragraph{Other improvements of the Goldreich-Levin theorem.}
In this paper, we are interested in whether the Goldreich-Levin theorem can be improved. Specifically, we are interested in improvements where, when the original function has hardness $\rho=2^{-\Omega(\ell)}$ for polynomial size circuits, then the hard-core predicate has hardness $\half+\eps$ for $\eps=\ell^{-\omega(1)}$.
We remark that there are other aspects of the Goldreich-Levin theorem that one may want to improve.
\begin{itemize}
\item When given an initial non-Boolean function on $\ell$ bits, the Goldreich-Levin theorem produces a hard-core predicate on $\ell'=2\ell$ bits. It is possible to make $\ell'$ smaller (specifically, $\ell'=\ell+O(\log(1/\eps))$ by using other locally list-decodable codes instead of Hadamard. Our limitations apply to \emph{any} construction (even one that is not based on codes) and in particular also for such improvements.
\item It is sometimes desirable to produce many hard-core bits (instead of the single hard-core bit that is obtained by a hard-core predicate). This can be achieved by using ``extractor codes'' with a suitable local list-decoding algorithm. The reader is referred to \cite{TaShmaZuckerman} for more details. Once again, our limitations obviously apply also for the case of producing many hard-core bits.
\end{itemize}

\subsection*{Open problems and subsequent work}

We end this section with a couple of interesting open problems for future research.

\begin{enumerate}
\item Unlike Theorem \ref{thm:main:code:large eps} (that handles large $\eps$), Theorem \ref{thm:main:code:small eps} (that handles small $\eps$) does not achieve a bound of $q=\Omega(\frac{\log(1/\delta)}{\eps^2})$, and only achieves a bound of $\Omega(\frac{1}{\sqrt{\eps}})$. A natural open problem is to improve the bound on $q$ for small $\eps$ to match the bound for large $\eps$.

    A step in this direction was made in the subsequent work of Shaltiel and Viola \cite{SV22}. They show that if $L \le 2^{\frac{k}{20}}$, $\delta < \frac{1}{3}$ and $n=\Omega(\frac{1}{\eps^2})$ (which are slightly stronger conditions than the one we require in Theorem \ref{thm:main:code:small eps}) then the bound on the number of queries can be improved from $q=\Omega(\frac{1}{\sqrt{\eps} \cdot \log k})-O(\log L)$ to $q=\Omega(\frac{1}{\eps \cdot \log k})$.

    The improvement of \cite{SV22} builds on the methodology used in this paper (namely, converting a local list decoder into a depth 3 circuit, and then employing a bound on depth 3 circuits to obtain a lower bound on the number of queries). However, rather than reducing to a depth 3 circuit for the coin problem (as we do in this paper) Shaltiel and Viola \cite{SV22} note that the reduction used in this paper gives a depth 3 circuit for a more specific problem. They then show a quantitatively stronger lower bound for this specific problem, and this results in the improved bound on the number of queries.

\item In the case of large $\eps$, Theorem \ref{thm:main:code:large eps} can be extended to handle local list-decoding from \emph{erasures}, and gives a lower bound of $q=\Omega(\frac{\log(1/\delta)}{\eps})$ on the number of queries of local list-decoders that decode from a $1-\eps$ fraction of erasures. We do not see how to extend the proof of Theorem \ref{thm:main:code:small eps} to erasures.

\item The model of black-box proofs that we introduce in Section \ref{sec:Hard-core} is quite general, and to the best of our knowledge, covers all known proofs in the literature on hard-core predicates for general one-way functions. Is it possible to circumvent the black-box limitations and answer open problems \ref{oprob:compute} and \ref{oprob:invert} for \emph{specific candidates} for one-way functions?
More generally, is it possible to come up with non-black-box techniques that circumvent the limitations?
\end{enumerate}

\subsection*{Organization of the paper}

We give a high level overview of our techniques in Section \ref{sec:technique}.
Our results on local list-decoders (and the proofs of Theorem \ref{thm:main:code:large eps} and Theorem \ref{thm:main:code:small eps}) are presented in Section \ref{sec:LLDC}. Our results on hard-core predicates appear in Section \ref{sec:Hard-core} (which includes a precise description of the model and formal restatements of Theorem \ref{ithm:compute} and
Theorem \ref{ithm:invert}).

\section{Techniques}
\label{sec:technique}

In this section we give a high level overview of our techniques.
Our approach builds on earlier work for proving lower bounds on the number of queries of reductions for hardness amplification \cite{ViolaThesis,SV08,GSV18}. In this section, we give a high level overview of the arguments used to prove our main theorems.

\subsection{Local list-decoders on random noisy codewords}

Following \cite{ViolaThesis,SV08,GSV18}, we will consider a scenario which we refer to as ``random noisy codewords'' in which a uniformly chosen message $m$ is encoded, and the encoding is corrupted by a binary symmetric channel.

\begin{definition}[Binary symmetric channels]
Let $\BSC^n_p$ be the experiment in which a string $Z \in \B^n$ is sampled, where $Z=Z_1,\ldots,Z_n$ is composed of i.i.d. bits, such that for every $i \in [n]$, $\Pr[Z_i=1]=p$.
\end{definition}

\begin{definition}[Random noisy codewords]
\label{dfn:RNSY}
Given a function $\Enc:\B^k \ar \B^n$ and $p>0$ we consider the following experiment (which we denote by $\RNSY^{\Enc}_p$):
\begin{itemize}
\item A message $m \from \B^k$ is chosen uniformly.
\item A noise string $z \from \BSC^n_p$ is chosen from a binary symmetric channel.
\item We define $w=\Enc(m) \oplus z$.
\end{itemize}
We use $(m,z,w) \from \RNSY^{\Enc}_p$ to denote $m,z,w$ which are sampled by this experiment. We omit $\Enc$ if it is clear from the context.
\end{definition}

Our goal is to prove lower bounds on the number of queries $q$ of a $(\half-\eps,L,q,\delta)$-local list-decoder $\Dec$ for a code $\Enc:\B^k \ar \B^n$. For this purpose, we will consider the experiment $\RNSY_p$ for the values $p=\half-2\eps$ and $p=\half$.

For $p=\half-2\eps$, and $(m,z,w) \from \RNSY_{\half-\eps}$, by a Chernoff bound, the relative Hamming weight of $z$ is, with very high probability, less than $\half-\eps$. This implies that $\mathsf{dist}(w,\Enc(m)) \le \half-\eps$, meaning that $m \in \List_{\half-\eps}^{\Enc}(w)$. It follows that there must exist $j \in [L]$ such that when given input $j$, and oracle access to $w$, the decoder $\Dec$ recovers the message $m$.

For $p=\half$, and $(m,z,w) \from \RNSY_{\half}$, the string $z$ is uniformly distributed and independent of $m$. This means that $w=\Enc(m) \oplus z$ is uniformly distributed and independent of $m$. Consequently, when $\Dec$ is given oracle access to $w$, there is no information in $w$ about the message $m$, and so, for every $j \in [L]$, the probability that $\Dec$ recovers $m$ when given input $j$ and oracle access to $w$ is exponentially small.

Loosely speaking, this means that $\Dec$ can be used to distinguish $\BSC^n_{\half-2\eps}$ from $\BSC^n_{\half}$. It is known that distinguishing these two distributions requires many queries. We state this informally below, and a formal statement appears in Lemma \ref{lem:q queries}.

\begin{informal theorem}
\label{ithm:q queries}
Any function $T:\B^q \ar \B$ that distinguishes $\BSC^q_{\half-2\eps}$ from $\BSC^q_{\half}$ with advantage\footnote{We say that a function $T:\B^q \ar \B$ distinguishes a pair of distributions $B_1$, $B_2$ over $\{0,1\}^q$ with \emph{advantage} $\delta$ if $\Pr_{x_1 \leftarrow B_1, x_2 \leftarrow B_2}\left[ T(x_1)=T(x_2) \right] \leq \frac 1 2 +\delta$.} $\delta$, must have $q=\Omega(\frac{\log(1/\delta)}{\eps^2})$.
\end{informal theorem}

Thus, in order to prove a tight lower bound of $q=\Omega(\frac{\log(1/\delta)}{\eps^2})$, it is sufficient to show how to convert a $(\half-\eps,L,q,\delta)$-local list-decoder $\Dec$, into a function $T$ that distinguishes $\BSC^q_{\half-2\eps}$ from $\BSC^q_{\half}$ with advantage $\delta$. Note that we can allow $T$ to be a ``randomized procedure'' that tosses coins, as by an averaging argument, such a randomized procedure can be turned into a deterministic procedure.

\subsection{Warmup: the case of unique decoding}
\label{sec:intro:technique:warmup}

Let us consider the case that $L=1$ (that is unique decoding). We stress that this case is uninteresting, as by the Plotkin bound, it is impossible for nontrivial codes to be uniquely decodable for $\eps<\frac{1}{4}$, and so, there are no local decoders for $L=1$ and $\epsilon < \frac 1 4$, regardless of the number of queries. Nevertheless, this case will serve as a warmup for the approach we use later.

Our goal is to convert $\Dec$ into a randomized procedure $T:\B^q \ar \B$ that distinguishes $\BSC^q_{\half-2\eps}$ from $\BSC^q_{\half}$.
The procedure $T$ will work as follows: On input $z \from \B^q$, we choose $m \from \B^k$, and $i \from [k]$. We then run $\Dec$ on input $i$, and when $\Dec$ makes its $t$'th query $\ell_t \in [n]$ to the oracle, we answer it by $\Enc(m)_{\ell_t} \oplus z_t$. That is, we answer as if $\Dec$ is run with input $i$ and oracle access to $w=\Enc(m) \oplus z$, for $z$ chosen from a binary symmetric channel. The final output of $T$ is whether $\Dec$ reproduced $m_i$. This procedure $T$ simulates $\Dec^{w}(i)$, and therefore distinguishes $\BSC^q_{\half-2\eps}$ from $\BSC^q_{\half}$, implying the desired lower bound.

Both Theorem \ref{thm:main:code:large eps} and Theorem \ref{thm:main:code:small eps} will follow by modifying the basic approach to handle $L>1$. In the remainder of this section, we give a high level overview of the methods that we use. The formal section of this paper does not build on this high level overview, and readers can skip this high level overview and go directly to the formal section if they wish to.

\subsection{Reducing to the coin problem for AC$^0$}

We start with explaining the approach of proving Theorem \ref{thm:main:code:small eps}.
Consider a randomized procedure $C$ that on input $z \in \B^n$, chooses $m \from \B^k$ and prepares $w=\Enc(m) \oplus z$. The procedure then computes $\Dec^w(i,j)$ for all choices of $i \in [k]$ and $j \in [L]$ and accepts if there exists a $j \in [L]$ such that $\Dec^w(\cdot, j)$ recovers $m$. By the same rationale as in Section \ref{sec:intro:technique:warmup}, $C$ distinguishes $\BSC^n_{\half-2\eps}$ from $\BSC^n_{\half}$. This does not seem helpful, because $C$ receives $n$ input bits, and we cannot use Theorem \ref{ithm:q queries} to get a lower bound on $q$.

Inspired by a lower bound on the size of nondeterministic reductions for hardness amplification due to Applebaum et al. \cite{ASSY}, we make the following observation:
The procedure $C$ can be seen as $k \cdot L$ computations (one for each choice of $i \in [k]$ and $j \in [L]$) such that:
\begin{itemize}
\item These $k \cdot L$ computations can be run \emph{in parallel}.
\item Once these computations are made, the final answer $C(z)$ is computed by a constant-depth circuit.
\item Each of the $k \cdot L$ computations makes $q$ queries into $z$, and therefore can be simulated by a size $O(q \cdot 2^q)$ circuit of depth 2.
\end{itemize}
Overall, this means that we can implement $C$ by a circuit of size $s=\poly(k,L,2^q)$ and constant depth. (In fact, a careful implementation gives depth 3).

This is useful because there are lower bounds showing that small constant-depth circuits cannot solve the ``coin problem''. Specifically, by the results of Cohen, Ganor and Raz \cite{CGR14} circuits of size $s$ and depth $d$ cannot distinguish $\BSC^n_{\half-2\eps}$ from $\BSC^n_{\half}$ with constant advantage, unless $s \ge \mbox{exp}(\Omega(\frac{1}{\eps^{d-1}}))$.\footnote{These results of \cite{CGR14} improve upon earlier work of Shaltiel and Viola \cite{SV08} that gave slightly worse bound. These results are tight as shown by Limaye et al. \cite{LSSTV19} (that also extended the lower bound to hold for circuits that are also allowed to use parity gates).} This gives the bound stated in Theorem \ref{thm:main:code:small eps}.

We find it surprising that an \emph{information theoretic} lower bound on the number of queries of local list-decoders is proven by considering concepts like constant-depth circuits from \emph{circuit complexity}.

\paragraph{Extending the argument to lower bounds on hard-core predicates.}
It turns out that this argument is quite versatile, and this is the approach that we use to prove Theorems \ref{ithm:compute} and \ref{ithm:invert}. Loosely speaking, in these theorems, we want to prove a lower bound on the number of queries made by a reduction that, when receiving oracle access to an adversary that breaks the hard-core predicate, is able to compute (or invert) the original function too well. Such lower bounds imply that such reductions do not produce small circuits when used in black-box proofs for hard-core predicates.

We will prove such lower bounds by showing that a reduction that makes $q$ queries can be used to construct a circuit of size $s \approx 2^q$ and constant depth that solves the coin problem. Interestingly, this argument crucially relies on the fact that constant-depth circuits \emph{can} distinguish $\BSC^n_{\eps}$ from $\BSC^n_{2\eps}$ with size $\poly(n/\eps)$ which follows from the classical results of Ajtai on constant depth circuits for approximate majority \cite{Ajtai}.\footnote{The proof of Theorem \ref{ithm:invert} uses an additional versatility of the argument (which we express in the terminology of codes): The argument works even if the individual procedures that are run in parallel are allowed to have some limited access to the message $m$, as long as this does not enable them to recover $m$. This property is used to handle reductions in a cryptographic setup, where reductions have access to the easy direction of a one-way function.}

\subsection{Conditioning on a good $j$}

A disadvantage of the approach based on the coin problem is that at best, it can give lower bounds of $q =\Omega(1/\sqrt{\eps})$, and cannot give tight lower bounds of the form $q=\Omega(\frac{\log(1/\delta)}{\eps^2})$. In order to achieve such a bound (as is the case in Theorem \ref{thm:main:code:large eps}) we will try to reduce to Theorem \ref{ithm:q queries} which \emph{does} give a tight bound (in case $\eps$ is not too small).

Our approach builds on the earlier work of Grinberg, Shaltiel and Viola \cite{GSV18} that we surveyed in Section \ref{sec:intro:code:results}.
When given a $(\half-\eps,L,q,\delta)$-local list-decoder $\Dec$, we say that an index $j \in [L]$ is \emph{decoding} for $m,w$, if  when $\Dec$ is given oracle access to $w$ and input $j$, then with probability at least $1-10\delta$ over $i \in [k]$, we have that $\Dec^w(i,j)$ recovers $m_i$.

We use a careful averaging argument to show that there exists an index $j' \in [L]$, and a fixed choice of the random coins of $\Dec$, such that in the experiment $(m,z,w) \from \RNSY_{\half-2\eps}$,
$j'$ is decoding for $m,w$ with probability at least $\Omega(1/L)$. We then consider the experiment $\RNSY'_{\half-2\eps}$ in which we choose $(m,z,w) \from \RNSY_{\half-2\eps}$ \emph{conditioned} on the event $\set{\mbox{$j'$ is decoding for $m,w$}}$.

We have made progress, because in the experiment $\RNSY'_{\half-2\eps}$ there exists a unique $j'$ that is decoding, and so, when we implement the strategy explained in Section \ref{sec:intro:technique:warmup} we only need to consider this \emph{single} $j'$, which intuitively means that our scenario is similar to the warmup scenario of unique decoding described in Section \ref{sec:intro:technique:warmup}.

The catch is that when choosing $(m,z,w) \from \RNSY'_{\half-2\eps}$, we no longer have that $z$ is distributed like $\BSC^n_{\half-2\eps}$ (as the distribution of $z$ may be skewed by conditioning on the event that $j'$ is decoding).

Shaltiel and Viola \cite{SV08} (and later work \cite{GSV18,S20}) developed tools to handle this scenario. Loosely speaking, using these tools, it is possible to show that a large number of messages $m$ are ``useful'' in the sense that there exists an event $A_m$ such that if we consider $(m,z,w)$ that are chosen from $\RNSY'_{\half-2\eps}$ \emph{conditioned} on $A_m$, then there exists a subset $B(m) \subseteq [n]$ of small size $b$, such that $z|_{B(m)}$ is fixed, and $z|_{[n] \setminus B(m)}$ is distributed like $\BSC^{n-b}_{\half-2\eps}$.

If the number of possible choices for sets $B(m)$ is small, then by the pigeon-hole principle, there exists a fixed choice $B$ that is good for a large number of useful messages $m$. This can be used to imitate the argument we used in the warmup, and prove a lower bound.\footnote{Loosely speaking, this is because for useful messages, in the conditioned experiment, $z$ is distributed like $\BSC_{\half-2\eps}$ (except that some bits of $z$ are fixed as a function of $m$). Furthermore, as there are many useful messages, the local list-decoder does not have enough information to correctly recover the message when given oracle access to $\Enc(m) \oplus \BSC^n_{\half}=\BSC^n_{\half}$.}

\paragraph{Extending the argument to the case of small rate.}
A difficulty, that prevented \cite{GSV18} from allowing length as large as $n=2^k$,
is that $B(m)$ is a subset of $[n]$, and so, even if $b=|B(m)|=1$, the number of possible choices for such sets is at least $n$. For the pigeon-hole principle argument above, we need that the number of messages (that is $2^k$) is much larger than the number of possible choices for $B(m)$ (which is at least $n$). This means that one can only handle $n$ which is sufficiently smaller than $2^k$, and this approach cannot apply to codes with small rate (such as the Hadamard code).

We show how to solve this problem, and prove lower bounds for small rate codes. From a high level, our approach can be explained as follows: We consider the distribution of $B(m)=\set{Y_1(m) < \ldots <Y_b(m)}$ for a uniformly chosen useful $m$.
We first show that if all the $Y_j$'s have large min-entropy, then it is possible to prove a lower bound on $q$ by reducing to Theorem \ref{ithm:q queries} (the details of this are explained in the actual proof).

If on the other hand, one of the $Y_j$'s has low min-entropy, then we will restrict our attention to a subset of useful messages on which $Y_j$ is fixed. Loosely speaking, this reduces $b$ by one, while not reducing the number of useful messages by too much (because the low min-entropy condition says that the amount of information that $Y_j$ carries on $m$ is small).
In this trench warfare, in every iteration, we lose a fraction of useful messages, for the sake of decreasing $b$ by one.
Thus, eventually, we either reach the situation that all the $Y_j$'s have large min-entropy, in which case we are done, or
we reach the situation where $B(m)$ is fixed for all messages which we can also handle by the above.

We can withstand the losses and eventually win if $\eps$ is sufficiently larger than $1/\sqrt{k}$.

\section{Preliminaries}
\label{sec:prelims}

\paragraph{Relative Hamming weight and distance: }
For a string $x \in \B^n$, we use $\weight(x)$ to denote the \emph{relative Hamming weight} of $x$, namely $\weight(x)=|\set{i: x_i=1}|/n$.

For two strings $x,y \in \B^n$, we use $\mathsf{dist}(x,y)$ to denote the \emph{relative Hamming distance} between $x$ and $y$, namely $\mathsf{dist}(x,y)=|\set{i \in [n]:x_i \ne y_i}|/n$.

\subsection{Random variables}

\paragraph{Notation for random variables: }
We use $U_n$ to denote the uniform distribution on $\B^n$. 
Given a distribution $D$, we use $x \from D$ to denote the experiment in which $x$ is chosen according to $D$. For a set $S$ we also use $x \from S$ to denote the experiment in which $x$ is chosen uniformly from $S$. When we write $x_1 \from D_1,x_2 \from D_2$ we mean that the two experiments $x_1 \from D_1$ and $x_2 \from D_2$ are independent. If $X$ is a random variable, and $D$ is a distribution, then expressions of the form $\Pr_{y \from D}[\cdot]$ where the event involves both $X$ and $y$, are in a probability space where the experiments producing $X$ and $y \from D$ are independent.

\paragraph{Min-entropy: } For a discrete random variable $X$ over $\B^n$, we define the $\emph{min-entropy}$ $\Hi(X)$ of $X$ by:
\[\Hi(X)=\min_{x \in \B^n} \frac{1}{\log \Pr[X=x]}. \]

We will also use the following lemma.

\begin{restatable}{lemma}{highentguess}\label{lem:high-ent-guess}
Suppose that $M$ is a distribution over $\{0,1\}^k$ that is uniform over a subset $S$ of size $2^r$ for $r \geq k-k^{0.99}$. If $k$ is sufficiently large, then for every function $D: [k] \to \{0,1\}$, we have that:
$$\Pr_{m \leftarrow M, i \from [k]}[D(i)=m_i] \le 0.5001.$$
\end{restatable}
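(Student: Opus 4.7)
The plan is to reduce the lemma to a volume estimate for Hamming balls. First I would define the single string $d \in \{0,1\}^k$ by $d_i = D(i)$, so that for every $m \in \{0,1\}^k$,
$$\Pr_{i \from [k]}[D(i)=m_i] = \frac{|\{i : d_i = m_i\}|}{k} = 1 - \mathsf{dist}(m,d).$$
Averaging over $m \from M$, the lemma becomes equivalent to the purely geometric statement $\Exp_{m \from M}[\mathsf{dist}(m,d)] \geq 0.4999$: no matter which fixed center $d$ the adversary chooses, a uniform sample from a sufficiently large subset $S \subseteq \{0,1\}^k$ should on average be nearly antipodal to $d$.

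The core step is then a Hamming-ball volume bound. I would set $\eta = k^{-0.004}$ (any parameter with $\eta \to 0$ and $\eta^2 k \gg k^{0.99}$ would suffice) and let $B = \{x \in \{0,1\}^k : \mathsf{dist}(x,d) < 1/2 - \eta\}$. The standard entropy bound gives $|B| \leq 2^{H_2(1/2 - \eta) \cdot k}$, and a Taylor expansion of binary entropy around $1/2$ yields $H_2(1/2 - \eta) \leq 1 - 2\eta^2/\ln 2$. Combined with the hypothesis $|S| = 2^r \geq 2^{k - k^{0.99}}$, this gives
$$\frac{|B|}{|S|} \leq 2^{(1 - 2\eta^2/\ln 2)k - r} \leq 2^{k^{0.99} - \Omega(k^{0.992})} = 2^{-\Omega(k^{0.992})}.$$
Hence for $k$ large enough, $\Pr_{m \from M}[\mathsf{dist}(m,d) < 1/2 - \eta] \leq |B|/|S|$ is at most, say, $10^{-5}$.

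To conclude, I would use $\mathsf{dist}(m,d) \geq 0$ always and $\mathsf{dist}(m,d) \geq 1/2 - \eta$ off the negligible event, giving
$$\Exp_{m \from M}[\mathsf{dist}(m,d)] \geq (1 - 10^{-5})(1/2 - k^{-0.004}) \geq 0.4999$$
for all sufficiently large $k$; subtracting from $1$ yields the claimed upper bound of $0.5001$. There is no real obstacle here: the argument is a one-shot volume calculation, and the only delicate point is that the exponent $0.99$ in the hypothesis on $r$ just barely allows a choice of $\eta$ for which the ball is exponentially smaller than $S$. Any exponent strictly below $1$ would still work, with $\eta$ adjusted accordingly.
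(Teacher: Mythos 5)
Your proof is correct, but it takes a genuinely different route from the paper's. The paper works coordinate-by-coordinate: it uses subadditivity of Shannon entropy to conclude that $\Exp_{i\from[k]}H(M_i)\ge 1-k^{-0.01}$, applies Markov's inequality to show that all but a small fraction of the marginals $M_i$ have entropy close to $1$, and then invokes Pinsker's lemma to convert high marginal entropy into closeness to $U_1$, so that predicting $m_i$ from $i$ alone has advantage $o(1)$ on a $1-o(1)$ fraction of coordinates. You instead treat the adversary's answers as a single fixed string $d\in\B^k$, rewrite the success probability as $1-\Exp_{m\from M}[\mathsf{dist}(m,d)]$, and run a one-shot Hamming-ball volume estimate: the ball of relative radius $1/2-\eta$ around $d$ has at most $2^{H_2(1/2-\eta)k}\le 2^{(1-2\eta^2/\ln 2)k}$ points, which for $\eta=k^{-0.004}$ is exponentially smaller than $|S|=2^r$, so almost all of $S$ lies at distance at least $1/2-\eta$ from $d$. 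Both approaches are valid; they exploit different aspects of the hypothesis. Your argument uses the assumption that $M$ is exactly \emph{uniform} over a large set (needed for the bound $\Pr_{m\from M}[m\in B]\le |B|/|S|$), and in exchange is more elementary, avoiding Pinsker entirely and giving a clean geometric picture of why the decoder cannot win. The paper's argument uses only the weaker fact that $H(M)\ge k-k^{0.99}$, so it would continue to hold for non-uniform distributions with high Shannon entropy; since the lemma is stated for uniform $M$, this extra generality is not exercised. Your Taylor estimate $H_2(1/2-\eta)\le 1-2\eta^2/\ln 2$ is a true inequality (all higher-order terms in the series for $1-H_2$ are nonnegative), and the exponent bookkeeping ($\eta^2 k = k^{0.992} \gg k^{0.99}$) is correct, so there is no gap.
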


The proof of Lemma \ref{lem:high-ent-guess} appears in Appendix \ref{sec:app:high-ent-guess}.

\subsection{The number of queries needed to distinguish $\BSC^n_{\half-\eps}$ from $\BSC^n_{\half}$}
The following lemma by Shaltiel and Viola \cite{SV08} is a formal restatement of Informal Theorem \ref{ithm:q queries}.

\begin{lemma}[\cite{ViolaThesis,SV08}]
\label{lem:q queries}
For every $\eps,\delta>0$, such that $\delta < 0.4$, if
$T:\B^q \ar \B$ satisfies:
\begin{itemize}
\item $\Pr[T(\BSC^q_{\half-\eps})=1] \ge 1-\delta$.
\item $\Pr[T(\BSC^q_{\half})=1] \le 0.51$.
\end{itemize}
Then, $q = \Omega\left(\frac{\log \frac{1}{\delta}}{\eps^2}\right)$.
\end{lemma}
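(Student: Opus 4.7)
The plan is to invoke the data processing inequality for Kullback--Leibler divergence. Set $P_0 = \BSC^q_{\half}$ and $P_1 = \BSC^q_{\half-\eps}$, and let $\alpha = \Pr[T(P_0)=1]$ and $\gamma = \Pr[T(P_1)=1]$, so the hypotheses give $\alpha \le 0.51$ and $\gamma \ge 1-\delta$. Since $T$ maps into $\B$, post-processing cannot increase KL divergence and
\[
D\bigl(\mathrm{Bern}(\alpha) \,\big\|\, \mathrm{Bern}(\gamma)\bigr) \;\le\; D(P_0 \,\|\, P_1).
\]
The strategy is to upper-bound the right-hand side by $O(q\eps^2)$ and lower-bound the left-hand side by $\Omega(\log(1/\delta))$, and then combine.

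The right side tensorizes across the $q$ coordinates:
\[
D(P_0 \,\|\, P_1) \;=\; q \cdot D\bigl(\mathrm{Bern}(\tfrac{1}{2}) \,\big\|\, \mathrm{Bern}(\tfrac{1}{2}-\eps)\bigr) \;=\; -\tfrac{q}{2}\log(1-4\eps^2) \;\le\; O(q\eps^2),
\]
using $-\log(1-x) \le 2x$ for $x \le \tfrac{1}{2}$ (and I may freely assume $\eps \le \tfrac{1}{4}$). For the left side, I would expand
\[
D\bigl(\mathrm{Bern}(\alpha) \,\big\|\, \mathrm{Bern}(\gamma)\bigr) \;=\; \alpha \log\tfrac{\alpha}{\gamma} + (1-\alpha)\log\tfrac{1-\alpha}{1-\gamma}.
\]
Because $1-\alpha \ge 0.49$ and $1-\gamma \le \delta$, the second term is at least $0.49 \log(1/\delta) - O(1)$, while the first is bounded below by $-H(\alpha) = -O(1)$ (using $\gamma \le 1$, where $H$ denotes binary entropy). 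Combining, the left side is at least $0.49 \log(1/\delta) - O(1)$, so $q\eps^2 = \Omega(\log(1/\delta))$.

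The one subtlety I expect as the main obstacle to a clean statement is that the ``$-O(1)$'' slack can dominate when $\delta$ is close to the threshold $0.4$. In that regime I would instead invoke Pinsker's inequality: the hypotheses imply that the total variation between $P_0$ and $P_1$ is at least $\gamma - \alpha \ge 1 - \delta - 0.51 \ge 0.09$, while Pinsker combined with the tensorization bound above gives $\mathrm{TV}(P_0,P_1) \le \sqrt{D(P_0\|P_1)/2} = O(\eps\sqrt{q})$, yielding $q = \Omega(1/\eps^2)$. Since $\log(1/\delta) = O(1)$ in this regime, this matches the desired $\Omega(\log(1/\delta)/\eps^2)$ bound up to an absolute constant. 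The one conceptual choice that drives the proof is to take the KL divergence in the direction $D(P_0 \,\|\, P_1)$ rather than the reverse, so that $\delta$ appears inside a logarithm on the ``information'' side and furnishes the $\log(1/\delta)$ factor.
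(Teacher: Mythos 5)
Your proof is correct. The paper does not include its own proof of this lemma, citing it to Shaltiel--Viola~\cite{SV08} (and Viola's thesis), so there is no in-paper argument to compare against; but your derivation via the data-processing inequality for KL divergence is a valid and self-contained route to the bound. The tensorization step $D(P_0 \,\|\, P_1) = -\tfrac{q}{2}\log(1-4\eps^2) = O(q\eps^2)$ is right for $\eps \le 1/4$, and the $\log(1/\delta)$ factor indeed comes from the term $(1-\alpha)\log\tfrac{1-\alpha}{1-\gamma} \ge 0.49\log(0.49/\delta)$, with the first term bounded below by $\alpha\log\alpha \ge -O(1)$. You correctly flag that this degenerates when $\delta$ is near $0.4$, and the Pinsker fallback handles that regime cleanly: the total-variation gap is at least $(1-\delta) - 0.51 \ge 0.09$, and combining with $\mathrm{TV}(P_0,P_1) \le \sqrt{D(P_0\|P_1)/2} = O(\eps\sqrt{q})$ gives $q = \Omega(1/\eps^2)$, which matches $\Omega(\log(1/\delta)/\eps^2)$ there because $\log(1/\delta) = \Theta(1)$. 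To make the argument fully rigorous you should state the case split explicitly: fix a constant $C$ so that $0.49\log(1/\delta) - O(1) \ge \tfrac{1}{4}\log(1/\delta)$ whenever $\log(1/\delta) \ge C$, use the KL bound in that range, and use Pinsker otherwise. Your remark about the direction of the divergence is also exactly right; the reverse direction $D(\mathrm{Bern}(\gamma)\|\mathrm{Bern}(\alpha))$ has the $\delta$-dependent term $(1-\gamma)\log\tfrac{1-\gamma}{1-\alpha}$, which vanishes rather than grows as $\delta \to 0$ and would not yield the $\log(1/\delta)$ factor.
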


\subsection{Constant depth circuits, approximate majority, and the coin problem}

As is standard in complexity theory, when discussing circuits, we consider circuits over the standard set of gates $\set{\textsc{AND,OR,NOT}}$. We use the convention that the size of a circuit is the number of gates and wires. With this convention, a circuit $C$ that on input $x \in \B^n$, outputs $x_1$, has size $O(1)$.
If we mention the depth of the circuit, then we mean that $\textsc{AND,OR}$ gates have unbounded fan-in, and otherwise these gates have fan-in 2.

\paragraph{Constant depth circuits for approximate majority:}
We use the following classical result by Ajtai showing that constant depth circuits can compute approximate majority:

\begin{theorem}[\cite{Ajtai}]
\label{thm:Ajtai}
There exists a constant $c$ such that for every two constants $0 \le p <P <1$, and every sufficiently large $n$, there exists a circuit $C$ of size $n^c$ and depth $c$ such that for every $x \in \B^n$:
\begin{itemize}
\item If $\weight(x) \ge P$ then $C(x)=1$.
\item if $\weight(x) \le p$ then $C(x)=0$.
\end{itemize}
\end{theorem}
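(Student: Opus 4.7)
The plan is to construct a constant-depth circuit via random sampling and iterated gap amplification. The key building block is the following: a single layer of OR gates, where each gate takes the OR over a random subset of size $s$ of its inputs, transforms an input of weight $w$ into an output whose $i$-th bit is $1$ independently with probability $1-(1-w)^s$. By Chernoff concentration, the resulting output weight lies close to $1-(1-w)^s$ with very high probability. A symmetric statement holds for AND layers, which produce outputs of weight close to $w^s$. Thus a layer of ORs maps a weight gap $[0,p] \cup [P,1]$ to a new gap $[0, 1-(1-p)^s+\eta] \cup [1-(1-P)^s-\eta, 1]$, and an AND layer maps $[0,p] \cup [P,1]$ to $[0, p^s+\eta] \cup [P^s-\eta, 1]$, for any small $\eta>0$, with all but exponentially small probability.

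The main argument is then an iterated gap amplification. Starting from the constants $p < P$, choose $s=O(1)$ so that a single OR layer produces a new pair $(p_1, P_1)$ with $P_1 - p_1 > P - p$ and $p_1 < \tfrac12 < P_1$. Following this with an AND layer of a suitable constant size, we obtain $(p_2, P_2)$ whose gap is strictly larger than $(p_1, P_1)$. By inspecting the fixed-point behavior of the maps $w \mapsto 1-(1-w)^s$ (repelling from a point in $(0,1)$) and $w \mapsto w^s$ (similarly), one sees that after $O(1)$ alternations we can drive the two endpoints arbitrarily close to $0$ and $1$ respectively, say to $p_d < 1/n^2$ and $P_d > 1 - 1/n^2$. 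At that point, a final single OR gate over all output wires cleanly separates the two cases: the weight is either $0$ or at least $1-(1/n^2)^n$, so the final OR outputs $0$ in the first case and $1$ in the second.

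To make the argument go through deterministically, I would invoke the probabilistic method. At each of the $O(1)$ layers, choose the random subsets independently and set the width of the layer to some sufficiently large polynomial $n^{c'}$. By Chernoff bounds, for any fixed input the output weight is within $\eta$ of the expected value except with probability $e^{-\Omega(n^{c'})}$. There are at most $2^n$ possible inputs total, so for $c'$ large enough, a union bound over all inputs of weight $\leq p$ and all inputs of weight $\geq P$, and over all $O(1)$ layers, still leaves positive probability that a single fixed choice of subsets works simultaneously for every valid input. Fixing such a choice yields a deterministic circuit of size $n^{O(1)}$ and depth $O(1)$, as required.

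The main obstacle is quantitative: we need to verify that the Chernoff concentration at each layer is strong enough to survive the union bound over all low-weight and high-weight inputs, and that the choice of per-layer arity $s = O(1)$ can be tuned so that the gap genuinely \emph{expands} at every step rather than collapsing. This requires a careful choice of $s$ at each level (with $s$ depending on the current $(p_i, P_i)$) and tracking the perturbation parameter $\eta$ so that the accumulated errors over the $O(1)$ layers do not eat up the gap before it reaches the $(1/n^2, 1-1/n^2)$ regime where the final OR succeeds.
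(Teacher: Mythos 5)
The paper itself does not prove Theorem~\ref{thm:Ajtai}; it is cited directly from Ajtai~\cite{Ajtai}. Your high-level strategy --- a random constant-depth circuit with alternating OR/AND layers, Chernoff concentration at each layer, and a union bound over all inputs to derandomize via the probabilistic method --- is indeed the standard route to approximate majority, so the plan is in the right spirit. However, there are two genuine gaps in the execution.

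First, the claim that $O(1)$ layers of \emph{constant} arity $s=O(1)$ can ``drive the two endpoints arbitrarily close to $0$ and $1$ respectively, say to $p_d < 1/n^2$'' is false. If each layer applies one of the maps $w \mapsto 1-(1-w)^s$ or $w\mapsto w^s$ with $s$ a constant, then the composition of $O(1)$ of them is a fixed polynomial of constant degree, so it sends constants in $(0,1)$ to constants in $(0,1)$; the gap can be amplified to, say, $(0.01,0.99)$, but it cannot be pushed to $o(1)$ as $n$ grows. Relatedly, the individual map $w \mapsto 1-(1-w)^s$ for integer $s>1$ has \emph{no} repelling fixed point in the open interval $(0,1)$ (its only fixed points are $0$ and $1$); the repelling interior fixed point you want arises only from the composition of an OR layer with an AND layer, so the dynamics needs to be argued for that composition. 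Second, even granting a layer of weight $<1/n^2$, the proposed cleanup --- ``a final single OR gate over all output wires'' --- fails: a layer of width $n^{c'}$ with weight below $1/n^2$ still has $\Theta(n^{c'-2})$ wires set to $1$, so the OR outputs $1$ in both cases. To finish correctly you need at least one layer whose fan-in grows like $\Theta(\log n)$ (yielding a depth-$3$ circuit overall: constant-arity amplification to a constant gap, then a layer of $\Theta(\log n)$-arity gates, then a single top gate), and the union bound should be run over the \emph{probability that a random top gate answers incorrectly} for each fixed input, pushing that probability below $2^{-2n}$, rather than over the weight of a wide intermediate layer.
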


\paragraph{Lower bounds for the coin problem: }
We use lower bounds on the size of constant depth circuits for the ``coin problem''. A sequence of works by \cite{SV08,Aaronson10,CGR14,LSSTV19} gives such lower bounds, and the statement below is due to Aaronson \cite{Aaronson10} and Cohen, Ganor and Raz \cite{CGR14} (and was improved by Limaye et al. \cite{LSSTV19} to also hold for circuits that are allowed to use $\textsc{PARITY}$ gates of unbounded fan-in).

\begin{theorem}[\cite{Aaronson10,CGR14,LSSTV19}]\label{thm:coin-problem}
Suppose $C: \B^{n} \ar \B$ is a circuit of depth $d$ satisfying:
\begin{itemize}
\item $\Pr_{z \from \BSC^{n}_{\half-\eps}}[C(z)=1] \ge 0.9$,
\item $\Pr_{z \from \BSC^{n}_{\half + \eps}}[C(z)=1] \le 0.1$.
\end{itemize}
Then, $C$ must have size at least $\exp(\Omega\left(d \cdot (1/\eps)^{\frac{1}{d-1}}\right))$.
\end{theorem}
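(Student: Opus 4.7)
The plan is to prove this lower bound via the random-restriction method together with Håstad's switching lemma, following the approach of Cohen--Ganor--Raz (an alternative route goes through the Linial--Mansour--Nisan Fourier tail bound for AC$^0$, as in Aaronson's proof). The high-level idea is to iteratively simplify the depth-$d$ circuit $C$ until it becomes a shallow decision tree, and then invoke the fact that shallow decision trees cannot distinguish $\BSC^q_{\frac{1}{2}-\eps}$ from $\BSC^q_{\frac{1}{2}+\eps}$ with constant advantage unless $q = \Omega(1/\eps)$.

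First I would choose a random restriction tailored to the coin problem: for each variable independently, keep it alive with probability $p$, and otherwise set it to a random bit sampled from the ambient biased distribution. The crucial property is that, conditioned on such a restriction, the alive variables are still i.i.d.\ biased coins with the same parameter $\eps$, so any $C$ that distinguishes the two coin distributions must, with non-negligible probability over the restriction, still distinguish them on its live variables. Next I would apply Håstad's switching lemma iteratively. A single $p$-restriction applied to a CNF/DNF of bottom fan-in $w$ turns it into a decision tree of depth $t$ except with probability $(O(pw))^t$; union-bounding over the $s$ bottom-level gates and iterating $d-1$ times, with a carefully chosen sequence $p_1,\ldots,p_{d-1}$ of restriction parameters, the original circuit collapses to a decision tree $\mathcal{T}$ of some depth $T$. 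Finally I would invoke Lemma~\ref{lem:q queries} (after relating the coin problem to the simpler distinguishing task considered there via a triangle-inequality / hybrid argument): a depth-$T$ decision tree reads at most $T$ i.i.d.\ $\eps$-biased bits and therefore has distinguishing advantage at most $O(T\eps)$, so constant advantage forces $T = \Omega(1/\eps)$.

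The quantitative bound $s \geq \exp(\Omega(d \cdot (1/\eps)^{1/(d-1)}))$ emerges from optimizing the sequence $p_1,\ldots,p_{d-1}$ to balance the failure probabilities of the switching steps against the depth $T$ of the resulting decision tree. The main obstacle I expect is the bookkeeping in this recursion: after each switching step the effective bottom fan-in grows (it becomes $\sim t$, the decision-tree depth produced by the previous switch), so the next $p_{i+1}$ must shrink to keep $p_{i+1} \cdot t$ small, and one must verify that the surviving fraction $\prod_i p_i$ still leaves enough live coordinates (carrying enough independent bias) for the final lower bound on $T$ to apply. It is precisely this delicate quantitative tracking that makes the original proofs in \cite{Aaronson10,CGR14,LSSTV19} technical, despite the clean underlying idea.
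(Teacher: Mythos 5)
The paper does not prove Theorem~\ref{thm:coin-problem}: it is imported as a black box from \cite{Aaronson10,CGR14,LSSTV19}, so there is no ``paper's own proof'' to compare against. What you have written is a plausible high-level sketch of how such a bound is obtained, and it does capture the switching-lemma strategy that is closest in spirit to the Cohen--Ganor--Raz argument, but two points in your sketch would not survive being turned into a real proof.

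First, the final decision-tree step is stated with the wrong dependence. You claim a depth-$T$ decision tree over i.i.d.\ $\eps$-biased bits has distinguishing advantage $O(T\eps)$, giving $T=\Omega(1/\eps)$. That is the crude triangle-inequality bound; the correct statement (and the one the paper itself uses via Lemma~\ref{lem:q queries}) is that the total variation distance between $\BSC^T_{\half-\eps}$ and $\BSC^T_{\half+\eps}$ is $\Theta(\min(1,\sqrt{T}\eps))$, so constant advantage forces $T=\Omega(1/\eps^2)$. Getting this exponent wrong propagates into the optimization of the $p_i$'s and would change the quantitative conclusion, so it has to be fixed rather than waved at.

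Second, the random restriction you describe is not well-defined for this problem. You propose to fix dead coordinates to ``a random bit sampled from the ambient biased distribution,'' but there are two ambient distributions, $\BSC_{\half-\eps}$ and $\BSC_{\half+\eps}$, and the whole point is to show that one fixed circuit behaves differently under each. If you fix dead coordinates using one of the two biases, the restricted circuit composed with live variables from the other bias no longer simulates the original circuit on the other distribution, and the distinguishing problem is not preserved. The papers you cite handle this carefully (e.g.\ by conditioning on a balanced fixing and tracking how the effective bias of the live coordinates changes, or by using Fourier tail bounds à la LMN and avoiding restrictions altogether, as in \cite{Aaronson10}); your sketch currently glosses over exactly this point, which is where most of the technical work in those proofs actually lives.

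Apart from these two gaps, the overall plan (restriction preserving the product structure, iterated switching lemma, final decision-tree distinguishing bound) is the standard route and is a fair description of the CGR-style argument. But as written it is a plan, not a proof, and the two issues above need genuine repair rather than the ``bookkeeping'' you anticipate.
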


For our purposes, we prefer to replace the distributions $\BSC^{n}_{\half-\eps}$ and $\BSC^{n}_{\half+\eps}$, by $\BSC^{n}_{\half-\eps}$ and $\BSC^{n}_{\half}$ (as is the case in Lemma \ref{lem:q queries}).
The next corollary shows that the results of Theorem \ref{thm:coin-problem} imply a similar bound when comparing $\BSC^{n}_{\half-\eps}$ to $\BSC^{n}_{\half}$.\footnote{We remark that bounds for the latter choice of distributions were proven by Shaltiel and Viola \cite{SV08}, but we prefer to rely on the subsequent bounds of \cite{Aaronson10,CGR14,LSSTV19}, which are tighter and
lead to a larger constant in the exponent of $\frac{1}{\eps}$ in Theorem \ref{thm:main:code:small eps}.}

\begin{restatable}{corollary}{coin}\label{cor:coin-problem}
Suppose $C: \B^{n} \ar \B$ is a circuit of depth $d$ satisfying:
\begin{itemize}
\item $\Pr_{z \from \BSC^{n}_{\half-\eps}}[C(z)=1] \ge 0.99$,
\item $\Pr_{z \from \BSC^{n}_{\half}}[C(z)=1] \le 0.01$.
\end{itemize}
Then, $C$ must have size at least $\exp(\Omega\left(d \cdot (1/\eps)^{\frac{1}{d-1}}\right))$.
\end{restatable}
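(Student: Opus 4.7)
The plan is to reduce Corollary~\ref{cor:coin-problem} to Theorem~\ref{thm:coin-problem}. Given a depth-$d$ circuit $C$ satisfying the corollary's hypothesis, I would construct from $C$ a related circuit $C'$ of depth $d+O(1)$ and size $O(|C|)$ that distinguishes $\BSC^n_{\half-\eps}$ from $\BSC^n_{\half+\eps}$ with the $0.9$ versus $0.1$ gap required by the theorem. The size bound then transfers to $|C|$ (up to absorbing the $O(1)$ depth increase and constant-factor size blowup in the $\Omega(\cdot)$).

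Write $p_- := \Pr[C(\BSC^n_{\half-\eps})=1] \ge 0.99$, $p_0 := \Pr[C(\BSC^n_{\half})=1] \le 0.01$, and $p_+ := \Pr[C(\BSC^n_{\half+\eps})=1]$, which is \emph{a priori} unknown. First I would split into cases based on $p_+$. If $p_+ \le 0.1$, then $C$ itself already satisfies Theorem~\ref{thm:coin-problem}'s hypothesis ($p_- \ge 0.9$ and $p_+ \le 0.1$) and the bound is immediate. Otherwise, consider $C'(z) := C(z) \wedge \neg C(\bar z)$, where $\bar z$ denotes bitwise complement (a ``free'' operation since the NOTs can be pushed to the input variables). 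Under $z \sim \BSC^n_{\half+\eps}$ we have $\bar z \sim \BSC^n_{\half-\eps}$, so $\Pr[\neg C(\bar z)=1]=1-p_-\le 0.01$, hence $\Pr[C'(\BSC^n_{\half+\eps})=1]\le 0.01$; and under $\BSC^n_{\half-\eps}$ a union bound on the complements gives $\Pr[C'=1] \ge p_-+(1-p_+)-1 = p_- - p_+$. When $p_+$ is at most roughly $0.09$ this already furnishes the required $0.9$ vs $0.1$ distinguisher for the theorem.

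For the remaining range in which $p_+$ is large, I would boost the distinguishing advantage by invoking the theorem on inputs of length $Kn$ for a sufficiently large constant $K$: regard the input as $K$ independent blocks in $\B^n$, compute $C'$ on each block, and combine the $K$ output bits through Ajtai's approximate majority circuit (Theorem~\ref{thm:Ajtai}). Any constant advantage of $C'$ on a single block amplifies by Chernoff and the approximate-majority gadget to a $0.9/0.1$ gap between $\BSC^{Kn}_{\half-\eps}$ and $\BSC^{Kn}_{\half+\eps}$, at the cost of only $O(1)$ extra depth and a constant factor in size.

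The main obstacle I anticipate is the fully symmetric regime where $p_+$ is very close to $p_-$ (both near $1$); there $C'(z)=C(z)\wedge \neg C(\bar z)$ has essentially no advantage between the two biased distributions, and \emph{no} Boolean combination of $C(z)$ and $C(\bar z)$ alone can distinguish $\BSC^n_{\half-\eps}$ from $\BSC^n_{\half+\eps}$ (by the bit-flip symmetry of such combinations). Handling this case requires a more delicate argument: any such $C$ is effectively an approximate indicator of ``weight far from $n/2$'', a problem which is itself equivalent, up to constants in $\eps$ and a bit-flip symmetrization, to the coin problem distinguishing $\BSC^n_{\half\pm \eps}$ from $\BSC^n_{\half}$; this equivalence can be used to obtain the same lower bound without altering the depth.
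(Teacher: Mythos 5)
Your case analysis correctly handles the regime where $p_+ := \Pr[C(\BSC^n_{\half+\eps})=1]$ is small, and you correctly identify the genuine obstruction in the remaining regime: when $p_+$ is close to $p_-$ (both near~$1$), the circuit $C$ may be bit-flip symmetric, and then \emph{no} Boolean combination of $C(z)$ and $C(\bar z)$ can separate $\BSC^n_{\half-\eps}$ from $\BSC^n_{\half+\eps}$ — so there is no constant advantage left to amplify with approximate majority, and the approach stalls. Your final sentence (``this equivalence can be used to obtain the same lower bound without altering the depth'') is a placeholder rather than an argument, so as written the proof is incomplete precisely in the case you flagged.

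The paper's proof avoids this dead end by modifying the \emph{input distribution} rather than the circuit's output logic, which removes the need for any case split. Concretely, given $C$ that distinguishes $\BSC^n_{\half-\eps'}$ from $\BSC^n_{\half}$ at $0.99/0.01$ (with $\eps' = \Theta(\eps)$), consider the randomized circuit that independently resets each input bit to $0$ with probability $p = \frac{2\eps}{1+2\eps}$ and then runs $C$. This is a depth-preserving, size-preserving transformation (it just restricts some inputs to constants), and it shifts the noise rate: $\BSC^n_{\half+\eps}$ is mapped exactly to $\BSC^n_{\half}$, while $\BSC^n_{\half-\eps}$ is mapped to $\BSC^n_{\half-\eps'}$ with $\eps' = \frac{2\eps}{1+2\eps} = \Theta(\eps)$. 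The composed circuit therefore distinguishes $\BSC^n_{\half-\eps}$ from $\BSC^n_{\half+\eps}$ with gap $\ge 0.98$, and since a gap of $0.98$ forces a $0.9/0.1$ split, an averaging argument over the reset coins yields a single deterministic circuit meeting the hypotheses of Theorem~\ref{thm:coin-problem}. This works uniformly — including in the symmetric regime your approach cannot handle — because the reduction never needs to know $p_+$. The takeaway: rather than trying to extract one-sidedness by combining $C(z)$ with $C(\bar z)$, bias the input so that the two coin-problem variants coincide up to a $\Theta(1)$ rescaling of $\eps$.
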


For completeness, we show that Corollary \ref{cor:coin-problem} follows from Theorem \ref{thm:coin-problem} in Appendix \ref{sec:app:coin problem}.

\section{Query complexity lower bounds for local list-decoding}
\label{sec:LLDC}

In this section we prove Theorem \ref{thm:main:code:large eps} and Theorem \ref{thm:main:code:small eps} and provide lower bounds on the query complexity of local list-decoders.
In Section \ref{sec:ARLLD}, we introduce a relaxed concept that we call ``approximate local list-decoders on noisy random codewords'' (ARLLD) in which the local list-decoder is only required to recover a random message that was corrupted by a binary symmetric channel (meaning that the original message appears in the list of messages that are locally computed by the local list-decoder).
We then use a careful averaging argument to show that any local list-decoder (LLD) can be converted into an ARLLD with roughly the same parameters, and furthermore, the obtained ARLLD is \emph{deterministic}.
This means that when proving Theorem \ref{thm:main:code:large eps} and Theorem \ref{thm:main:code:small eps} it is sufficient to consider ARLLDs, and these proofs appear in Sections \ref{sec:prf:large eps} and \ref{sec:prf:small eps} respectively.

\subsection{Definition of approximate local list-decoders on noisy random codewords}
\label{sec:ARLLD}

Our goal is to prove lower bounds on the number of queries $q$ of $(\half-\eps,L,q,\delta)$-local list-decoders. We will show that it is sufficient to consider local list-decoders that need to perform an easier task. More specifically, we relax the task of a local list-decoder in the following ways:
\begin{itemize}
\item The local list-decoder does not need to succeed on every $w \in \B^n$, but only with not too small probability
 over a ``random noisy codeword'' (as defined in Definition \ref{dfn:RNSY}) which is sampled by encoding a uniformly chosen message $m$, and hitting $\Enc(m)$ with the noise generated by a binary symmetric channel, to obtain a word $w$.
 It is required that with not too small probability over the choice of the message $m$ and the random noise, there exists $j \in [L]$ such that the local decoder with oracle access to $w$, and input $j$, recovers $m$.
\item The local decoder is \emph{approximate} and is not required to recover $m_i$ correctly on every $i \in [k]$. Instead, it is allowed to err on a $\delta$ fraction of $i$'s.
\end{itemize}
This makes the task of the decoder easier.
It turns out that with this relaxation, random coins are not very helpful to the local list-decoder, and so, it is sufficient to consider deterministic local list-decoders (which do not have access to random coins\footnote{This is because any local list-decoder with failure probability $\delta$ can be converted into a deterministic approximate local list-decoder that errs on at most a $\delta$-fraction of the entries by fixing a random string for which the local list-decoder correctly recovers at least a $(1-\delta)$-fraction of the entries (such a string exists by averaging).}). This is captured in the following definition.

%

\begin{definition}[Approximate local list-decoder on noisy random codewords]
\label{dfn:ARLLD}
Let $\Enc:\B^k \ar \B^n$ be a function, and $\eps< \frac{1}{4}$. A \remph{$(\half-\eps,L,q,\delta)$-approximate $\RNSY$ local list-decoder (ARLLD)} for $\Enc$ is a \emph{deterministic}  oracle procedure $\Dec^{(\cdot)}$ that receives oracle access to a word $w \in \B^n$, and makes at most $q$ calls to the oracle. The procedure $\Dec$ also receives inputs:
\begin{itemize}
\item $i \in [k]$ : The index of the symbol that it needs to decode.
\item $j \in [L]$ : An index to the list.
\end{itemize}
It is required that, with probability at least $1/3$ over choosing $(m,z,w) \from \RNSY^{\Enc}_{\half-2\eps}$, there exists $ j \in [L]$ such that
\[ \Pr_{i \from [k]}[\Dec^w(i,j)=m_i] \ge 1-\delta. \]
\end{definition}

We stress again that $\Dec$ is deterministic and the probability in Definition \ref{dfn:ARLLD} is taken over the choice of a uniform random coordinate $i \in [k]$.

The following proposition shows that in order to prove lower bounds on local list-decoders (LLDs, Definition \ref{dfn:LLD}), it is sufficient to prove lower bounds on approximate $\RNSY$ local list-decoders (ARLLDs, Definition \ref{dfn:ARLLD}).

\begin{proposition}[LLD implies ARLLD]
\label{prop:LLD to ARLLD}
There exists a universal constant $a>1$ such that
for every $a \cdot \sqrt{\frac{1}{n}} \le \eps<\frac{1}{4}$,
if there exists an $(\half-\eps,L,q,\delta)$-local list-decoder for a function $\Enc:\B^k \ar \B^n$ then there also exists an $(\half-\eps,L,q,10 \cdot \delta)$-approximate $\RNSY$ local list-decoder for $\Enc$.
\end{proposition}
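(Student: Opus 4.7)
The plan is to take the given LLD $\Dec$ and produce an ARLLD by a two-stage averaging argument that fixes both types of random coins, combined with a Chernoff step that converts the worst-case guarantee of the LLD into an average-case guarantee under $\RNSY_{\half-2\eps}$.

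First, I would apply a Chernoff/Hoeffding bound to the noise. If $(m,z,w) \from \RNSY^{\Enc}_{\half-2\eps}$, then $z \from \BSC^n_{\half-2\eps}$ has expected relative weight $\half-2\eps$, so
\[
\Pr\left[\,\weight(z) > \tfrac{1}{2}-\eps\,\right] \le \exp(-2\eps^2 n) \le 0.01,
\]
provided $\eps \ge a/\sqrt{n}$ for a sufficiently large universal constant $a$. Since $\mathsf{dist}(w,\Enc(m)) = \weight(z)$, on this high-probability event we have $m \in \List^{\Enc}_{\half-\eps}(w)$, which is precisely the hypothesis under which the LLD guarantee for $\Dec$ applies.

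Next, I would use the LLD guarantee to build a good event on $(m,z,w,r^{\shared})$ and fix $r^{\shared}$. By Definition \ref{dfn:LLD}, whenever $m \in \List(w)$, with probability $\ge 2/3$ over $r^{\shared}$ there exists $j \in [L]$ such that $\Pr_{i,r}[\Dec^w(i,j,r^{\shared},r)=m_i] \ge 1-\delta$ (using that pointwise error $\le\delta$ for every $i$ implies joint error $\le\delta$ by averaging). Combining this with the Chernoff step, the joint probability of this ``good'' event over $(m,z,w,r^{\shared})$ is at least $0.99 \cdot 2/3 \ge 0.6$. By averaging, there exists a fixed string $r^{\shared}_*$ such that the set
\[
G \;=\; \bigl\{(m,z,w) : \exists\, j \in [L],\ \Pr_{i,r}[\Dec^w(i,j,r^{\shared}_*,r) = m_i] \ge 1-\delta\bigr\}
\]
has mass $\Pr[G] \ge 0.6$. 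For each $(m,z,w) \in G$, let $j^*(m,z,w)$ denote the smallest such $j$.

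Finally, I would fix the non-shared coins $r$ by a Markov+averaging step. Fix any $(m,z,w) \in G$, and let $f(r) = \Pr_i[\Dec^w(i,j^*(m,z,w),r^{\shared}_*,r) \ne m_i]$. Since $\Exp_r[f(r)] \le \delta$, Markov's inequality yields $\Pr_r[f(r) > 10\delta] \le 1/10$, so $\Pr_r[f(r) \le 10\delta] \ge 9/10$. Averaging this lower bound over $(m,z,w) \in G$ and swapping expectations, there exists a single string $r_*$ for which $f(r_*) \le 10\delta$ holds for at least a $9/10$ fraction of $(m,z,w) \in G$. Defining the deterministic decoder $\Dec'^w(i,j) := \Dec^w(i,j,r^{\shared}_*,r_*)$ (which uses at most $q$ queries since $\Dec$ does), the previous bounds give
\[
\Pr_{(m,z,w) \from \RNSY_{\half-2\eps}}\!\Bigl[\exists j \in [L] : \Pr_i[\Dec'^w(i,j)=m_i] \ge 1-10\delta\Bigr] \ge 0.6 \cdot \tfrac{9}{10} \ge \tfrac{1}{3},
\]
which is exactly the $(\half-\eps,L,q,10\delta)$-ARLLD guarantee.

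The only subtle point I anticipate is the bookkeeping around the two types of coins: the ``good'' index $j^*$ depends on $(m,z,w)$ and on the chosen $r^{\shared}_*$, so one must apply Markov inside $G$ (with $j^*$ already pinned down) before averaging over $r$, rather than trying to fix $r$ first. Beyond this, the argument is a clean combination of Chernoff, a union bound, and two averaging arguments, and the factor $10$ in the error blowup is exactly what Markov's inequality allows us to gain in exchange for a constant-factor loss in the success probability over $(m,z,w)$.
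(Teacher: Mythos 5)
Your proof is correct and follows essentially the same approach as the paper's: Chernoff to place $m$ in the list with high probability, averaging over $r^{\shared}$ to fix a good shared coin string, and then a Markov-plus-averaging step to fix the non-shared coins while paying the factor-$10$ error blowup. The only (cosmetic) difference is the order of the last two steps: you apply Markov over $r$ for each fixed $(m,z,w)\in G$ and then average over $(m,z,w)$ to pin down a single $r_*$, whereas the paper first averages over $r$ to fix $\hat r$ and then applies Markov over $(m,z,w)\from\RNSY'$; both routes yield the same $\geq \frac{2}{3}(1-o(1))\cdot\frac{9}{10}\geq\frac13$ bound.
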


\begin{proof}
Within this proof, in order to avoid clutter, we use $\RNSY$ to denote $\RNSY^{\Enc}_{\half-2\eps}$.
Let $\Dec$ denote an LLD for $\Enc$.
For $(m,z,w) \from \RNSY$, by a Chernoff bound, for $\gamma=2^{-\Omega(\eps^2 \cdot n)}$, with probability $1-\gamma$, we have that $\mathsf{dist}(\Enc(m),w) \le \half-\eps$, meaning that $m \in \List^{\Enc}_{\half-\eps}(w)$. By the definition of LLD, this gives that whenever this occurs, with probability at least $2/3$ over the choice of $r^{\shared}$, there exists $j \in [L]$ such that the procedure $P_{w,j,r^{\shared}}(i,r)=\Dec^w(i,j,r^{\shared},r)$ locally computes $m$ with error $\delta$, where $r^{\shared}, r$ are the randomness strings  used by $\Dec$.

Let $E_1$ be the experiment in which $(m,z,w) \from \RNSY$ and $r^{\shared}$ is an independent uniform string. It follows that:

\[\Pr_{E_1}[\exists j \in [L]: \mbox{$P_{w,j,r^{shared}}$ locally computes $m$ with error $\delta$}] \ge \frac{2}{3} - \gamma. \]

By averaging, there exists a fixed string $\hat{r}^{\shared}$ such that:
\[\Pr_{\RNSY}[\exists j \in [L]: \mbox{$P_{w,j,\hat{r}^{shared}}$ locally computes $m$ with error $\delta$}] \ge \frac{2}{3} - \gamma. \]

Let $S$ denote the set of triplets $(m,z,w)$ in the support of $\RNSY$ for which the event above occurs. For every such triplet, we have that there exists a $j \in [L]$ for which $P_{w,j,\hat{r}^{shared}}$ locally computes $m$ with error $\delta$. Let $f$ be a mapping that given a triplet $(m,z,w) \in S$, produces such a $j \in [L]$. This means that:

\[\Pr_{\RNSY}[\mbox{$P_{w,f(m,z,w),\hat{r}^{shared}}$ locally computes $m$ with error $\delta$}] \ge \frac{2}{3} - \gamma. \]

Let $\RNSY'$ be the experiment in which $(m,z,w) \from \RNSY|(m,z,w) \in S$. Namely, we choose $(m,z,w)$ from the experiment $\RNSY$, conditioned on the event that $(m,z,w) \in S$.

Let $E_2$ be the experiment in which we choose independently a random string $r$, $i \from [k]$ and $(m,z,w) \from \RNSY'$.
We obtain that:

\[\Pr_{E_2}[\Dec^w(i,f(m,z,w),\hat{r}^{shared},r)=m_i] \ge 1-\delta, \]
since $P_{w,f(m,z,w),\hat{r}^{shared}}$ computes correctly each coordinate $m_i$ with probability at least $1-\delta$ over the choice of $r$.

By averaging, there exists a fixed string $\hat{r}$ such that:


\[\Pr_{(m,z,w) \from \RNSY', i \from [k]}[\Dec^w(i,f(m,z,w),\hat{r}^{shared},\hat{r})=m_i] \ge 1-\delta. \]

By Markov's inequality:

\[\Pr_{(m,z,w) \from \RNSY'}\left[\Pr_{i \from [k]}[\Dec^w(i,f(m,z,w),\hat{r}^{shared},\hat{r}) \ne m_i] \ge 10\delta\right] \le \frac{1}{10}. \]

Let $\overline{\Dec}^w(i,j)=\Dec^w(i,j,\hat{r}^{shared},\hat{r})$. We obtain that:

\[\Pr_{(m,z,w) \from \RNSY'}\left[\Pr_{i \from [k]}[\overline{\Dec}^w(i,f(m,z,w)) = m_i] > 1-10\delta\right] >\frac{9}{10}. \]

Which gives that:
\[\Pr_{(m,z,w) \from \RNSY}\left[\Pr_{i \from [k]}[\overline{\Dec}^w(i,f(m,z,w)) = m_i] > 1-10\delta\right] > \left(\frac{2}{3} - \gamma\right) \cdot \frac{9}{10} \ge \frac{1}{3}, \]
where the second inequality follows because by our requirements on $\eps$, we can choose $a$ so that $2/3-\gamma > \frac{1}{2}$. Thus, the oracle procedure $\overline{\Dec}^{(\cdot)}$ is a $(\half-\eps,L,q,10 \cdot \delta)$-ARLLD as required.
\end{proof}

By Proposition \ref{prop:LLD to ARLLD} in order to prove our main theorems on local list-decoders, it is sufficient to prove them for approximate $\RNSY$ local list-decoders.

\subsection{Proof of Theorem \ref{thm:main:code:large eps}}
\label{sec:prf:large eps}

In this section we prove Theorem \ref{thm:main:code:large eps}, restated below.

\largeeps*

We use the following definition.

\begin{definition}
Given a string $w \in \B^n$, a subset of coordinates $B=\set{h_1<\ldots<h_b} \subseteq [n]$ of size $b$, and a string $v \in \{0,1\}^B$, we let $\Fix_{B \ar v}(w) \in \B^n$  denote the string that is obtained from $w$ by fixing the bits in $B$ to the corresponding values in $v$. That is,
\[ (\Fix_{B \ar v}(w))_{\ell} = \left\{\begin{array}{lr}
        v({h_i}) , & \exists i \mbox{ s.t.\ } \ell = h_i \\
        w_{\ell}, & \ell \not \in B
        \end{array}\right. \]
\end{definition}

The lower bound will follow from the following lemma.

\begin{lemma}
\label{lem:exists m}
There exists a universal constant $\nu >0$ such that the following holds for any $L \leq 2^{k^{0.9}}$, $\epsilon \in (k^{-\nu}, \frac 1 4), \delta \in (  k^{-\nu}, \frac 13)$, and $ q \leq \frac {\log(1/\delta)} {\epsilon^2}$.
Let $\Dec$ be a $(\half-\eps,L,q,\delta)$-ARLLD for $\Enc:\B^k \ar \B^n$. Then
there exist $m' \in \B^k$, $i' \in [k]$, $j' \in [L]$, a subset $B \subseteq [n]$, and a string $v \in \B^{B}$ such that:
\begin{enumerate}
\item $\Pr_{z \from \BSC^n_{\half-2\eps}}\left[\Dec^{\Fix_{B \ar v}(\Enc(m') \oplus z)}(i',j')=m'_{i'}\right] \ge 1- 200 \delta$.
\item $\Pr_{z \from \BSC^n_{\half}}\left[\Dec^{\Fix_{B \ar v}(\Enc(m') \oplus z)}(i',j')=m'_{i'}\right] \le 0.51$.
\end{enumerate}
\end{lemma}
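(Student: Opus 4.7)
\textbf{The plan} is to prove Lemma~\ref{lem:exists m} by converting $\Dec$ into a $q$-query test at a single pair $(i',j')$ that distinguishes $\BSC^n_{\half-2\eps}$ from $\BSC^n_{\half}$ after fixing a short set of oracle coordinates. Item~1 will follow because when $z \from \BSC^n_{\half-2\eps}$, the input $\Fix_{B \ar v}(\Enc(m') \oplus z)$ is engineered to match the conditional distribution of a random noisy codeword of $m'$ under a heavy event on which a preselected list index $j'$ correctly decodes $m'$. Item~2 will follow because when $z$ is uniform, the input is uniform outside $B$ and fixed to $v$ inside $B$, independently of $m'$, so $\Dec^{(\cdot)}(i',j')$ has essentially no information about $m'_{i'}$. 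Combined with Lemma~\ref{lem:q queries}, this will yield the claimed $q=\Omega(\log(1/\delta)/\eps^2)$ bound.

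\textbf{First step: fix a good list index.} By Definition~\ref{dfn:ARLLD}, at least $1/3$ of triples $(m,z,w) \from \RNSY^{\Enc}_{\half-2\eps}$ admit some $j \in [L]$ that is \emph{decoding}, meaning $\Pr_i[\Dec^w(i,j)=m_i] \ge 1-\delta$. Averaging over $j \in [L]$ and over $m \in \B^k$, I obtain a single $j' \in [L]$ and a set $M \subseteq \B^k$ of ``useful'' messages of density at least $\Omega(1/L)$ such that each $m \in M$ has, with constant probability over $z \from \BSC^n_{\half-2\eps}$, the index $j'$ decoding for $(m,z,\Enc(m)\oplus z)$. Since $L \le 2^{k^{0.9}}$, the set $M$ still contains at least $2^{k-O(k^{0.9})}$ messages, which is the high-min-entropy regime required by Lemma~\ref{lem:high-ent-guess} later.

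\textbf{Second step: recover the BSC structure of the noise.} The marginal of $z$ conditioned on ``$j'$ is decoding'' need not be $\BSC^n_{\half-2\eps}$, and this is the main technical hurdle. Building on Shaltiel--Viola~\cite{SV08,GSV18}, I would show that for a large subset of useful $m$ there is a set $B(m) \subseteq [n]$ of bounded size $b$, a value $v(m) \in \{0,1\}^{B(m)}$, and a heavy event under which $z|_{B(m)}$ is deterministically forced (so that $w|_{B(m)} = v(m)$) while $z|_{[n] \setminus B(m)}$ remains distributed as $\BSC^{n-b}_{\half-2\eps}$. The obstacle is to pigeonhole $(B(m),v(m))$ to a common $(B,v)$ across many useful messages: for small-rate codes with $n \gg 2^k$, even $b=1$ allows more choices of $B$ than there are messages, so the direct pigeonhole used in~\cite{GSV18} fails.

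\textbf{Third step: trench warfare to fix $B$ despite small rate.} Following the sketch in Section~\ref{sec:technique}, I analyze the min-entropies of the ordered coordinates $Y_1(m) < \cdots < Y_b(m)$ of $B(m)$ under a uniformly chosen useful $m$. If every $Y_j$ has min-entropy at least $k^{0.99}$, then $M$ is spread enough that Lemma~\ref{lem:high-ent-guess} prevents any deterministic $D:[k] \to \B$ from matching $m_i$ with probability above $0.5001$ over $m \from M$ and $i \from [k]$; applying this to $D(i)=\Dec^{\mathsf{orc}}(i,j')$ for each fixed uniform oracle $\mathsf{orc}$, and then averaging over $z \from \BSC^n_{\half}$, supplies item~2 for a typical $m' \in M$ and $i' \in [k]\setminus B$. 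Otherwise some $Y_j$ is concentrated, and I restrict $M$ to its most likely value, shrinking $b$ by one while losing only a factor controlled by the missing min-entropy. After at most $b = O(1)$ iterations the process ends in the high-entropy case or with $B$ fully fixed, and a final averaging over the surviving useful messages produces $(m',i')$ satisfying item~1 (via the decoding property of $j'$ together with a Chernoff bound on $\BSC$, valid when $\eps \gg 1/\sqrt{n}$) and item~2 simultaneously. The slack $200\delta$ absorbs the losses from the decoding definition, Markov's inequality, and the two averaging arguments; the bookkeeping goes through provided $\eps > k^{-\nu}$ for a sufficiently small universal $\nu>0$, which is exactly the hypothesis of the lemma.
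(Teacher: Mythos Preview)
Your high-level plan matches the paper's approach closely: fix a good list index $j'$ by averaging, pass to ``useful'' messages, use the conditioning lemma of \cite{S20,GSV18} to obtain $(B(m),v(m))$, and then run the trench-warfare iteration on the coordinates of $B(m)$. The way you derive the two items at the end via Markov/averaging is also exactly how the paper finishes (it packages the middle part as a separate Lemma~\ref{lem:wbv} and then applies Markov's inequality).

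However, several of your quantitative claims are off in ways that would break the argument as written:
\begin{itemize}
\item The size of the fixed set is \emph{not} $b=O(1)$. The conditioning lemma gives $|B(m)|=b=O\!\left(\frac{q\log L}{\delta}\right)$, because the event ``$j'$ is decoding for $m$'' has probability only $\Omega(1/L)$ (not a constant, as you wrote), so $a=\Theta(\log L)$ in Lemma~\ref{lem:fixed}. The number of trench-warfare iterations is therefore up to $b$, and the final bookkeeping must verify that after $b$ rounds the surviving message set still has size at least $2^{k-k^{0.99}}$. This is exactly where the hypotheses $\eps,\delta\ge k^{-\nu}$ and $q\le \log(1/\delta)/\eps^2$ are used.
\item You conflate two different min-entropy thresholds. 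The threshold on the coordinate variables $Y_j(m)=h_j(m)$ is \emph{not} $k^{0.99}$; it is $t=11+\log b+\log q$. With this $t$, a union bound over the $q$ queries and the $b$ coordinates shows the decoder hits $B(m)$ with probability at most $qb\,2^{-t}\le 2^{-11}$, so the oracle is effectively just uniform $z$ and then Lemma~\ref{lem:high-ent-guess} (which needs the \emph{message} distribution to be uniform on a set of size $\ge 2^{k-k^{0.99}}$) applies. If you used threshold $k^{0.99}$ for the $Y_j$'s, the low-entropy branch could cost you up to a $2^{k^{0.99}}$ factor per iteration, which over $b$ iterations is far too much.
\item In your first step, ``constant probability over $z$'' should be $\ge 1/(6L)$; this is what forces $a=\Theta(\log L)$ above and hence the correct value of $b$.
\end{itemize}
Once these parameters are set correctly, your outline coincides with the paper's proof; but as stated, the ``$b=O(1)$'' claim and the wrong entropy threshold are genuine gaps, not just slack that the $200\delta$ can absorb.
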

\begin{proof}[Proof of Theorem~\ref{thm:main:code:large eps}]
Consider a $(\half-\eps,L,q,\delta)$-LLD for $\Enc: \B^k \ar \B^n$.
By assumption that $\delta < \frac{1} {3}$, we can further assume that $\delta < 0.0002$, since, if otherwise, we can get to the desired error probability by amplification, at the loss of only a constant factor in the query complexity. We may further assume that $q \leq \frac {\log(1/\delta)} {\epsilon^2}$, otherwise we are done.

Applying Proposition~\ref{prop:LLD to ARLLD}, we get that there exists a $(\half-\eps,L,q,10\delta)$-ARLLD for $\Enc$.
Applying Lemma~\ref{lem:exists m} to this decoder, we can see that $\Dec$, when given oracle access to $\Fix_{B \ar v}(\Enc(m') \oplus z)$ and inputs $i',j'$ makes $q$ queries and outputs $m_{i'}$, (1) with probability at least $1 - 2000 \delta > 0.6$ if $z \from \BSC^n_{\half-\eps}$ and (2) with probability at most $0.51$ if $z \from \BSC^n_{\half}$.
Finally, viewing $\Dec^{\Fix_{B \ar v}(\Enc(m') \oplus z)}(i',j')$ as a function on at most $q$ bits of $z$ corresponding to the queries of $\Dec$ which are not in $B$, we can apply Lemma~\ref{lem:q queries}, completing the proof of Theorem~\ref{thm:main:code:large eps}.
\end{proof}

We will prove Lemma \ref{lem:exists m} using the probabilistic method. The main technical part of the proof is the following lemma.
Loosely speaking, the lemma says that if $\Dec$ is an ARLLD, then it can be used to distinguish between (versions of) $\BSC^n_{p=\half-2\eps}$ and $\BSC^n_{p=\half}$, in the following sense:
Recall that the experiment $\RNSY_p$ (which is the experiment on which the success of an ARLLD is measured) consists of choosing a uniform message $m \from \B^k$, a noise vector $z \from \BSC^n_p$, and setting $w=\Enc(m) \oplus z$.
We will now the consider the following modification of $\RNSY_p$:
\begin{itemize}
\item Rather than choosing $m \from \B^k$ , we choose $m$ from some specific distribution $\MU$. (The lemma claims that such a distribution $\MU$ exists).
\item The string $w$, will be modified in a set of indices $B \subseteq [n]$ to some value $v\in \B^B$. More precisely, we set $w=\Fix_{B(m) \ar v(m)}(\Enc(m) \oplus z)$ where the existence of suitable functions $B,v$ is claimed in the lemma.
\end{itemize}
We will refer to this modified experiment as $\WBV$ (to denote that $w$ is modified using $B$ and $v$).
Loosely speaking the lemma shows that $\Dec^w$ (where $w$ is chosen as explained above) distinguishes the case that $p=\half-2\eps$ from $p=\half$.
The precise statement of the lemma appears next.

\begin{lemma}
\label{lem:wbv}
There exists a universal constant $\nu >0$ such that the following holds for any $L \leq 2^{k^{0.9}}$, $\epsilon \in (k^{-\nu}, \frac 1 4), \delta \in (  k^{-\nu}, \frac 13)$, and $ q \leq \frac {\log(1/\delta)} {\epsilon^2}$.
Let  $\Dec$ be a $(\half-\eps,L,q,\delta)$-ARLLD for $\Enc:\B^k \ar \B^n$.
Then there exist:
\begin{itemize}
\item $j' \in [L]$,
\item Functions $B,v$ that given $m \in \B^k$ produce a set $B(m) \subseteq [n]$
    and $v(m) \in \B^{B(m)}$, respectively,
\item A distribution $\MU$ over $\B^k$,
\end{itemize}
such that if we use $\WBV_{p}$ to denote the experiment in which:
\begin{itemize}
\item A message $m \in \B^k$ is chosen by $m \from \MU$.
\item A noise string $z$ is chosen by $z \from \BSC^n_p$.
\item A word $w$ is obtained by $\Fix_{B(m) \ar v(m)}(\Enc(m) \oplus z)$.
\end{itemize}
We have that:
\begin{enumerate}
\item $\Pr_{(m,z,w) \from \WBV_{\half-2\eps},i \from [k]}[\Dec^w(i,j')=m_i] \ge 1-2\delta$.
\item $\Pr_{(m,z,w) \from \WBV_{\half},i \from [k]}[\Dec^w(i,j')=m_i] \le 0.501$.
\end{enumerate}
\end{lemma}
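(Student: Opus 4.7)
The plan follows the ``conditioning on a good list index'' strategy of Section~\ref{sec:technique}, enhanced with an entropy-based iteration that handles the case when $n$ can be as large as $2^k$. Starting from the ARLLD $\Dec$, I will construct $j'$, functions $B(\cdot),v(\cdot)$, and a distribution $\MU$ over $\B^k$ satisfying both conclusions.

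First, adapting Shaltiel--Viola-style arguments, extract a good list index, a large set of useful messages, and a coordinate-fixing. By the ARLLD definition and averaging over $j\in[L]$, fix $j'\in[L]$ such that, over $(m,z,w)\from \RNSY^{\Enc}_{\half-2\eps}$, the event $E_{j'}=\{\Pr_i[\Dec^w(i,j')=m_i]\ge 1-\delta\}$ has probability at least $1/(3L)$. Call $m$ \emph{useful} if $\Pr_z[E_{j'}\mid m]\ge 1/(6L)$; by Markov, the useful set $\mathcal{U}$ satisfies $|\mathcal{U}|\ge 2^k/(6L)\ge 2^{k-k^{0.9}-3}$. For each useful $m$, the conditional distribution of $z$ given $E_{j'}$ has density at most $6L$ relative to $\BSC^n_{\half-2\eps}$, hence KL-divergence at most $\log(6L)$; the iterative flattening procedure of~\cite{SV08,GSV18} extracts a set $B(m)\subseteq[n]$ of size $b=O(\log(L)/\eps^2)$ and a string $v(m)\in\B^{B(m)}$ such that after further conditioning on $z|_{B(m)}=v(m)$, the coordinates $z|_{[n]\setminus B(m)}$ are statistically close to $\BSC^{n-b}_{\half-2\eps}$.

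The main new step controls the dependence of $B(m)$ on $m$, which is nontrivial when $n$ is as large as $2^k$. Order $B(m)=\{Y_1(m)<\cdots<Y_b(m)\}$ and iterate the following cleanup: while some $Y_t$ has $\Hi\le 2\log k$ conditioned on $(Y_1,\ldots,Y_{t-1})$ under the uniform distribution on the current useful set, restrict the useful set to messages on which $Y_t$ takes its mode value. Each iteration permanently fixes one coordinate of $B(\cdot)$ and loses at most $2\log k$ bits of min-entropy from $\mathcal{U}$. Choosing $\nu>0$ small enough that $b\cdot 2\log k \le k^{0.99}-k^{0.9}-O(1)$, which holds because $b\le O(\log(L)/\eps^2)$ and the hypotheses give $\log L\le k^{0.9}$ and $1/\eps^2\le k^{2\nu}$, we end with a useful set $\mathcal{U}'$ of size $\ge 2^{k-k^{0.99}}$. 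The process terminates in one of two states: either (a) $B(m)$ equals some common set $B$ for all $m\in\mathcal{U}'$, or (b) every remaining $Y_t$ has $\Hi>2\log k$ over $\mathcal{U}'$.

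Finally, take $\MU$ to be the uniform distribution on $\mathcal{U}'$. Conclusion~(1) is immediate from usefulness: the conditional distribution of $z$ under $E_{j'}$ and $z|_{B(m)}=v(m)$ is close to $\BSC^{n-b}_{\half-2\eps}$ on the free coordinates, so $\Pr[\Dec^w(i,j')=m_i]\ge 1-2\delta$. For conclusion~(2), in case~(a) the bits of $w$ queried outside $B$ are uniform and independent of $m$, so $\Dec^w(i,j')$ is a deterministic function of $i$ and $v(m)\in\B^{|B|}$ together with randomness independent of $m$; Lemma~\ref{lem:high-ent-guess} applied to the uniform distribution on $\mathcal{U}'$ bounds its success by $0.5001<0.501$. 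In case~(b), the high min-entropy implies $\Pr_{m\from \mathcal{U}'}[\ell\in B(m)]\le b/k^2$ for any fixed $\ell\in[n]$, so a union bound over the at most $q$ adaptive queries of $\Dec$ shows that they miss $B(m)$ entirely with probability $1-o(1)$, reducing to case~(a). The main obstacle is the entropy accounting in the cleanup step: the $\approx k^{0.9}$ bits of initial slack must absorb $b\cdot \polylog(k)$ bits of fixing and still leave $\ge k-k^{0.99}$ bits of min-entropy, which is precisely why the hypothesis $\eps\ge k^{-\nu}$ is needed for a suitably small universal $\nu>0$.
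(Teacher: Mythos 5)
Your overall strategy matches the paper's proof of Lemma~\ref{lem:wbv}: extract a good list index $j'$ by averaging, define useful messages, apply a coordinate-fixing lemma to produce $B(m),v(m)$, and then iteratively ``clean up'' low min-entropy coordinates of $B(\cdot)$ before a two-case analysis (all coordinates fixed versus all remaining coordinates high-entropy). This is the same structure as the paper's Claim~\ref{clm:wbv 1st item}, Lemma~\ref{lem:advice-hidden}, Claim~\ref{clm:cases}, Claim~\ref{clm:high entropy}, and Claim~\ref{clm:low entropy}.

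However, there is a genuine gap in your cleanup step. You restrict the useful set so that ``$Y_t$ takes its mode value,'' which only forces all surviving messages to \emph{agree on the coordinate} $h_j(m)=h'$; it does not force them to agree on the \emph{value} $(v(m))(h')$. Consequently, when you reach case~(a) (all of $B(m)$ is a common set $B$), the string $v(m)\in\B^{|B|}$ can still vary with $m$, so $\Dec^w(i,j')$ is a function of $i$ \emph{and} of $v(m)$. That is precisely the situation in which Lemma~\ref{lem:high-ent-guess} does \emph{not} apply: the lemma requires a predictor $D(i)$ that depends on $i$ alone, not on $m$ through the side information $v(m)$. Your own phrasing acknowledges that ``$\Dec^w(i,j')$ is a deterministic function of $i$ and $v(m)$'' but then applies Lemma~\ref{lem:high-ent-guess} as if this side channel were absent. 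The fix is exactly what the paper's Claim~\ref{clm:low entropy} does: when fixing a low-entropy coordinate $h_j$, also fix the value $v'$ it takes (losing one more bit of support). Then at the end of the iteration both $B$ and $v$ are constant on the support of $\MU$, so $w$ conveys no information about $m$ and Lemma~\ref{lem:high-ent-guess} applies.

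Two smaller issues are worth noting. First, the size bound $b = O(\log(L)/\eps^2)$ you state for the coordinate-fixing step is not what the lemma used here (the paper's Lemma~\ref{lem:fixed}, from \cite{S20}) gives; that lemma gives $b = O(qa/\eta)$, which with $a\approx\log L$ and $\eta=\delta$ yields $b=O(q\log L/\delta)$. Your claimed bound has no $q$ or $\delta$ dependence, and is not justified by the cited references in the form you invoke (a $q$-query indistinguishability conclusion, not a statistical-distance one). The entropy accounting still goes through with the correct $b$ under the hypotheses, but the constant $\nu$ needs to absorb the extra $\poly(1/\delta,1/\eps)$ factors, as in the paper's calculation. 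Second, the case~(b) argument (``union bound over adaptive queries'') needs the standard but non-trivial observation that for fixed $z,i$, the query pattern of $D^z(i)$ determines whether $D^{\Fix_{B(m)\to v(m)}(z)}(i)$ hits $B(m)$, which is how the paper's Claim~\ref{clm:high entropy} handles adaptivity; stating a bare union bound elides this.
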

Lemma \ref{lem:exists m} follows from Lemma \ref{lem:wbv} by a Markov argument as follows.
\begin{proof}[Proof of Lemma~\ref{lem:exists m}]
By applying Markov's inequality to the first and second conditions in Lemma~\ref{lem:wbv}, we have:
\begin{align*}
\Pr_{m \from \MU, i \from [k]} \left[\Pr_{z\from \BSC^n_{\half - 2\epsilon}}[\Dec^{\Fix_{B(m) \ar v(m)}(\Enc(m) \oplus z)}(i,j')\neq m_i] > 200\delta\right] < \frac{1} {100},
\end{align*}
and
\begin{align*}
\Pr_{m \from \MU, i \from [k]} \left[\Pr_{z\from \BSC^n_{\half}}[\Dec^{\Fix_{B(m) \ar v(m)}(\Enc(m) \oplus z)}(i,j')= m_i] > 0.51\right] < \frac{0.501}{0.51} < 0.985.
\end{align*}
Hence, by the union bound, it follows that there exist $m' \in \{0,1\}^k, i' \in [k]$ such that:
\begin{align*}
\Pr_{z\from \BSC^n_{\half - 2\epsilon}}[\Dec^{\Fix_{B(m') \ar v(m')}(\Enc(m') \oplus z)}(i',j') = m'_{i'}] &\ge 1 - 200 \delta,\\
\Pr_{z\from \BSC^n_{\half}}[\Dec^{\Fix_{B(m') \ar v(m')}(\Enc(m') \oplus z)}(i',j')= m'_{i'}] &\le 0.51.
\end{align*}
Lemma~\ref{lem:exists m} follows.
\end{proof}

\subsubsection{Proof of the first item of Lemma \ref{lem:wbv} }

We are given a $(\half-\eps,L,q,\delta)$-ARLLD $\Dec$ for $\Enc:\B^k \ar \B^n$.
To avoid clutter, we will omit $\Enc$ in $\RNSY^{\Enc}_p$ in this section. We start with a couple of useful definitions.

\begin{definition}
We say that an element $j \in [L]$ is \remph{decoding} for $m,w$ if \[ \Pr_{i \from [k]}[\Dec^w(i,j)=m_i] \ge 1-\delta. \]
\end{definition}

The definition of ARLLD says that with probability at least $1/3$ over choosing $(m,z,w) \from \RNSY_{\half-2\eps}$, there exists a $j \in [L]$ that is decoding for $m,w$. By averaging over the $L$ choices of $j$, it follows that, there exists a $j' \in [L]$ such that  with probability at least $1/(3L)$ over choosing $(m,z,w) \from \RNSY_{\half-2\eps}$, this fixed $j'$ is decoding for $m,w$. This is stated in the next claim.

\begin{claim}
\label{clm:j'}
There exists $j' \in [L]$ such that
with probability at least $1/(3L)$ over choosing $(m,z,w) \from \RNSY_{\half-2\eps}$, $j'$ is decoding for $m,w$.
\end{claim}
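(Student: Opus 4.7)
The plan is to derive Claim~\ref{clm:j'} by a one-line averaging argument starting directly from the defining guarantee of an ARLLD (Definition~\ref{dfn:ARLLD}). That definition promises that
\[ \Pr_{(m,z,w) \from \RNSY_{\half-2\eps}}\bigl[\,\exists j \in [L] : j \text{ is decoding for } m,w\,\bigr] \ge \tfrac{1}{3}, \]
since the event ``$\Pr_{i \from [k]}[\Dec^w(i,j)=m_i]\ge 1-\delta$'' is exactly the event that $j$ is decoding for $m,w$.

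Next, I would apply the union bound in the reverse direction to turn this ``exists'' probability into a sum of per-index probabilities. Concretely, writing $p_j := \Pr_{(m,z,w) \from \RNSY_{\half-2\eps}}[\,j \text{ is decoding for } m,w\,]$, the union bound gives
\[ \sum_{j \in [L]} p_j \;\ge\; \Pr_{(m,z,w) \from \RNSY_{\half-2\eps}}\bigl[\,\exists j \in [L] : j \text{ is decoding for } m,w\,\bigr] \;\ge\; \tfrac{1}{3}. \]
By pigeonhole (or equivalently, by averaging over the $L$ terms), there must exist some $j' \in [L]$ with $p_{j'} \ge 1/(3L)$, which is exactly the statement of the claim.

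There is no real obstacle here: the claim is a direct consequence of a union bound applied to the ARLLD success guarantee. The only thing to be careful about is that the probability space is the same for every $j$ (namely $\RNSY_{\half-2\eps}$), so the events ``$j$ is decoding for $m,w$'' for different $j$ live in a common sample space, making the union bound applicable.
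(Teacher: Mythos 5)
Your proof is correct and matches the paper's argument: the paper invokes the ARLLD guarantee and then says ``by averaging over the $L$ choices of $j$,'' which is exactly the union-bound-plus-pigeonhole step you spell out in more detail.
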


For the rest of this section, fix an index $j' \in [L]$ for which the above claim holds.

\begin{definition}
We say that a message $m \in \B^k$ is \emph{useful} if
\[ \Pr_{z \from \BSC^n_{\half-2\eps}}[\mbox{$j'$ is decoding for $m,\Enc(m) \oplus z$}] \ge \frac{1}{6L} .\]
\end{definition}

It follows that:

\begin{claim}\label{clm:number-useful}
There are at least $2^k/(6L)$ useful messages.
\end{claim}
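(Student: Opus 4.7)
The plan is a direct averaging (Markov-type) argument from Claim \ref{clm:j'}. First I would observe that in the experiment $(m,z,w) \from \RNSY_{\half-2\eps}$, the message $m$ is uniformly distributed over $\B^k$ and independent of $z \from \BSC^n_{\half-2\eps}$, with $w = \Enc(m)\oplus z$. Define, for each $m \in \B^k$,
\begin{equation*}
p(m) \;=\; \Pr_{z \from \BSC^n_{\half - 2\eps}}\bigl[\text{$j'$ is decoding for $m,\ \Enc(m)\oplus z$}\bigr].
\end{equation*}
By independence of $m$ and $z$, Claim \ref{clm:j'} is the statement $\Exp_{m \from U_k}[p(m)] \ge \tfrac{1}{3L}$, while ``$m$ is useful'' is exactly the event $p(m) \ge \tfrac{1}{6L}$.

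Next I would split this expectation according to whether $m$ is useful. Let $U \subseteq \B^k$ be the set of useful messages, and bound $p(m) \le 1$ on $U$ and $p(m) < \tfrac{1}{6L}$ on its complement:
\begin{equation*}
\frac{1}{3L} \;\le\; \Exp_{m \from U_k}[p(m)] \;\le\; \frac{|U|}{2^k}\cdot 1 \;+\; 1\cdot \frac{1}{6L}.
\end{equation*}
Rearranging yields $|U| \ge 2^k/(6L)$, which is exactly the claim.

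There is essentially no obstacle in this step; it is a one-line averaging inequality, and the only point to verify is the independence of $m$ and $z$ under $\RNSY_{\half-2\eps}$, which makes the joint probability in Claim \ref{clm:j'} equal to $\Exp_m[p(m)]$. The purpose of the claim in the larger proof is structural: it guarantees that the set of messages on which the fixed list-index $j'$ has a non-negligible chance of being ``decoding'' is exponentially large (at least $2^k/(6L)$), which is what will later allow the pigeonhole and min-entropy arguments to select a single ``fixing set'' $B$ good for many messages simultaneously.
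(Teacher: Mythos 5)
Your proof is correct and is essentially the same averaging argument the paper uses; the paper phrases it as a proof by contradiction (if fewer than $2^k/(6L)$ useful messages, then $\Pr[\text{$j'$ decoding}] < \tfrac{1}{6L} + \tfrac{1}{6L}$, contradicting Claim \ref{clm:j'}), while you compute the same bound directly by splitting the expectation. The two are logically identical.
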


\begin{proof}
Otherwise, when choosing $m \from \{0,1\}^k ,z \from \BSC^n_{\half-2\eps}$ and setting $w=\Enc(m) \oplus z$ (as is done for $(m,z,w) \from \RNSY_{\half-2\eps}$):
\begin{align*}
\Pr[\mbox{$j'$ is decoding for $m,w$}] &\le \Pr[\mbox{$m$ is useful}] + \Pr[\mbox{$j'$ is decoding for $m,w$}|\mbox{$m$ is not useful}] \\
&<  \frac{1}{6L} + \frac{1}{6L} = \frac{1}{3L},
\end{align*}
which contradicts Claim \ref{clm:j'}.
\end{proof}

\begin{definition}
For a random variable $W$ over $\B^n$, a set $B \subseteq [n]$ and $v \in \B^{B}$, such that $\Pr[W_{h}=v(h)~~\forall h \in B] > 0$, we define the probability distribution $\Cond_{B \ar v}(W)$ to be $(W|W_{h}=v(h)~~\forall h \in B)$.
\end{definition}

\begin{remark}
For a random variable $W$ over $\B^n$, it is important to distinguish $\Fix_{B \ar v}(W)$ from $\Cond_{B \ar v}(W)$. The former means that we sample $w \from W$ and replace the content of $w$ in the indices in $B$ by the corresponding values taken from $v$. The latter is only defined if $W$ is a random variable for which the event $\set{W_h=v(h)~~\forall h \in B}$ can occur, and for such a variable, $\Cond_{B \ar v}(W)$ is obtained by \emph{conditioning} the random variable $W$ on the event $\set{W_h=v(h)~~\forall h \in B}$. In particular, this conditioning may mean that when restricting $\Cond_{B \ar v}(W)$ and $W$ to indices that are not in $B$, we may get different distributions. This is in contrast to $\Fix_{B \ar v}(W)$ where by definition, restricting $\Fix_{B \ar v}(W)$ and $W$ to indices that are not in $B$, gives the same distribution.

A useful observation is that if $W$ is a sequence of $n$ independent bit variables, then for every $B,v$, $\Cond_{B \ar v}(W)=\Fix_{B \ar v}(W)$.
\end{remark}

We shall use the following lemma from \cite{S20}, which improves a similar lemma (with more conditions) that was proven in \cite{GSV18}.

\begin{lemma}[\cite{S20}]
\label{lem:fixed}
Let $W$ be a probability distribution over $\B^n$, let $A \subseteq \B^n$ be an event such that $\Pr[W \in A] \ge 2^{-a}$, and let $W'=(W|W \in A)$. For every $\eta>0$, there exist a set $B \subseteq [n]$ of size $b=O(qa/\eta)$, and $v \in \B^{B}$ such that for every oracle procedure $D^{\cdot}$ that makes $q$ queries:
\[ |\Pr[D^{\Cond_{B \ar v}(W)}=1] - \Pr[D^{\Cond_{B \ar v}(W')}=1]| \le \eta. \]
\end{lemma}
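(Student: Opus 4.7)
The plan is to build $(B,v)$ greedily, using a Kullback--Leibler potential. Define
\[
\Phi(B,v) := \mathrm{KL}\bigl(\Cond_{B \ar v}(W') \,\big\|\, \Cond_{B \ar v}(W)\bigr).
\]
Since $W' = (W \mid W \in A)$, a direct calculation gives that initially $\Phi = \mathrm{KL}(W' \| W) = \log(1/\Pr[W \in A]) \le a$. I would run the following loop: while there exists a $q$-query oracle procedure distinguishing $\Cond_{B \ar v}(W)$ from $\Cond_{B \ar v}(W')$ with advantage more than $\eta$, pick such a witness $D^{(\cdot)}$ and apply a hybrid argument over its $q$ adaptive queries. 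The hybrid expresses the total advantage as a sum of $q$ terms, each being the expected statistical distance, averaged over the partial transcript, between the marginal response distributions of $\Cond_{B \ar v}(W)$ and $\Cond_{B \ar v}(W')$ on the $t$-th queried coordinate. Hence some term is at least $\eta/q$, and after averaging over the partial transcript against the reference measure this exposes a coordinate $i^\star \notin B$ for which the marginals of the two conditioned distributions differ by $\Omega(\eta/q)$ in statistical distance. I would then add $i^\star$ to $B$ and sample $v_{i^\star}$ from the marginal of $\Cond_{B \ar v}(W')$ on $i^\star$ (passing to a specific value that is at least as good as the average).

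The analysis uses the chain rule for KL: fixing a new coordinate $i$ with value sampled from the marginal of $\Cond_{B \ar v}(W')$ reduces $\Phi$ in expectation by exactly $\mathrm{KL}((W')_{i} \,\|\, W_{i})$ (taken under the conditioning), which by Pinsker's inequality is at least twice the square of the corresponding statistical distance. Combined with the hybrid step this gives a drop of $\Omega(\eta^2/q^2)$ per iteration, and since $\Phi \ge 0$ and $\Phi(\emptyset) \le a$, the loop terminates after $O(aq^2/\eta^2)$ iterations. At that point no $q$-query oracle procedure has advantage exceeding $\eta$, which is the conclusion of the lemma. To match the sharper $b = O(qa/\eta)$ stated, I would replace the KL-based potential by one that tracks the residual $q$-query distinguishing advantage linearly, avoiding the square loss that Pinsker introduces, and argue that committing to each new coordinate decreases this advantage by $\Omega(\eta/q)$ rather than decreasing its square.

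The main obstacle I anticipate is the passage from the hybrid decomposition -- which only produces a coordinate $i^\star$ together with a distinguished partial transcript under which the conditional response marginals differ -- to a coordinate whose marginal under $\Cond_{B \ar v}$ itself (without further conditioning on the transcript) is noticeably biased, so that committing to a fixed value $v_{i^\star}$ reliably reduces the potential. The standard workaround is to integrate out the partial transcript against the appropriate reference measure, but this averaging is exactly where the quantitative subtlety between $O(aq^2/\eta^2)$ and the stated $O(qa/\eta)$ lives. A further mild complication is that the partial transcript may involve coordinates outside of the current $B$, so one has to keep track of the interplay between the ``outer'' conditioning $(B,v)$ and the ``inner'' conditioning on the prefix of the adversary's interaction; handling this cleanly, without blowing up the iteration count, is the technical heart of the proof.
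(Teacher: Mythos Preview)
The paper does not prove this lemma; it is quoted from \cite{S20} and used as a black box, so there is no in-paper proof to compare against. I will instead comment on your sketch directly.

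Your potential is the right one: for $W'=(W\mid A)$ one has $\mathrm{KL}(W'\,\|\,W)=\log(1/\Pr[W\in A])$, and after further conditioning on $(B,v)$ this equals $-\log\Pr[W\in A\mid W_B=v]$, the log-density of $A$ under the current conditioning. The greedy loop is also the right shape. The gap is in the per-step accounting. You peel off one coordinate at a time via a hybrid over the $q$ adaptive queries and then invoke Pinsker on the marginal, which costs a square and yields $O(aq^2/\eta^2)$; you correctly flag that the hybrid only hands you a coordinate with large \emph{conditional} bias (conditioned on a partial transcript), and turning that into an unconditional single-coordinate bias is precisely where your argument stalls.

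The clean route to $b=O(qa/\eta)$ sidesteps the hybrid entirely. In each round, take a $q$-query distinguisher $D$ with advantage $>\eta$ and commit to an \emph{entire} transcript at once. Writing $p=\Pr[A\mid (B,v)]$ and letting $E$ be the set of transcripts on which $D$ outputs $1$, Bayes gives
\[
\sum_{\tau\in E}\Pr_W[\tau]\Bigl(\frac{\Pr[A\mid \tau]}{p}-1\Bigr)=\Pr_{W'}[E]-\Pr_W[E]>\eta,
\]
so some $\tau\in E$ has $\Pr[A\mid\tau]\ge p(1+\eta)$. Fixing the at most $q$ coordinates named by $\tau$ therefore drops the potential by at least $\log(1+\eta)=\Omega(\eta)$. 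Since the potential is nonnegative and starts at most $a$, the loop halts after $O(a/\eta)$ rounds, giving $|B|=O(qa/\eta)$. This is a direct density calculation and never touches Pinsker, which is exactly how the square is avoided; your speculated ``linear potential'' is unnecessary, because the same log-density potential already drops linearly once you pick $\tau$ to maximise $\Pr[A\mid\tau]$ rather than sampling it from $W'$.
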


We now explain why this lemma is useful. Note that if we start with some distribution $W$ over $\B^n$, then after conditioning on the event $\set{W \in A}$, the bits in the obtained distribution $W'=(W|W \in A)$ may become correlated. The Lemma says that there exist a set $B \subseteq [n]$ and $v \in \B^B$ such that if we \emph{further condition} both $W$ and $W'$ on the event $\set{W_{h}=v(h)~~\forall h \in B}$, to obtain the distributions $\Cond_{B \ar v}(W)$ and $\Cond_{B \ar v}(W')$, then these two distributions are ``similar'' in the sense that a procedure $D$ that makes few oracle calls, cannot significantly distinguish between them.

This is useful because if $W=\BSC^n_p$, then $W$ is a sequence of independent bits, and so, $\Cond_{B \ar v}(W)=\Fix_{B \ar v}(W)$. Namely, a distribution in which the bits in $B$ are fixed, and the bits outside of $B$ are independent and distributed like $\BSC^{n-b}_p$. Loosely speaking, this means that as long as we do not mind to condition on the event $\set{W_{h}=v(h)~~\forall h \in B}$, then in order to understand how $D$ behaves when given oracle to $W'$ it is sufficient to understand how it behaves when given oracle access to $W$.

\begin{definition}
For a message $m$ we denote by $\NSY(m)$ the distribution over $\B^n$ obtained by choosing $z \from \BSC^n_{\frac 1 2 - 2\epsilon}$, and setting $w=\Enc(m) \oplus z$. We use $\NSY'(m)$ to denote the distribution in which $w \from \NSY(m)$ conditioned on the event $\set{\mbox{$j'$ is decoding for $m,w$}}$.
\end{definition}

Using the above Lemma \ref{lem:fixed} we obtain the following.

\begin{claim}
\label{clm:fixed}
For every useful $m \in \B^k$ there exist a set $B(m) \subseteq [n]$ of size at most $b=O(q \cdot (\log L)/\delta)$, and $v(m) \in \B^{B(m)}$ such that for every $i \in [k]$:
\[ |\Pr[\Dec^{\Cond_{B(m) \ar v(m)}(\NSY(m))}(i,j')=1] - \Pr[\Dec^{\Cond_{B(m) \ar v(m)}(\NSY'(m))}(i,j')=1]| \le \delta. \]
\end{claim}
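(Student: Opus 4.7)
The plan is to obtain Claim \ref{clm:fixed} as a direct application of Lemma \ref{lem:fixed}, with the right choice of the distribution, the event, and the query-bounded procedure. First I would fix a useful message $m \in \B^k$ and instantiate the lemma with $W = \NSY(m)$, and with $A \subseteq \B^n$ being the event $\set{w : \mbox{$j'$ is decoding for $m, w$}}$. By the very definition of ``useful,'' we have $\Pr_{w \from W}[w \in A] \ge \frac{1}{6L}$, so one may take $a = \log_2(6L) = O(\log L)$ in Lemma \ref{lem:fixed}. Note also that by construction $W' = (W \mid W \in A)$ coincides with $\NSY'(m)$, so the two conditioned distributions appearing in Lemma \ref{lem:fixed} are exactly $\Cond_{B \ar v}(\NSY(m))$ and $\Cond_{B \ar v}(\NSY'(m))$.

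Next I would set the distinguishing tolerance in the lemma to $\eta = \delta$. The lemma then produces a set $B(m) \subseteq [n]$ of size $b = O(qa/\eta) = O(q \cdot (\log L)/\delta)$, together with a string $v(m) \in \B^{B(m)}$, with the guarantee that every oracle procedure making at most $q$ queries distinguishes the two conditioned distributions with advantage at most $\delta$. For each fixed $i \in [k]$, consider the oracle procedure $D_i^{(\cdot)}$ that on oracle access to a word $u \in \B^n$ simply outputs $\Dec^u(i,j')$. Since $\Dec$ is a $(\half-\eps,L,q,\delta)$-ARLLD, the procedure $D_i$ makes at most $q$ queries to its oracle, so the conclusion of Lemma \ref{lem:fixed} applies to $D_i$ and yields exactly the inequality stated in Claim \ref{clm:fixed}.

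There is no real obstacle beyond matching parameters: the work of relating the conditioned noise distribution on $\NSY'(m)$ to a fixing $\Fix_{B \ar v}$ on the i.i.d.\ distribution $\BSC^n_{\half-2\eps}$ is deferred to how we will \emph{use} this claim (via the earlier remark that $\Cond_{B \ar v}(W) = \Fix_{B \ar v}(W)$ whenever $W$ has independent coordinates). The only small bookkeeping point is that Lemma \ref{lem:fixed} is stated for a single procedure $D$, whereas we want the conclusion to hold for all $i \in [k]$ simultaneously; this is not an issue because the set $B(m)$ and string $v(m)$ produced by the lemma depend only on $W$, $A$, the query bound $q$, and $\eta$, and so can be reused across all the procedures $D_i$.
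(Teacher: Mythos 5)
Your proof is correct and matches the paper's argument essentially verbatim: both instantiate Lemma \ref{lem:fixed} with $W = \NSY(m)$, $A = \set{w : \text{$j'$ is decoding for $m,w$}}$, $a = O(\log L)$ (from usefulness), $\eta = \delta$, and $D = \Dec^{(\cdot)}(i,j')$. One small remark: your closing ``bookkeeping point'' is unnecessary since Lemma \ref{lem:fixed} already quantifies universally over all $q$-query procedures inside the scope of the existential over $B,v$, so the uniformity across $i \in [k]$ comes directly from the lemma statement rather than from any additional inspection of its proof.
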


\begin{proof}
Apply Lemma \ref{lem:fixed} with $W$ being the distribution $\NSY(m)$, $A$ being the event that $j'$ is decoding for $m,w$, where $w \leftarrow \NSY(m)$, and
$D = \Dec^{( \cdot )}(i,j')$, for any $i \in [k]$. Note that indeed, under this setting we have that $$\Pr[W \in A] =
\Pr_{z \from \BSC^n_{\half-2\eps}}[\mbox{$j'$ is decoding for $m,\Enc(m) \oplus z$}] \ge \frac{1}{6L},$$ by assumption that $m$ is useful, and $W'= \NSY'(m)$.
\end{proof}

Next observe that by the definition of usefulness, we have that:

\begin{claim}
For every useful $m \in \B^k$,
\[ \Pr_{i \leftarrow [k]}[\Dec^{\Cond_{B(m) \ar v(m)}(\NSY'(m))}(i,j')=m_i] \ge 1-\delta. \]
\end{claim}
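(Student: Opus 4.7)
The plan is to observe that the property ``$j'$ is decoding for $m,w$'' is a deterministic property of the pair $(m,w)$, and therefore it transfers from any distribution to any further conditioning of it. I will first unpack the definition: by construction, $\NSY'(m)$ is obtained from $\NSY(m)$ by conditioning on the event that $j'$ is decoding for $m,w$. Since $m$ is useful, this event has probability at least $1/(6L) > 0$ under $\NSY(m)$, so $\NSY'(m)$ is well-defined, and for every $w$ in its support we have, by definition of ``decoding'',
\[ \Pr_{i \leftarrow [k]}[\Dec^w(i,j') = m_i] \geq 1 - \delta. \]

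Next I would note that $\Cond_{B(m) \ar v(m)}(\NSY'(m))$ is by definition $\NSY'(m)$ conditioned on the event $\{w_h = v(m)(h) \text{ for all } h \in B(m)\}$, and therefore its support is contained in the support of $\NSY'(m)$. Hence the displayed inequality above holds pointwise for every $w$ in the support of $\Cond_{B(m) \ar v(m)}(\NSY'(m))$. Taking expectation over $w$ drawn from $\Cond_{B(m) \ar v(m)}(\NSY'(m))$ (which is how the oracle on the left-hand side of the claim is to be interpreted) and using linearity of expectation, the bound $\Pr_{i \leftarrow [k]}[\Dec^{w}(i,j')=m_i] \geq 1-\delta$ is preserved, yielding the conclusion.

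There is essentially no obstacle here: the only thing to verify is that the conditioning event $\{w_h = v(m)(h) : h \in B(m)\}$ has positive probability under $\NSY'(m)$, so that $\Cond_{B(m) \ar v(m)}(\NSY'(m))$ is well-defined. This follows because $v(m)$ was produced by Lemma~\ref{lem:fixed} applied with $W = \NSY(m)$ and $W' = \NSY'(m)$, and the statement of that lemma implicitly requires (and is used with) a setting where the conditioning is meaningful for both $W$ and $W'$ (otherwise the distributions $\Cond_{B \ar v}(W)$ and $\Cond_{B \ar v}(W')$ in its conclusion would not be defined). Thus the proof consists only of tracing through definitions.
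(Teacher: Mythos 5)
Your proof is correct and takes essentially the same route as the paper, which states this claim without an explicit proof and treats it as immediate from definitions. You correctly identify that the decoding property is a pointwise property of $(m,w)$ satisfied by every $w$ in the support of $\NSY'(m)$ (by construction of $\NSY'(m)$ as a conditioning of $\NSY(m)$ on exactly that event, which is nonvacuous because $m$ is useful), that further conditioning via $\Cond_{B(m)\ar v(m)}$ only shrinks the support, and that averaging over $w$ then preserves the bound; the side remark on well-definedness via Lemma~\ref{lem:fixed} is also right.
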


By Claim \ref{clm:fixed} for every fixed $i$, $\Dec^{(\cdot)}(i,j')$ cannot distinguish between the oracles in Claim \ref{clm:fixed} with advantage larger than $\delta$. It follows that it cannot do this when $i \from [k]$ is chosen at random, which gives:

\begin{claim}\label{clm:good decoding for useful m}
For every useful $m \in \B^k$,
\[ \Pr_{i \leftarrow [k]}[\Dec^{\Cond_{B(m) \ar v(m)}(\NSY(m))}(i,j')=m_i] \ge 1-2 \cdot \delta. \]
\end{claim}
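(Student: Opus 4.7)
The plan is to combine the two preceding claims via the triangle inequality, after first upgrading Claim~\ref{clm:fixed} so that the ``1'' event in its statement becomes the event ``$\Dec$ outputs $m_i$'' instead. This upgrade is essentially free, because the underlying Lemma~\ref{lem:fixed} applies to an arbitrary $q$-query oracle procedure, not just to $\Dec^{(\cdot)}(i,j')$ with output compared to $1$.

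Concretely, I would fix a useful $m \in \B^k$, take the set $B(m)$ and the string $v(m) \in \B^{B(m)}$ produced by the single application of Lemma~\ref{lem:fixed} used to prove Claim~\ref{clm:fixed} (these objects depend on $m$ but not on $i$), and then, for every fixed $i \in [k]$, consider the $q$-query oracle procedure
\[
D^{(\cdot)}_{m,i} \;:=\; \mathbf{1}\bigl[\Dec^{(\cdot)}(i,j') = m_i\bigr],
\]
which uses exactly the same queries as $\Dec^{(\cdot)}(i,j')$ and hence still makes at most $q$ queries. Applying Lemma~\ref{lem:fixed} to this procedure (with $W = \NSY(m)$, $A$ the event ``$j'$ is decoding for $m,w$'', $W' = \NSY'(m)$, and the same $B(m),v(m)$) yields, for every fixed $i \in [k]$,
\[
\bigl|\Pr[\Dec^{\Cond_{B(m) \ar v(m)}(\NSY(m))}(i,j') = m_i] - \Pr[\Dec^{\Cond_{B(m) \ar v(m)}(\NSY'(m))}(i,j') = m_i]\bigr| \le \delta.
\]

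Averaging this pointwise inequality over a uniform $i \from [k]$ (the averaged absolute difference is at most the average of the absolute differences) gives
\[
\Pr_{i \from [k]}[\Dec^{\Cond_{B(m) \ar v(m)}(\NSY(m))}(i,j') = m_i] \;\ge\; \Pr_{i \from [k]}[\Dec^{\Cond_{B(m) \ar v(m)}(\NSY'(m))}(i,j') = m_i] - \delta.
\]
By the claim immediately preceding the statement, the right-hand probability is at least $1 - \delta$ for every useful $m$, so the left-hand side is at least $1 - 2\delta$, as required.

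There is no real obstacle here: the argument is a direct composition of Claim~\ref{clm:fixed} (with the indicator procedure plugged into Lemma~\ref{lem:fixed}) and the preceding claim about $\NSY'(m)$. The only subtle point is to notice that $B(m)$ and $v(m)$ need not be re-chosen per index~$i$; the single pair obtained in Claim~\ref{clm:fixed} simultaneously works for every $q$-query procedure that queries the oracle, including each of the $k$ procedures $D_{m,i}$ above.
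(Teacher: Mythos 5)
Your proof is correct and takes essentially the same approach as the paper. The paper reaches the needed per-$i$ bound directly from Claim~\ref{clm:fixed} by the observation that $\Dec$ has a single-bit output (so bounding the difference in $\Pr[\cdot = 1]$ automatically bounds the difference in $\Pr[\cdot = m_i]$ for whichever fixed bit $m_i$ is), whereas you re-invoke Lemma~\ref{lem:fixed} on the indicator procedure $\mathbf{1}[\Dec^{(\cdot)}(i,j') = m_i]$; these two routes are equivalent and yield the identical quantitative bound.
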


Moreover, by definition $\NSY(m)$ is composed of $n$ independent bit random variables, and so, \[ \Cond_{B(m) \ar v(m)}(\NSY(m))=\Fix_{B(m) \ar v(m)}(\NSY(m)).\]
As Claim~\ref{clm:good decoding for useful m} is true for every useful $m$, it is also true for every probability distribution $\MU$ over useful messages $m$. This is stated below.

\begin{claim}
For any distribution $\MU$ over useful messages,
\[ \Pr_{m \from \MU,i \from [k]}[\Dec^{\Fix_{B(m) \ar v(m)}(\NSY(m))}(i,j')=m_i] \ge 1-2 \cdot \delta. \]
\end{claim}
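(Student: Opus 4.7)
The plan is to deduce this claim directly from Claim~\ref{clm:good decoding for useful m} (the per-message guarantee for useful $m$) via two simple observations: (i) $\NSY(m)$ is a product distribution over $\B^n$, so the $\Cond$ operator coincides with $\Fix$; and (ii) a pointwise inequality valid for every useful $m$ lifts, by linearity of expectation, to an inequality in expectation under any distribution $\MU$ supported on useful messages.

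First, I would invoke Claim~\ref{clm:good decoding for useful m}: for every useful $m \in \B^k$,
\[ \Pr_{i \from [k]}[\Dec^{\Cond_{B(m) \ar v(m)}(\NSY(m))}(i,j')=m_i] \ge 1 - 2\delta. \]
Since $\Enc(m)$ is a fixed string and $\BSC^n_{\half-2\eps}$ is a product of $n$ independent Bernoulli bits, $\NSY(m) = \Enc(m) \oplus \BSC^n_{\half-2\eps}$ is itself a product distribution on $\B^n$. For any product distribution $W$ and any compatible $B, v$, conditioning on $\set{W_h = v(h)~\forall h \in B}$ leaves the coordinates outside $B$ independently distributed (exactly as in $W$) while pinning those in $B$ to $v$, which is precisely the effect of $\Fix_{B \ar v}$. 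This is exactly the ``useful observation'' flagged in the remark preceding the present claim, so $\Cond_{B(m) \ar v(m)}(\NSY(m)) = \Fix_{B(m) \ar v(m)}(\NSY(m))$ and the display above becomes
\[ \Pr_{i \from [k]}[\Dec^{\Fix_{B(m) \ar v(m)}(\NSY(m))}(i,j')=m_i] \ge 1 - 2\delta. \]

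Finally, I would average this pointwise bound over $m \from \MU$. Because $\MU$ is supported on useful messages, the bound applies for every $m$ in its support, so by linearity of expectation,
\[ \Pr_{m \from \MU,\, i \from [k]}[\Dec^{\Fix_{B(m) \ar v(m)}(\NSY(m))}(i,j')=m_i] = \Exp_{m \from \MU}\!\left[\Pr_{i \from [k]}[\Dec^{\Fix_{B(m) \ar v(m)}(\NSY(m))}(i,j')=m_i]\right] \ge 1 - 2\delta, \]
which is exactly the stated conclusion. There is no substantive obstacle in this step; the real work was done in Claim~\ref{clm:fixed} (extracting $B(m), v(m)$ via Lemma~\ref{lem:fixed}) and in Claim~\ref{clm:good decoding for useful m} (passing from $\NSY'(m)$ back to $\NSY(m)$). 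The present claim is a bookkeeping statement that converts conditioning into fixing (using that $\NSY(m)$ is a product distribution) and then averages the resulting per-$m$ bound over $\MU$.
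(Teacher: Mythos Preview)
Your proof is correct and follows exactly the same approach as the paper: observe that $\NSY(m)$ is a product of independent bits so $\Cond_{B(m)\ar v(m)}(\NSY(m))=\Fix_{B(m)\ar v(m)}(\NSY(m))$, then average the per-message bound of Claim~\ref{clm:good decoding for useful m} over $m\from\MU$. The paper states precisely these two observations in the paragraph immediately preceding the claim.
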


Note that for any choice of distribution $\MU$, the experiment in which we choose $(m,z,w) \from \WBV_{\half-2\eps}$ and consider the pair $(m,w)$ is by definition identical to the experiment in which we choose $m \from \MU$ and set $w=\Fix_{B(m) \ar v(m)}(\NSY(m))$.

It follows that for our choices of $j',B(\cdot),v(\cdot)$, every distribution $\MU$ over useful messages satisfies the first item of Lemma \ref{lem:wbv}. This is summarized in the claim below.

\begin{claim}
\label{clm:wbv 1st item}
There exist:
\begin{itemize}
\item $j' \in [L]$,
\item Functions $B,v$ that given $m \in \B^k$ produce a set $B(m) \subseteq [n]$ of size at most $b=O(\frac{q \cdot \log L}{\delta})$ and $v(m) \in \B^{B(m)}$, respectively,
\end{itemize}
such that for every distribution $\MU$ over useful messages,
\[ \Pr_{(m,z,w) \from \WBV_{\half-2\eps},i \from [k]}[\Dec^w(i,j')=m_i] \ge 1-2\delta. \]
\end{claim}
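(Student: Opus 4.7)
The plan is to simply collect the work already done and package it as Claim \ref{clm:wbv 1st item}. First, fix $j' \in [L]$ to be the index guaranteed by Claim \ref{clm:j'}. For any useful message $m \in \{0,1\}^k$, apply Lemma \ref{lem:fixed} with $W = \NSY(m)$, with the event $A$ being ``$j'$ is decoding for $m,w$'' (where $w \leftarrow \NSY(m)$), with $a = \log(6L)$ (valid since usefulness of $m$ gives $\Pr[W \in A] \ge 1/(6L)$), and with $\eta = \delta$. This produces a set $B(m) \subseteq [n]$ of size $b = O(q \cdot \log(6L)/\delta) = O(q \log L/\delta)$ and a string $v(m) \in \{0,1\}^{B(m)}$ that satisfy the conclusion of Claim \ref{clm:fixed}. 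For non-useful $m$, define $B(m), v(m)$ arbitrarily (say $B(m) = \emptyset$), since the conclusion only needs to hold for distributions $\MU$ supported on useful messages.

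Next, for any distribution $\MU$ over useful messages, I would invoke Claim \ref{clm:good decoding for useful m}, which yields, for every useful $m$,
\[ \Pr_{i \from [k]}\left[\Dec^{\Cond_{B(m) \ar v(m)}(\NSY(m))}(i,j') = m_i\right] \ge 1 - 2\delta. \]
The key observation here is that $\NSY(m)$ is a product of $n$ independent bit distributions (it is $\Enc(m)$ XORed with $\BSC^n_{\half - 2\eps}$), so conditioning on the values $v(m)$ at coordinates $B(m)$ is identical to fixing those coordinates, i.e., $\Cond_{B(m) \ar v(m)}(\NSY(m)) = \Fix_{B(m) \ar v(m)}(\NSY(m))$. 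Hence the same lower bound holds with the oracle replaced by $\Fix_{B(m) \ar v(m)}(\NSY(m))$.

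Finally, averaging the previous display over $m \from \MU$ preserves the $\ge 1 - 2\delta$ guarantee. To finish, I would match the resulting experiment to $\WBV_{\half - 2\eps}$: by unrolling definitions, sampling $m \from \MU$ and then $w \from \Fix_{B(m) \ar v(m)}(\NSY(m))$ is exactly the same as sampling $(m,z,w) \from \WBV_{\half - 2\eps}$, because in $\WBV_{\half - 2\eps}$ one draws $m \from \MU$, $z \from \BSC^n_{\half - 2\eps}$, and sets $w = \Fix_{B(m) \ar v(m)}(\Enc(m) \oplus z) = \Fix_{B(m) \ar v(m)}(\NSY(m))$. This yields exactly the inequality in the claim.

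There is no real obstacle here — essentially every ingredient has been assembled in the preceding claims, and the only bookkeeping is to verify that the parameters of Lemma \ref{lem:fixed} give the advertised bound $b = O(q \log L / \delta)$, and that the product structure of the binary symmetric channel lets us identify $\Cond$ with $\Fix$ so that the $\WBV$ experiment matches the oracle fed to $\Dec$ in Claim \ref{clm:good decoding for useful m}.
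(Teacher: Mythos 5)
Your proposal follows the paper's proof essentially step by step: fix $j'$ via Claim \ref{clm:j'}, obtain $B(m), v(m)$ from Lemma \ref{lem:fixed} (with $a = \log(6L)$, $\eta = \delta$) as in Claim \ref{clm:fixed}, invoke Claim \ref{clm:good decoding for useful m}, use the product structure of $\NSY(m)$ to replace $\Cond$ by $\Fix$, and average over $\MU$. This is the same argument as in the paper, and the bookkeeping (including the bound $b = O(q \log L / \delta)$ and the identification with the $\WBV_{\half - 2\eps}$ experiment) is correct.
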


\subsubsection{Proof of the second item of Lemma \ref{lem:wbv} }

Let $j',b, B(\cdot),v(\cdot)$ be as in Claim \ref{clm:wbv 1st item}.
In order to complete the proof of Lemma \ref{lem:wbv} we need to show that for these choices, there exists a distribution $\MU$ over useful messages such that:
\[ \Pr_{(m,z,w) \from \WBV_{\half},i \from [k]}[\Dec^w(i,j')=m_i] \le 0.501.\]

For $p=\half$, $\BSC^n_p$ is a uniformly chosen string of length $n$. It follows that for every $m \in \B^k$, the distributions $\Enc(m) \oplus \BSC^n_{\half}$ and $\BSC^n_{\half}$ are identical (as the uniform string $\BSC^n_{\half}$ masks out $\Enc(m)$).
This means that for every choice of distribution $\MU$, the pair $(m,w) \from \WBV_{\half}$ is distributed exactly like a pair $(m,\Fix_{B(m) \ar v(m)}(z))$ where $m \from \MU, z \from \BSC^n_{\half}$.
It follows that in order to complete the proof of Lemma \ref{lem:wbv} it is sufficient to prove the following lemma:

\begin{lemma}\label{lem:advice-hidden}
There exists a universal constant $\nu >0$ such that the following holds for any $L \leq 2^{k^{0.9}}$, $\epsilon \in (k^{-\nu}, \frac 1 4), \delta \in (  k^{-\nu}, \frac 13)$, and $ q \leq \frac {\log(1/\delta)} {\epsilon^2}$.
There exists a distribution $\MU$ over useful messages such that
for every oracle procedure $D^{(\cdot)}(i)$ that makes at most $q$ queries to its oracle it holds that:
\[ \Pr_{m \from \MU,z \from \BSC^n_{\half},i \from [k]}[D^{\Fix_{B(m) \ar v(m)}(z)}(i)=m_i] \le 0.501.\]
\end{lemma}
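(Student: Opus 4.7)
The plan is to construct $\MU$ by iteratively restricting the uniform distribution on useful messages while fixing low min-entropy coordinates of the ``advice'' $(B(m),v(m))$, and then to show via a coupling that in the resulting distribution the oracle $\Fix_{B(m)\ar v(m)}(z)$ effectively hides $m$.

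Write the elements of $B(m)$ in increasing order as $Y_1(m)<\cdots<Y_{|B(m)|}(m)$ (padded with $\bot$ past $|B(m)|$) and set $Z_j(m)=(Y_j(m),v(m)(Y_j(m)))$. Starting from $\MU_0$ uniform over the at least $2^k/(6L)$ useful messages, pick a threshold $\tau=C\log k$ and, while some still-unfixed index $j$ admits a value $z^\star$ with $\Pr_{m\from\MU_t}[Z_j(m)=z^\star]\ge 2^{-\tau}$, replace $\MU_t$ by the conditional distribution $(\MU_t\mid Z_j(m)=z^\star)$. Since $|B(m)|\le b=O(q\log L/\delta)$ (from Claim~\ref{clm:fixed}), this terminates within $b$ steps, losing at most $b\tau+\log(6L)$ bits of min-entropy. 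Under the hypotheses $L\le 2^{k^{0.9}}$, $q\le\log(1/\delta)/\eps^2$, and $\eps,\delta\ge k^{-\nu}$ for a small enough universal $\nu>0$, this loss is at most $k^{0.99}$, so the final $\MU$ is uniform over a set $S\subseteq\B^k$ of size at least $2^{k-k^{0.99}}$. At termination there is a set $F\subseteq[n]$ and string $v_F\in\B^F$ with $F\subseteq B(m)$ and $v(m)|_F=v_F$ for every $m\in S$, while every ``free'' coordinate $Y_j(m)$ has min-entropy at least $\tau-1$ under $\MU$.

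Now fix any deterministic oracle procedure $D$ making $q$ queries (a randomized $D$ is handled by averaging over its coins). Define the \emph{imaginary} oracle $\tilde w_z:[n]\to\B$ by $\tilde w_z(\ell)=v_F(\ell)$ for $\ell\in F$ and $\tilde w_z(\ell)=z_\ell$ otherwise, and set $\hat D(i,z)=D^{\tilde w_z}(i)$; crucially, $\hat D$ does not depend on $m$. Because $F\subseteq B(m)$ and $v(m)|_F=v_F$ for every $m\in S$, the real and imaginary executions produce identical query sequences and responses until the first query lands in $G(m):=B(m)\setminus F$; so if $\hat\ell_t(i,z)$ denotes the $t$-th imaginary query position, the event ``first hit occurs at step $t$'' is contained in $\{\hat\ell_t(i,z)\in G(m)\}$. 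For any fixed $(i,z)$ the position $\hat\ell_t(i,z)$ is independent of $m$, and a union bound over the at most $b$ free indices $j$ combined with the min-entropy lower bound gives $\Pr_{m\from\MU}[\hat\ell_t(i,z)\in G(m)]\le b\cdot 2^{-\tau+1}$. Summing over $t\le q$ yields $\Pr[\text{hit}]\le q\,b\cdot 2^{-\tau+1}\le 0.0009$ for a sufficiently large $\tau=O(\log k)$. On the ``no-hit'' event $D^{\Fix_{B(m)\ar v(m)}(z)}(i)=\hat D(i,z)$, a deterministic function of $(i,z)$; so fixing $z$ and applying Lemma~\ref{lem:high-ent-guess} to the function $i\mapsto \hat D(i,z)$ (valid since $|S|\ge 2^{k-k^{0.99}}$), then averaging over $z$, yields $\Pr[\hat D(i,z)=m_i]\le 0.5001$. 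Combining, $\Pr[D^{\Fix_{B(m)\ar v(m)}(z)}(i)=m_i]\le 0.5001+0.0009\le 0.501$.

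The main obstacle is the parameter bookkeeping. The threshold $\tau$ must be large enough that $q\,b\cdot 2^{-\tau}$ is negligible, yet small enough that the total entropy loss $b\tau+\log(6L)$ stays below $k^{0.99}$ so that Lemma~\ref{lem:high-ent-guess} can be invoked. This balance uses the hypotheses $L\le 2^{k^{0.9}}$, $\eps,\delta\ge k^{-\nu}$ and $q\le\log(1/\delta)/\eps^2$ together (which bound $b$ by a sub-$1$ power of $k$ for small enough $\nu$), and it is precisely what allows the argument to extend beyond the regime of \cite{GSV18} to codes with $n$ as large as $2^k$.
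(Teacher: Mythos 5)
Your proposal is correct and follows essentially the same route as the paper's own proof: both build $\MU$ by iteratively conditioning on low--min-entropy coordinates of $(B(m),v(m))$ (the paper's Claims~\ref{clm:cases}/\ref{clm:low entropy} perform the same fixing step), and both conclude with a hit/no-hit coupling against an $m$-independent oracle followed by Lemma~\ref{lem:high-ent-guess} (the paper's Claim~\ref{clm:high entropy}). The differences are only organizational: you run the fixing loop unconditionally until every free coordinate has min-entropy above a threshold $\tau = O(\log k)$, whereas the paper interleaves the loop with the distinguisher quantifier and uses the slightly sharper threshold $t=11+\log b+\log q$; the parameter bookkeeping $b\tau+\log(6L)\le k^{0.99}$ and the $\bot$-padding of $B(m)$ in place of the paper's explicit padding of $B(m)$ to size exactly $b$ are both sound.
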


Lemma~\ref{lem:advice-hidden} implies Lemma~\ref{lem:wbv} by setting $D(\cdot)$ to be $\Dec(\cdot,j')$.
In order to prove Lemma \ref{lem:advice-hidden}, we will prove the following claim.

\begin{claim}
\label{clm:cases}
Let $S$ be a subset of useful messages, such that $|S| \ge 2^{k-k^{0.99}}$,
and let $\MD$ be the uniform distribution over $S$.
Let $b$ be an integer, and let $B,v$ be functions that given $m \in S$ produce a set $B(m) = \set{h_1(m)<\ldots<h_b(m)}\subseteq [n]$ of size $b$ and $v(m) \in \B^{B(m)}$, respectively.
If there exists
an oracle procedure $D^{(\cdot)}$ that makes at most $q$ queries such that:
\[ \Pr_{m \from \MD,z \from \BSC^n_{\half},i \from [k]}[D^{\Fix_{B(m) \ar v(m)}(z)}(i)=m_i] > 0.501,\]
then there exist:
\begin{itemize}
\item A subset $\bar{S} \subseteq S$ such that $\frac{|\bar{S}|}{|S|} \ge 2^{-(t+1)}$, where $t = 11 + \log b + \log q.$ 
\item An index $j \in [b]$, a codeword index $h' \in [n]$, and a value $v' \in \B$ such that for every message $m \in \bar{S}$, the $j$-th codeword index in $B(m)$ is $h_j(m)=h'$ and the value of the corresponding coordinate in $v$ is $(v(m))(h')=v'$.
\end{itemize}
\end{claim}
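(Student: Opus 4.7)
The plan is to view the oracle $\Fix_{B(m) \to v(m)}(z)$ as a mostly-uniform string into which $m$ injects information only through the $b$ positions in $B(m)$, so the only way $D$ can beat near-uniform guessing is to actually query some position in $B(m)$. To make this precise I would first couple the execution of $D$ on $\Fix_{B(m)\to v(m)}(z)$ (with $z$ uniform) to a ``simulated'' execution on a fresh uniform response string $U \in \B^q$: let $\ell_1^{U,i}, \ldots, \ell_q^{U,i}$ denote the query positions $D$ would make on input $i$ when its $\tau$-th response is $U_\tau$, and define
\[
\texttt{hit}(m,U,i) \;=\; \exists\, \tau \in [q]\colon \ell_\tau^{U,i} \in B(m).
\]
An inductive argument on $\tau$ shows that, up to the first hit, the query positions and responses of the real execution match those of the simulation, so conditioned on $\neg\texttt{hit}$ the output coincides with $D(i,U)$. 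Consequently,
\[
\Pr_{m, z, i}\!\left[D^{\Fix_{B(m)\to v(m)}(z)}(i) = m_i\right] \;\le\; \Pr_{m, U, i}[D(i,U) = m_i] + \Pr_{m, U, i}[\texttt{hit}].
\]

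Next I would bound the first term using the min-entropy of $\MD$. For every fixed $U\in\B^q$ the map $D(\cdot,U)\colon [k]\to\B$ is a deterministic bit-valued function, so Lemma \ref{lem:high-ent-guess} (applicable since $|S|\ge 2^{k-k^{0.99}}$) gives $\Pr_{m,i}[D(i,U)=m_i]\le 0.5001$; averaging over $U$ preserves the bound. Combined with the hypothesis that the left-hand side strictly exceeds $0.501$, this forces $\Pr[\texttt{hit}] > 0.0009$.

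I would then extract the common coordinate via a short pigeonhole chain. Averaging over $(U,i)$, fix some $(U^*,i^*)$ for which $\Pr_m[\texttt{hit}(m,U^*,i^*)] > 0.0009$. The positions $L_\tau := \ell_\tau^{U^*,i^*}$ for $\tau\in[q]$ are now fixed numbers in $[n]$, and
\[
\Pr_m\!\left[\exists\, \tau\in[q],\, j\in[b]\colon L_\tau = h_j(m)\right] > 0.0009.
\]
A union bound over the $qb$ pairs $(\tau,j)$ yields some $(\tau^\star,j)$ with $\Pr_m[h_j(m) = L_{\tau^\star}] > 0.0009/(qb)$. Setting $h' := L_{\tau^\star}$ and splitting on the two possible values $v'\in\B$ of $v(m)(h')$ produces a subset $\bar S = \{m\in S : h_j(m)=h',\ v(m)(h')=v'\}$ of density at least $0.00045/(qb)$, comfortably above $2^{-(t+1)} = 1/(4096\,qb)$ as required.

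The step I expect to be most delicate is the coupling underpinning the decomposition: the decoder is adaptive, so once a hit occurs its subsequent queries depend on the fixed bits $v(m)$ and may diverge from the simulated trajectory $\ell_\tau^{U^*,i^*}$. The crucial observation is that only the \emph{first} hit needs to lie on the simulated trajectory, which is guaranteed by the inductive matching of queries and responses up to that point; once this is formalized (and one assumes w.l.o.g.\ that $D$ never repeats queries so that uniformity of the $z_{\ell_\tau}$ is preserved), every remaining step is a routine averaging or union-bound argument.
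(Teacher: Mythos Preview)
Your argument is correct and follows essentially the same route as the paper. The paper organizes the proof as two sub-claims: one showing that if every $h_j(\MD)$ has min-entropy at least $t$ then the coupling bound forces the success probability below $0.501$, and one showing that a coordinate with min-entropy below $t$ yields the required $\bar S$, $h'$, $v'$; the claim then follows by contraposition. You collapse this dichotomy into a single direct argument---deriving $\Pr[\texttt{hit}]>0.0009$ from the same coupling inequality and then pigeonholing over the fixed query positions $L_\tau$ to locate $h'$ explicitly---but the coupling step, the appeal to Lemma~\ref{lem:high-ent-guess}, and the final averaging over $(\tau,j)$ and $v'$ are exactly the ingredients the paper uses.
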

We first show that Claim~\ref{clm:cases} implies Lemma~\ref{lem:advice-hidden}.

\begin{proof}[Proof of Lemma~\ref{lem:advice-hidden}] Our goal is to find a distribution $\MU$ over useful messages so that
\begin{equation}\label{eq:advice-hidden}
\Pr_{m \from \MU,z \from \BSC^n_{\half},i \from [k]}[D^{\Fix_{B(m) \ar v(m)}(z)}(i)=m_i] \le 0.501
\end{equation}
for every $q$-query oracle procedure $D^{(\cdot)}(i)$.

To this end, we observe that if it is the case that $B(m) = B(m')$ and $v(m) = v(m')$ for all $m, m' \gets \MU$, then $\Fix_{B(m) \ar v(m)}(z)$ does not convey any information on $m \gets \MU$.
Thus, if $\MU$ satisfies this property, and is
 uniform over a set of size at least $2^{k-k^{0.99}}$,
then by Lemma \ref{lem:high-ent-guess}, any oracle algorithm $D$ satisfies (\ref{eq:advice-hidden}),
irrespective of the number of queries that $D$ makes.

To use this observation, we apply Claim \ref{clm:cases} repeatedly till we either reach a distribution $\MU$ that satisfies (\ref{eq:advice-hidden}), in which case we are done, or we reach a distribution $\MU$ which satisfies that $B(m)=B(m')$ and $v(m)=v(m')$ for all $m,m' \gets \MU$.

A technicality is that Claim \ref{clm:cases} assumes that for every $m$, the size of $B(m)$ is precisely $b$. In contrast, when we begin, we only know that the size of $B(m)$ is at most $b$. To solve this technicality we observe that if the initial distribution $\MU$ does not satisfy equation (\ref{eq:advice-hidden}), then for every $m$ such that $|B(m)|<b$ we can add additional $|b-B(m)|$ elements to $B(m)$, and by averaging (using the fact that $m$ is independent of $z,i$) we can choose the value of $v(m)$ at these indices, in a way that the probability in equation (\ref{eq:advice-hidden}) is not decreased. Overall, this means that at the start of the iterative process, we can assume that for every $m$, the size of $B(m)$ is precisely $b$.

We now go over the iterative argument in more detail.
For a subset of messages $S \subseteq \{0,1\}^k$, we let $J_{S,\mathrm{fixed}} \subseteq [b]$ denote the subset of all indices $j \in [b]$ for which there exist a codeword index $h' \in [n]$ and a value $v' \in \{0,1\}$ such that for every message $m \in S$, $h_j(m) = h'$ and $(v(m))(h')=v'$. Note that $|J_{S,\mathrm{fixed}}| =b$ if and only if $B(m) = B(m')$ and $v(m) = v(m')$ for all $m, m' \in S$. Let also $ B_{S}(m) =  \{h_j(m) \in B(m) \mid j \notin J_{S,\mathrm{fixed}}\}$, and $v_S(m) = v(m)|_{B_S(m)}$.

We initialize $\MU$ with the uniform distribution on the set $S$ of all useful messages.
At each step, given a distribution $\MU$ that is uniform over a set $S$ of useful messages of size at least $2^{k-k^{0.99}}$,
if there exists a $q$-query oracle procedure $D$ which does not satisfy (\ref{eq:advice-hidden}), and $|J_{S,\mathrm{fixed}}| <b$, we apply Claim~\ref{clm:cases} with the functions $B_S(m)$ and $v_S(m)$, and an oracle procedure $D_S$, defined as follows. The oracle procedure $D_S$ is identical to $D$, except that when querying an input $h' = h_j(m) \in [n]$ so that $j \in J_{S,\mathrm{fixed}}$, it assumes that the queried value is the unique $v' \in \{0,1\}$  so that $(v(m))(h')=v'$ for all $m \in S$ (note that this value is independent of $m$).

Noting that
$$
\Pr_{m \from \MU,z \from \BSC^n_{\half},i \from [k]}\left[D^{\Fix_{B(m) \ar v(m)}(z)}(i)=m_i\right]
$$
$$
= \Pr_{m \from \MU,z \from \BSC^n_{\half},i \from [k]}\left[D_S^{\Fix_{B_S(m) \ar v_S(m)}(z)}(i)=m_i\right]  > 0.501,
$$
Claim~\ref{clm:cases} implies the existence of a subset $\bar{S} \subseteq S$ such that $\frac{|\bar{S}|}{|S|} \ge 2^{-(t+1)}$, and an index $j \in [b] \setminus J_{S,\mathrm{fixed}}$, a codeword index $h' \in [n]$, and a value $v' \in \B$ such that for every message $m \in \bar{S}$, the $j$-th codeword index in $B_S(m)$ is $h_j(m)=h'$ and the value of the corresponding coordinate in $v_S$ is $(v_S(m))(h')=v'$.
We thus set $\MU$ to be the uniform distribution over the messages in $\bar{S}$. As this fixes a new position in $B(\cdot)$ and the corresponding value in $v(\cdot)$ for all messages sampled from $\bar S$, we have that $|J_{S,\mathrm{fixed}}|<|J_{\bar S,\mathrm{fixed}}|$.

Repeatedly applying the above, we eventually either reach a uniform distribution $\MU$ over a set $S$ which satisfies (\ref{eq:advice-hidden})
 for any $q$-query oracle procedure $D$, or we reach a distribution for which $|J_{S,\mathrm{fixed}}|=b$, and so
 $B(m) = B(m') $ and $v(m) = v(m')$ for all $m, m' \gets \MU$. In the former case we are clearly done, while in the latter case, by Lemma \ref{lem:high-ent-guess}, it suffices to show that when the process terminates, $\MU$ is distributed uniformly over a set of size at least $2^{k-k^{0.99}}$ (as in this case
(\ref{eq:advice-hidden}) holds for any oracle procedure $D$, irrespective of the number of queries it makes).

To see that the above condition holds, note that the total number of iterations is at most $b$, since in each iteration at least one of the $b$ indices in $B(\cdot)$ is fixed. Also recall that by Claim~\ref{clm:number-useful}, there are at least $2^k/(6L)$ useful messages. Consequently, when the process terminates, the number of messages in the support of $\MU$ is at least
\begin{eqnarray*}
\frac{2^k} {6L} \cdot 2^{-(t+1)b} &&  =   \frac{2^k} {6L} \cdot \left(2^{12} \cdot b\cdot q\right)^{-b} \\
&& \geq \frac{2^k} {L} \cdot \left(\frac{ \delta} {q^2 \log L} \right)^{O(q (\log L)/\delta)} \\
&& \geq \frac{2^k} {L} \cdot \left(\frac{  \epsilon \delta} { \log L} \right)^{O(\log(1/\delta) (\log L)/ (\delta \epsilon^2))} \\
&& = 2^k \cdot \exp \left(- (\log L \log \log L ) \cdot \poly( 1/\delta, 1/\epsilon)\right),
\end{eqnarray*}
where the first equality follows recalling that $t = 11+\log b + \log q$ by Claim \ref{clm:cases}, the second inequality follows
recalling that $b = O(q \cdot (\log L)/\delta)$ by Claim \ref{clm:wbv 1st item}, and the third inequality follows by assumption that
$q \leq \log(1/\delta)/\epsilon^2$. Finally, note that by choosing a sufficiently small constant $\nu >0$, and recalling our assumption that $L \leq 2^{k^{0.9}}$ and
$\epsilon, \delta \geq k^{-\nu}$, we can guarantee that the above expression is at least $2^{k- k^{0.99}}$. This concludes the proof of the lemma.
\end{proof}

Claim \ref{clm:cases} will follow from the next two claims:

\begin{claim}
\label{clm:high entropy}
Suppose that $\MD$ is a uniform distribution over a set $S$ of size at least $2^{k-k^{0.99}}$, and that
 for every $j \in [b]$, $\Hi(h_j(\MD)) \ge t$ for $t = 11+\log b + \log q$. Then
\[ \Pr_{m \from \MD,z \from \BSC^n_{\half},i \from [k]}\left[D^{\Fix_{B(m) \ar v(m)}(z)}(i)=m_i\right] \le 0.501.\]
\end{claim}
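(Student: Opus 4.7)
The plan is to reduce to a ``fully uniform'' oracle experiment and then invoke Lemma~\ref{lem:high-ent-guess}. I would couple the given experiment to an auxiliary one in which $D$ is run with oracle access to $z$ itself (rather than to $\Fix_{B(m) \ar v(m)}(z)$), using the same uniform string $z$ in both copies. Since $\Fix_{B(m) \ar v(m)}(z)$ and $z$ agree on every coordinate outside $B(m)$, and $D$ is deterministic, an inductive argument on the query index shows that as long as the adaptive query sequence of $D$ avoids $B(m)$ entirely, the two coupled executions make identical queries, receive identical answers, and therefore return the same output bit.

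Next I would bound the probability of the bad event that some query of $D^z(i)$ lands in $B(m)$. In the auxiliary experiment $m$ is independent of $(z,i)$, so for every fixed $(z,i)$ the query sequence $\ell_1,\ldots,\ell_q$ of $D^z(i)$ is a fixed tuple independent of $m \from \MD$. By the hypothesis $\Hi(h_j(\MD)) \ge T$ with $T = 11 + \log b + \log q$, we get $\Pr_m[\ell_s = h_j(m)] \le 2^{-T}$ for each $s \in [q]$ and $j \in [b]$. A union bound over $s$ and $j$ yields a bad-event probability at most $q \cdot b \cdot 2^{-T} = 2^{-11}$, and this bound is preserved after averaging over $z$ and $i$.

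On the complementary good event the coupled outputs agree, so I would apply Lemma~\ref{lem:high-ent-guess}: for each fixed $z$, the map $i \mapsto D^z(i)$ is a deterministic function $[k] \ar \B$, and since $\MD$ is uniform over a set of size at least $2^{k-k^{0.99}}$, the lemma gives
\[ \Pr_{m \from \MD,\, i \from [k]}[D^z(i) = m_i] \le 0.5001. \]
Averaging over $z$ and combining with the good/bad split yields
\[ \Pr_{m,z,i}\bigl[D^{\Fix_{B(m) \ar v(m)}(z)}(i) = m_i\bigr] \le 0.5001 + 2^{-11} < 0.501, \]
as required. The only genuinely delicate point is the handling of adaptivity: one must verify that the event ``$D^z(i)$ avoids $B(m)$'' and the event ``$D^{\Fix_{B(m)\ar v(m)}(z)}(i)$ avoids $B(m)$'' coincide, and that on this common event both executions produce the same output. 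This follows by induction on the query index, since the two oracles return identical bits on any coordinate outside $B(m)$, so the adaptive choices of the next query agree step by step up to the first time a query enters $B(m)$ (and on the good event no such time exists).
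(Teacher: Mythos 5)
Your argument is correct and follows essentially the same route as the paper: you couple the $\Fix_{B(m)\to v(m)}(z)$-oracle run to the plain $z$-oracle run with the same $z$, observe that the two executions coincide until the first query into $B(m)$ (so in particular the ``bad'' events coincide), bound the bad-event probability by $q \cdot b \cdot 2^{-t} = 2^{-11}$ via the min-entropy hypothesis and a union bound, and handle the good event with Lemma~\ref{lem:high-ent-guess} applied per fixed $z$. This matches the paper's decomposition and numerics ($0.5001 + 2^{-11} \le 0.501$) step for step.
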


\begin{proof}
Let $E_B$ denote the event that $D^{\Fix_{B(m) \ar v(m)}(z)}(i)$ makes a query into $B(m)$, and let $\tilde E_B$ denote the event that $D^z(i)$ makes a query into $B(m)$. Then we have that when choosing $m \from \MD$,$z \from \BSC^n_{\half}$, and $i \from [k]$,
$$
 \Pr\left[D^{\Fix_{B(m) \ar v(m)}(z)}(i)=m_i\right]
  \leq \Pr\left[ \left(D^{\Fix_{B(m) \ar v(m)}(z)}(i)=m_i\right) \cap \neg E_B\right] + \Pr[E_B].
  $$

  To bound the right-hand term, we first claim that $\Pr[E_B]= \Pr[\tilde E_B]$. To see this, note that for any fixed $m, i$, the set of strings $z$ on which $D^{\Fix_{B(m) \ar v(m)}(z)}(i)$ makes a query into $B(m)$ is identical to the set of  strings $z$ on which  $D^z(i)$ makes a query into $B(m)$, since the locations of the queries made before the first query to $B(m)$ are the same for the oracles $\Fix_{B(m) \ar v(m)}(z)$ and $z$. Thus, to bound the right-hand term, it suffices to bound the probability of the event $\tilde E_B$.

 To bound the probability that $\tilde E_B$ occurs, we this time fix the string $z$ and the index $i$, and note that this determines the query pattern of $D^z(i)$. Next we recall our assumption that for $m \gets \MD$,  $\Hi(h_j(m)) \ge t$ for all $j \in [b]$. Thus for any $j \in [b]$, the probability, over $m \gets \MD$ (noting that this choice of $m$ is independent of the fixing of $z, i$), that a specific query of $D^z(i)$
 is to $h_j(m)$ is at most $2^{-t}$.
Hence, by a union bound, the probability that the event $\tilde E_B$ occurs is at most $q \cdot b \cdot 2^{-t}$.
Finally, by our setting of $t = 11 + \log b + \log q$, we have that this probability is at most $2^{-11}$. We conclude that the right-hand term satisfies $\Pr[E_B] \leq 2^{-11}.$



To bound the left-hand term, we once more claim that $$\Pr\left[ \left(D^{\Fix_{B(m) \ar v(m)}(z)}(i)=m_i\right) \cap \neg E_B\right]= \Pr\left[ \left(D^{z}(i)=m_i\right) \cap \neg \tilde E_B\right].$$
Once more, this follows since when fixing $m,i$, the set of strings $z$ on which $D^{\Fix_{B(m) \ar v(m)}(z)}(i)$ does not make a query into $B(m)$ is identical to the set of  strings $z$ on which  $D^z(i)$ does not make a query into $B(m)$, and fixing each such string $z$ induces
the same query pattern for $D^{z}(i)$ and $D^{\Fix_{B(m) \ar v(m)}(z)}(i)$. Thus we conclude that for fixed $m,i$, the set of strings $z$ which lead to the event $\left(D^{\Fix_{B(m) \ar v(m)}(z)}(i)=m_i\right) \cap \neg E_B$ is identical to the set of strings $z$ that
lead to the event $\left(D^{z}(i)=m_i\right) \cap \neg \tilde E_B$, and so the probabilities are the same. It thus suffices to bound the probability of the event $\left(D^{z}(i)=m_i\right) \cap \neg \tilde E_B.$

To bound the probability that $\left(D^{z}(i)=m_i\right) \cap \neg \tilde E_B$ occurs, we note that it is at most the probability that
$D^z(i)=m_i$ occurs. Recalling our assumption that $\MD$ is uniform over a set of size at least $2^{k- k^{0.99}}$, by
Lemma \ref{lem:high-ent-guess}, this latter probability is at most $0.5001$. So the left-hand term satisfies that
$$\Pr\left[ \left(D^{\Fix_{B(m) \ar v(m)}(z)}(i)=m_i\right) \cap \neg E_B\right] \leq 0.5001.$$

Summing up the two probabilities, we get that
$$ \Pr_{m \from \MD,z \from \BSC^n_{\half},i \from [k]}\left[D^{\Fix_{B(m) \ar v(m)}(z)}(i)=m_i\right] \leq 0.5001 + 2^{-11} \leq 0.501,$$
which concludes the proof of the claim.
\end{proof}

\begin{claim}
\label{clm:low entropy}
If there exists $j \in [b]$, such that $\Hi(h_j(\MD)) < t$ then there exist:
\begin{itemize}
\item A subset $\bar{S} \subseteq S$ such that $\frac{|\bar{S}|}{|S|} \ge 2^{-(t+1)}$.
\item A codeword index $h' \in [n]$ and a value $v' \in \B$ such that for every message $m \in \bar{S}$, the $j$-th codeword index in $B(m)$ is $h_j(m)=h'$ and the value of the corresponding coordinate in $v$ is $(v(m))(h')=v'$.
\end{itemize}
\end{claim}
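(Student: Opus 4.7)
The plan is to prove Claim \ref{clm:low entropy} by two successive applications of the pigeonhole principle (or equivalently, averaging): first on the codeword index $h_j(m)$ and then on the associated bit of $v(m)$.

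First, I would use the min-entropy hypothesis. By the definition of min-entropy, $\Hi(h_j(\MD)) < t$ means there exists some value $h' \in [n]$ such that $\Pr_{m \from \MD}[h_j(m) = h'] > 2^{-t}$. Since $\MD$ is uniform over $S$, setting $S' = \{m \in S : h_j(m) = h'\}$ gives $|S'|/|S| > 2^{-t}$.

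Next, for every $m \in S'$, the bit $(v(m))(h')$ takes a value in $\{0,1\}$. By pigeonhole, there exists $v' \in \{0,1\}$ such that at least half of the messages in $S'$ satisfy $(v(m))(h') = v'$. Letting $\bar{S} \subseteq S'$ be this subset, we get $|\bar{S}| \geq |S'|/2$, and therefore
\[ \frac{|\bar{S}|}{|S|} \geq \frac{1}{2} \cdot \frac{|S'|}{|S|} > \frac{1}{2} \cdot 2^{-t} = 2^{-(t+1)}. \]
By construction, every $m \in \bar{S}$ satisfies $h_j(m) = h'$ and $(v(m))(h') = v'$, which is exactly what is claimed.

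There is no real obstacle here: the claim is essentially a restatement of the contrapositive of ``high min-entropy'' applied to the coordinate function $h_j$, combined with an additional binary-pigeonhole step to simultaneously fix the bit of $v$. The only care needed is accounting for the factor of $2$ loss from fixing the one-bit value, which is exactly why the statement asks for $2^{-(t+1)}$ rather than $2^{-t}$.
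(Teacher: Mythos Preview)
Your proof is correct and follows essentially the same approach as the paper: use the low min-entropy to find a heavy value $h'$ of $h_j(\MD)$, then pigeonhole on the bit $(v(m))(h')$ to fix $v'$, losing an additional factor of $2$. The only cosmetic difference is that you phrase things in terms of subsets $S'$ and $\bar S$, while the paper phrases the same steps in terms of probabilities and conditioning.
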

\begin{proof}
Since $\Hi(h_j(\MD)) < t$, there must exist $h' \in [n]$ such that $\Pr[h_j(\MD) = h'] \ge 2^{-t}$.
Let $v' \in \{0,1\}$ be such that $\Pr[(v(\MD))(h') = v'| h_j(\MD) = h'] \ge 1/2$.
In other words, $v'$ is the more probable value taken by $v(m)$ at the index $h'$, conditioned on $m \gets \MD$ satisfying that the $j$-th index in $B(m)$ is $h_j(m)=h'$.
So we get that with probability at least $2^{-(t+1)}$, both events $h_j(\MD) = h'$ and $(v(\MD))(h')=v'$ hold.
This event can be thought of in turn as a subset $\bar{S}$ of messages of density at least $2^{-(t+1)}$ inside $S$.
\end{proof}

\subsection{Proof of Theorem \ref{thm:main:code:small eps}}
\label{sec:prf:small eps}

In this section we prove Theorem \ref{thm:main:code:small eps}, restated below.

\smalleps*

The Theorem will follow from the next lemma (which argues that an ARLLD with certain parameters, yields a depth 3 circuit that distinguishes between $\BSC_{\half-2\eps}$ and $\BSC_{\half}$).

Our overall plan for proving Theorem \ref{thm:main:code:small eps} is to argue that a local list-decoder with the parameters specified in Theorem \ref{thm:main:code:small eps} yields an ARLLD with the parameters specified in Lemma \ref{lem:exists C}, which in turn yields the circuit specified in Lemma \ref{lem:exists C}, and this will be used to prove a lower bound on $q$.

\begin{lemma}
\label{lem:exists C}
There exist universal constants $\beta>0$ and $c>1$ such that the following holds for any $L \le \beta \cdot 2^k$ and $\eps< \frac 1 4$.
Let $\Dec$ be a $(\half-\eps,L,q,\frac1 {2k})$-ARLLD for $\Enc:\B^k \ar \B^n$, and let $n'=cn$. Then there exists a circuit $C:\B^{n'} \ar \B$ of size $O(L \cdot k \cdot 2^{2q})$ and depth $3$, such that:
\begin{itemize}
\item $\Pr_{z \from \BSC^{n'}_{\half-2\eps}}[C(z)=1] \ge 0.99$.
\item $\Pr_{z \from \BSC^{n'}_{\half}}[C(z)=1] \le 0.01$.
\end{itemize}
\end{lemma}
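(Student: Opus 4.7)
The plan is to hard-wire a single ``good'' message $m^\ast\in\B^k$ into the circuit, build a depth-$3$ basic circuit $C_0:\B^n\to\B$ that simulates $\Dec$ on $w=\Enc(m^\ast)\oplus z$ and accepts iff some $j\in[L]$ locally decodes $w$ back to $m^\ast$, and then amplify the distinguishing advantage by OR-ing $c$ independent copies of $C_0$ on disjoint blocks of $z$ (taking $n'=cn$). To pick $m^\ast$, define $p_c(m)=\Pr_{z\from\BSC^n_{\half-2\eps}}[\exists j\in[L]:\Dec^{\Enc(m)\oplus z}(\cdot,j)=m]$ and $p_s(m)=\Pr_{z\from\BSC^n_{\half}}[\exists j:\Dec^{\Enc(m)\oplus z}(\cdot,j)=m]$. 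The ARLLD hypothesis forces $\Exp_m[p_c(m)]\ge 1/3$; crucially, since $\delta=1/(2k)$ and the number of coordinates on which $\Dec^w(\cdot,j)$ disagrees with $m$ is an integer $\le\delta k<1$, a witnessing $j$ must recover $m$ on \emph{all} of $[k]$ exactly. For soundness, when $z\from\BSC^n_{\half}$ the word $w=\Enc(m)\oplus z$ is uniform and independent of $m$, so a union bound over $j$ gives $\Exp_m[p_s(m)]\le L\cdot 2^{-k}\le\beta$. Markov plus a union bound then yields an $m^\ast$ with $p_c(m^\ast)\ge 1/6$ and $p_s(m^\ast)\le 10\beta$, provided $\beta$ is a small enough constant.

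Fix such an $m^\ast$ and, for each $(i,j)\in[k]\times[L]$, let $f_{i,j}(z)=[\Dec^{\Enc(m^\ast)\oplus z}(i,j)=m^\ast_i]$. Since $m^\ast$ is hard-wired and $\Dec$ makes at most $q$ (possibly adaptive) queries, $f_{i,j}$ is computed by the depth-$q$ decision tree of $\Dec^{(\cdot)}(i,j)$ with the constants $\Enc(m^\ast)_\ell$ baked in; this tree has at most $2^q$ leaves and therefore admits a CNF of size $O(q\cdot 2^q)$. Then $g_j(z)=\bigwedge_{i\in[k]}f_{i,j}(z)$ is the conjunction of $k$ CNFs, hence itself a CNF of depth $2$ and size $O(k\cdot q\cdot 2^q)$, and finally $C_0(z)=\bigvee_{j\in[L]}g_j(z)$ is a depth-$3$ OR-AND-OR circuit of size $O(L\cdot k\cdot q\cdot 2^q)=O(L\cdot k\cdot 2^{2q})$. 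By construction, $\Pr[C_0(\BSC^n_{\half-2\eps})=1]\ge p_c(m^\ast)\ge 1/6$ and $\Pr[C_0(\BSC^n_{\half})=1]\le p_s(m^\ast)\le 10\beta$.

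To amplify, fix a constant $c$ large enough that $(5/6)^c<0.01$, partition $z\in\B^{n'}$ into $c$ independent blocks $z^{(1)},\ldots,z^{(c)}\in\B^n$, and set $C(z)=\bigvee_{t=1}^{c}C_0(z^{(t)})$. Under $\BSC^{n'}_p$ each block is independently $\BSC^n_p$, so completeness is at least $1-(5/6)^c>0.99$ and soundness is at most $10c\beta<0.01$ once $\beta$ is chosen small enough (and this determines the constant $\beta$ in the lemma). The outer OR merges with the top OR of each $C_0$, so $C$ remains depth $3$ and of size $O(c\cdot L\cdot k\cdot 2^{2q})=O(L\cdot k\cdot 2^{2q})$. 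The main obstacle is achieving depth exactly $3$: this relies on two details, (a) representing each $f_{i,j}$ as a CNF rather than a DNF so that the subsequent $\bigwedge_i$ collapses into a single depth-$2$ CNF before the top $\bigvee_j$ is placed, and (b) exploiting $\delta=1/(2k)$ to realize $g_j$ as a plain conjunction rather than an approximate-threshold gate, which would force an additional Ajtai-style layer.
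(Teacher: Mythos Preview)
Your proposal is correct and follows essentially the same approach as the paper: fix a good message by a Markov/union-bound argument over the expectations $\Exp_m[p_c(m)]\ge 1/3$ and $\Exp_m[p_s(m)]\le L\cdot 2^{-k}$, build the depth-$3$ circuit as an OR over $j$ of the CNFs $\bigwedge_i f_{i,j}$, and amplify via an outer OR over $c$ independent blocks that collapses into the top gate. The only cosmetic difference is in the Markov thresholds (you use $1/6$ and $10\beta$, the paper uses $1/10$ and $\sqrt{\beta}$), which does not affect the argument.
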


We first prove that Lemma~\ref{lem:exists C} implies Theorem~\ref{thm:main:code:small eps}.
\begin{proof}[Proof of Theorem~\ref{thm:main:code:small eps}]
Consider a $(\half - \eps, L, q, \delta)$-LLD $\Dec$ for $\Enc: \B^k \to \B^n$, where $\delta \le 1/3$.
It is possible to amplify the error probability $\delta$ from $\frac 1 3$ to $\frac 1 {20k}$ as follows: After choosing the random string $r^{\shared}$, we choose $e=O(\log k)$ independent uniform strings $r_1,\ldots,r_e$, and apply $\Dec^{(\cdot)}(i,j,r^{\shared},r_{\ell})$ for all choices of $\ell \in [e]$. We then output the majority vote of the individual $e$ outputs. It is standard that this gives a $(\half - \eps, L, q'=O(q \log k),  \frac 1 {20k})$-LLD for $\Enc: \B^k \to \B^n$.

By our requirements on $\eps$, we can use Proposition~\ref{prop:LLD to ARLLD} to show that there exists a $(\half - \eps, L, q',\frac1 {2k})$-ARLLD for $\Enc: \B^k \to \B^n$.
By Lemma~\ref{lem:exists C}, there exists a circuit $C$ of size
\[s=O(L \cdot k \cdot 2^{2 \cdot q'})=L \cdot k \cdot 2^{c_1q \log k} \]
and depth $3$ that distinguishes $\BSC^{n'}_{\half-2\eps}$ from $\BSC^{n'}_{\half}$. By Corollary \ref{cor:coin-problem} such a circuit must have size:
\[ s \ge 2^{1/(c_2\sqrt{\eps})}. \]
This implies that
\[q \ge \frac{1} {c_1 c_2 \sqrt{\eps} \log k} - \frac {\log(Lk)} {c_1 \log k} \geq \frac{1} {c_1 c_2 \sqrt{\eps} \log k} - \log L . \]
This completes the proof of the theorem.
\end{proof}

In the remainder of this section we prove Lemma \ref{lem:exists C}.
Let $\Dec$ be a $(\half-\eps,L,q,\frac 1 {2k})$-ARLLD for $\Enc:\B^k \ar \B^n$.
We will construct the circuit $C$ in the following sequence of claims:

\begin{definition}
For every $i \in [k]$ and $j \in [L]$, let $A_{i,j}:\B^n \ar \B$ be defined by:
$A_{i,j}(w)=\Dec^w(i,j)$.
\end{definition}

\begin{claim}
For every $i \in [k]$ and $j \in [L]$, there exist CNF circuits $A^T_{i,j}:\B^n \ar \B$ and $A^F_{i,j}:\B^n \ar \B$ of size $O(q \cdot 2^q)$ such that for every $w \in \B^n$, $A^T_{i,j}(w)=A_{i,j}(w)$, and $A^F_{i,j}(w)=1-A_{i,j}(w)$.
\end{claim}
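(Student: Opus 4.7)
The plan is to realise $A_{i,j}$ as a depth-$q$ decision tree and then read off a CNF from its leaves. Since $\Dec$ is deterministic (by Definition~\ref{dfn:ARLLD}) and makes at most $q$ adaptive queries to its oracle $w$, for every fixed input pair $(i,j)$ its execution defines a binary decision tree $T_{i,j}$ in which each internal node is labelled by a query index $h \in [n]$ whose two children correspond to the possible answers $w_h \in \B$, and each leaf is labelled by the output bit that $\Dec$ would return along that computation path. The tree has depth at most $q$ and hence at most $2^q$ leaves, and for every $w$, $A_{i,j}(w)$ equals the label of the unique leaf reached along the path determined by $w$.

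Each root-to-leaf path $\pi$ in $T_{i,j}$ naturally corresponds to a conjunction $P_\pi$ of at most $q$ literals over $w_1,\ldots,w_n$: for every internal node on $\pi$ labelled by index $h$, include the literal $w_h$ if $\pi$ follows the ``1''-child and $\neg w_h$ if it follows the ``0''-child. Then $w$ reaches the endpoint of $\pi$ iff $P_\pi(w)=1$. To define $A^T_{i,j}$, I will use that $A_{i,j}(w)=1$ iff $w$ does not reach any $0$-leaf, and set
\[ A^T_{i,j}(w) \;=\; \bigwedge_{\pi \text{ a } 0\text{-leaf path}} \neg P_\pi(w), \]
where each factor $\neg P_\pi(w)$ is written as a single clause, namely the disjunction of the negations of the (at most $q$) literals appearing in $P_\pi$. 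This is a CNF with at most $2^q$ clauses, each of width at most $q$, and hence of total size $O(q \cdot 2^q)$. For $A^F_{i,j}$ I will apply exactly the same construction to $T_{i,j}$ but ranging over the $1$-leaf paths instead; the resulting CNF computes $1 - A_{i,j}$ and has the same size bound.

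Correctness is essentially by inspection: for any $w$ there is a unique leaf-path $\pi^*$ that $w$ follows, and for every other path $\pi \ne \pi^*$ some query answer on $\pi$ disagrees with $w$, so the clause $\neg P_\pi$ is automatically satisfied. Hence the AND evaluates to $1$ precisely when $\pi^*$ is not among the paths being conjoined over, which matches the desired output in each of the two cases. There is no real obstacle — this is the standard decision-tree-to-CNF conversion — and the only point that must be explicitly recorded is that $\Dec$'s adaptive behaviour on fixed $(i,j)$ is faithfully captured by a decision tree, which holds precisely because the ARLLD $\Dec$ is deterministic.
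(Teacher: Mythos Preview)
Your proof is correct and takes essentially the same approach as the paper: realise $A_{i,j}$ as a depth-$q$ decision tree and convert to CNF via the standard path-based construction. The paper phrases it as first writing a DNF over accepting paths and then negating to obtain the CNF for the complement, whereas you write the CNF directly as a conjunction of negated path-terms, but these are the same construction.
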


\begin{proof}
For fixed $i,j$, we can view the computations $A_{i,j}(w)=\Dec^w(i,j)$ as a depth $q$ decision tree that makes queries to $w$. Such a decision tree can be simulated by a size $O(q \cdot 2^q)$ DNF, which is a disjunction over the at most $2^q$ accepting paths of the tree, where each path is a conjunction of $q$ literals. This also gives a CNF of the same size for $1-A_{i,j}$. The same argument can be repeated for $1-A_{i,j}$ giving a CNF of the same size for $A_{i,j}$.
\end{proof}
%
%
%

\begin{definition}
For every $m \in \B^k$ we define the circuit $C_m:\B^n \ar \B$ that is hardwired with the message $m \in \B^k$, and the encoding $\Enc(m)$. Given input $z \in \B^n$, the circuit $C_m$ acts as follows:
\begin{itemize}
\item Prepare $w=\Enc(m) \oplus z$.
\item For every $i \in [k]$ and $j \in [L]$ compute $A_{i,j}(w)$, and compute $b_{i,j} \in \B$ which answers whether $A_{i,j}(w)=m_i$.
\item For every $j$, compute $b_j$ which is the conjunction of $b_{1,j},\ldots,b_{k,j}$.
\item Compute the disjunction of $b_1,\ldots,b_L$ and output it.
\end{itemize}
\end{definition}

\begin{claim}
\label{clm:Cm simulates Dec}
For every $m \in \B^k$ the circuit $C_m$ can be implemented in size $O(k \cdot L \cdot 2^{2q})$ and depth $3$. Furthermore, for every $m \in \B^k$, and $z \in \B^n$, $C_m(z)=1$ if and only if there exists $j \in [L]$ such that for every $i \in [k]$, $\Dec^{\Enc(m) \oplus z}(i,j)=m_i$.
\end{claim}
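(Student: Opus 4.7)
The plan is to verify the two parts of the claim by directly inspecting the given construction. For the semantic part, I would simply trace through the definition of $C_m$: by construction $w = \Enc(m) \oplus z$ and $A_{i,j}(w) = \Dec^w(i,j)$, so $b_{i,j} = 1$ iff $\Dec^w(i,j) = m_i$, the conjunction $b_j$ is $1$ iff this holds for every $i \in [k]$, and the disjunction over $j$ yields exactly the claimed existential condition. This requires no real argument beyond unfolding the definitions.

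For the size and depth bounds, the first step I would take is to observe that preparing $w = \Enc(m) \oplus z$ costs \emph{no depth at all}. Since $\Enc(m)$ is hardwired, every bit $w_\ell$ equals either $z_\ell$ (if $\Enc(m)_\ell = 0$) or $\neg z_\ell$ (if $\Enc(m)_\ell = 1$). So anywhere a literal of $w$ appears in a downstream formula, we may substitute the appropriate literal of $z$ with no structural cost.

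The second step is to invoke the preceding claim to implement each $b_{i,j}$ directly as a CNF of size $O(q \cdot 2^q)$: take $\hat A_{i,j} := A^T_{i,j}$ if $m_i = 1$ and $\hat A_{i,j} := A^F_{i,j}$ if $m_i = 0$, so $\hat A_{i,j}(w) = b_{i,j}$. Then $b_j = \bigwedge_{i \in [k]} \hat A_{i,j}(w)$, which is an AND of CNFs, hence itself a CNF (an $\mathrm{AND}$ of $\mathrm{AND}$s of $\mathrm{OR}$s collapses to a single $\mathrm{AND}$ layer over all the clauses). Finally $C_m(z) = \bigvee_{j \in [L]} b_j$ is an OR of CNFs: an $\mathrm{OR}$--$\mathrm{AND}$--$\mathrm{OR}$ formula of depth $3$. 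Counting wires gives at most $L$ outputs at the top, at most $L \cdot k \cdot 2^q$ clauses in the middle, and at most $q$ literals per clause, for a total of $O(L \cdot k \cdot q \cdot 2^q) = O(L \cdot k \cdot 2^{2q})$.

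The only mild obstacle is recognizing that the \emph{naive} implementation, reading off the four-step definition of $C_m$, is depth $4$ (outer $\mathrm{OR}$ over $j$; $\mathrm{AND}$ over $i$; $\mathrm{AND}$ over the clauses of each CNF; $\mathrm{OR}$ over literals in each clause), and that depth $3$ is achieved by merging the two adjacent $\mathrm{AND}$ layers into one unbounded fan-in $\mathrm{AND}$. Once this observation is made, both the semantic claim and the claimed $O(k \cdot L \cdot 2^{2q})$ bound on size and depth $3$ follow immediately.
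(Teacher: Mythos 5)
Your proposal is correct and follows essentially the same route as the paper's proof: absorb the $\oplus \Enc(m)$ into literal negations, select $A^T_{i,j}$ or $A^F_{i,j}$ according to $m_i$ to get a CNF for each $b_{i,j}$, merge the AND over $i$ with the top AND of each CNF to obtain a single CNF for $b_j$, and take the OR over $j$ for depth 3, yielding size $O(k \cdot L \cdot q \cdot 2^q) = O(k \cdot L \cdot 2^{2q})$. Your explicit observation about collapsing the naive depth-4 form to depth 3 by merging adjacent AND layers is exactly the point the paper makes (slightly more tersely) when it notes that the single AND gate for $b_j$ does not increase depth because the top gate of each CNF is also AND.
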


\begin{proof}
It is immediate that the circuit $C_m$ performs the task described in the claim. We now explain how to implement the circuit in small size and depth.

The string $m$ is of length $k$. We note that when using $\Enc(m)$ to prepare $w$, we only need to have $\Enc(m)$ at coordinates $\ell$ such that there exist $i,j$ such that $A_{i,j}(w)$ depends on the $\ell$'th input. As each circuit $A^T_{i,j}$ is a circuit of size $O(q \cdot 2^q)$ it depends on at most $O(q \cdot 2^q)$ input bits. Thus, $C_m$ only requires $O(k \cdot L \cdot q \cdot 2^q)$ bits of $\Enc(m)$. Overall, the size of the advice of $C_m$ is $O(k \cdot L \cdot q \cdot 2^q)$.

Computing every bit of $w$ amounts to at most one negation gate, and does not increase the depth.
For every $i \in [k]$ and $j \in [L]$, we want to compute the bit $b_{i,j}$ which is $1$ if and only if $A_{i,j}(w)=m_i$. Note that if $m_i=1$ then $b_{i,j}=A^T_{i,j}(w)$ and if $m_i=0$ then $b_{i,j}=A^F_{i,j}(w)$. As $m_i$ is a fixed constant, this gives that for every $i \in [k]$ and $j \in [L]$, $b_{i,j}$ can be computed by a CNF of size $O(q \cdot 2^q)$ that is applied on the input $z$.
For every $j \in [L]$, computing $b_j$ is done using a single AND gate, and, since the top gate of the CNF computing $b_i$ is also an AND gate, this does not increase the depth. Finally, computing the output adds a top OR gate.
overall, the depth is $3$ and the size is bounded by:
\[ O(k \cdot L \cdot q \cdot 2^q) \le O(k \cdot L \cdot 2^{2q}). \qedhere \]
\end{proof}

It is important to note that in Lemma \ref{lem:exists C}, we start from the assumption that $\Dec$ has $\delta=\frac{1}{2k}$ (which is achieved by reducing the initial $\delta$ at the cost of making more queries). This value of $\delta$ is smaller than $\frac{1}{k}$. By the definition of an ARLLD, this implies the following.

\begin{claim}\label{clm:Cm decodes}
$\Pr_{m \from \B^k,z \from \BSC^n_{\half-2\eps}}[C_m(z)=1] \ge \frac{1}{6}$.
\end{claim}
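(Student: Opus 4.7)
The plan is to directly invoke the ARLLD guarantee for $\Dec$ from Definition \ref{dfn:ARLLD}. By hypothesis, $\Dec$ is a $(\half-\eps, L, q, \frac{1}{2k})$-ARLLD, so with probability at least $1/3$ over the experiment $(m, z, w) \from \RNSY^{\Enc}_{\half-2\eps}$, there exists an index $j \in [L]$ such that $\Pr_{i \from [k]}[\Dec^w(i,j) = m_i] \ge 1 - \frac{1}{2k}$.

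The key observation is that this fractional bound actually implies \emph{perfect} decoding on all coordinates. Indeed, the number of indices $i \in [k]$ for which $\Dec^w(i,j) \ne m_i$ is a nonnegative integer, and the displayed inequality says that this number is at most $k \cdot \frac{1}{2k} = \frac{1}{2}$, forcing it to equal $0$. Thus, with probability at least $1/3$ over $(m,z,w) \from \RNSY^{\Enc}_{\half-2\eps}$, there exists $j \in [L]$ such that $\Dec^w(i,j) = m_i$ for \emph{every} $i \in [k]$.

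Next I would apply Claim \ref{clm:Cm simulates Dec}, which characterizes the event $C_m(z) = 1$ as precisely the existence of some $j \in [L]$ with $\Dec^{\Enc(m) \oplus z}(i,j) = m_i$ for all $i \in [k]$. Since the $\RNSY_{\half-2\eps}$ experiment is exactly the process of drawing $m \from \B^k$ and $z \from \BSC^n_{\half-2\eps}$ independently and setting $w = \Enc(m) \oplus z$, the conclusion of the previous paragraph gives $\Pr_{m \from \B^k, z \from \BSC^n_{\half-2\eps}}[C_m(z)=1] \ge \frac{1}{3} \ge \frac{1}{6}$, finishing the proof. There is no real obstacle in this claim: it is essentially a bookkeeping step that translates the ARLLD guarantee into a statement about the circuits $C_m$, with the integrality trick converting an $\frac{1}{2k}$-fractional-error guarantee into exact decoding that matches the acceptance condition of $C_m$.
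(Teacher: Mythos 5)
Your proof is correct and follows essentially the same route as the paper's: invoke the ARLLD guarantee, use the integrality argument (error fraction below $1/k$ forces exact decoding on all coordinates), and then apply Claim~\ref{clm:Cm simulates Dec} to rewrite the event as $C_m(z)=1$. Like the paper, you actually establish the stronger bound of $1/3$, which of course implies the stated $1/6$.
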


\begin{proof}
By the definition of an ARLLD we have that with probability at least $1/3$ over choosing $(m,z,w) \from \RNSY^{\Enc}_{\half-2\eps}$, there exists $ j \in [L]$ such that
\[ \Pr_{i \from [k]}[\Dec^w(i,j)=m_i] \ge 1-\frac{1}{2k}. \]
However, as we consider choosing a random $i \from [k]$, if the latter probability is greater than $1-\frac{1}{k}$ then it must be one. We can therefore conclude that with probability at least $1/3$ over choosing $(m,z,w) \from \RNSY^{\Enc}_{\half-2\eps}$, there exists $ j \in [L]$ such that for every $i \in [k]$, $\Dec^w(i,j)=m_i$.
By Claim \ref{clm:Cm simulates Dec} this gives that
$\Pr_{m \from \B^k,z \from \BSC^n_{\half-2\eps}}[C_m(z)=1] \ge \frac{1}{3}$ as required.
\end{proof}

On the other hand, we can show that:

\begin{claim}\label{clm:Cm does not decode}
$\Pr_{m \from \B^k,z \from \BSC^n_{\half}}[C_m(z)=1] \le L \cdot 2^{-k}$.
\end{claim}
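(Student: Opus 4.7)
The plan is to exploit the crucial fact that when $z \from \BSC^n_{\half}$, the string $z$ is uniformly distributed over $\B^n$, and therefore $w = \Enc(m) \oplus z$ is uniformly distributed over $\B^n$ and \emph{independent} of $m$. So we can reinterpret the probability over the pair $(m, z) \from \B^k \times \BSC^n_{\half}$ as a probability over a pair $(m, w) \from \B^k \times \B^n$ of independent uniform strings.

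Next, I would rewrite the event $\{C_m(z)=1\}$ using Claim \ref{clm:Cm simulates Dec}: this event holds if and only if there exists $j \in [L]$ such that $\Dec^w(i,j) = m_i$ for every $i \in [k]$. To package this cleanly, for each $j \in [L]$ define the (deterministic) function $f_j : \B^n \to \B^k$ by $f_j(w) = \bigl(\Dec^w(1,j),\ldots,\Dec^w(k,j)\bigr)$. Then the event $\{C_m(z)=1\}$ is exactly the event $\{m \in S(w)\}$, where $S(w) = \{f_j(w) : j \in [L]\} \subseteq \B^k$.

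The key point is that, for every fixed $w \in \B^n$, the set $S(w)$ has size at most $L$ (it is the image of $[L]$ under $j \mapsto f_j(w)$). Hence, for every fixed $w$, $\Pr_{m \from \B^k}[m \in S(w)] \le |S(w)|/2^k \le L/2^k$. Averaging this bound over the (independent) uniform choice of $w \in \B^n$ yields
\[
\Pr_{m \from \B^k, z \from \BSC^n_{\half}}[C_m(z)=1] = \Pr_{m \from \B^k, w \from \B^n}[m \in S(w)] \le L \cdot 2^{-k},
\]
as desired.

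There is no real obstacle here: the argument is essentially a union bound exploiting independence of $m$ and $w$ under the $\BSC^n_{\half}$ channel, together with the trivial observation that varying $j$ over $[L]$ produces at most $L$ candidate messages. The step that matters conceptually is recognizing that the uniform-noise regime $p = \half$ completely decouples $m$ from the oracle $w$ seen by $\Dec$, which is exactly the reason the decoder cannot hope to do better than guess one of at most $L$ possible messages.
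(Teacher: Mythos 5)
Your proposal is correct and takes essentially the same approach as the paper: both exploit that at $p = \tfrac12$ the word $w = \Enc(m) \oplus z$ is uniform and independent of $m$, so $m$ matches any of the $L$ candidate messages produced by $\Dec^w(\cdot, j)$ with probability at most $L \cdot 2^{-k}$. The only cosmetic difference is that you first fix $w$ and bound $|S(w)| \le L$, while the paper bounds each $j$ separately and then applies a union bound over $j \in [L]$; these are the same calculation.
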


\begin{proof}
The first step of $C_m(z)$ is to prepare $w=\Enc(m) \oplus z$.
However, for $p=\half$, and $z \from \BSC^n_p$, we have that $w=\Enc(m) \oplus z$ is uniformly chosen, and independent of $m$. This means that the bits $A_{i,j}(w)$ for $i \in [k]$ and $j \in [L]$ are independent of $m$. Consequently, for every $j \in [L]$, we have that:
\begin{align*}
& \Pr_{m \from \B^k,z \from \BSC^n_{\half},w=\Enc(m) \oplus z}[m=A_{1,j}(w)\circ \ldots \circ A_{k,j}(w)] \\
&\le \Pr_{m \from \B^k,w \from \B^n}[m=A_{1,j}(w)\circ \ldots \circ A_{k,j}(w)] \le 2^{-k}
\end{align*}
By a union bound over all choices of $j \in [L]$, we have that:
\[ \Pr_{m \from \B^k,z \from \BSC^n_{\half},w=\Enc(m) \oplus z}[\exists j:\mbox{s.t.\ } m=A_{1,j}(w)\circ \ldots \circ A_{k,j}(w)] \le L \cdot 2^{-k} \]
It follows that:
\[ \Pr_{m \from \B^k,z \from \BSC^n_{\half}}[C_m(z)=1] \le L \cdot 2^{-k}, \]
as required.
\end{proof}

We are finally ready to prove Lemma \ref{lem:exists C}.

\begin{proof}[Proof of Lemma~\ref{lem:exists C}]
By our choices, we have that $0 < L\cdot 2^{-k} \le \beta < 1$.
By applying Markov's inequality to Claims~\ref{clm:Cm decodes} and~\ref{clm:Cm does not decode}, we can see that:
\begin{align*}
\Pr_{m \from \B^k} \left[\Pr_{z \from \BSC^n_{\half-2\eps}}[C_m(z) \neq 1] > \frac{9}{10}\right] &< \frac{5/6}{9/10} \le \frac{95}{100}.\\
\Pr_{m \from \B^k} \left[\Pr_{z \from \BSC^n_{\half}}[C_m(z)=1] > \sqrt{\beta}\right] &< \sqrt{\beta}.
\end{align*}
Therefore, for a sufficiently small constant $\beta>0$, by a union bound, we get that there exists $m \in \B^k$ satisfying both:
\begin{align*}
\Pr_{z \from \BSC^n_{\half-2\eps}}[C_m(z) = 1] &\ge \frac{1}{10},\\
\Pr_{z \from \BSC^n_{\half}}[C_m(z)=1] &\le \sqrt{\beta},
\end{align*}
It is possible to amplify these thresholds to $0.99$ and $0.01$ as follows: Let $c$ be a constant that we choose later, and let $n'=cn$. Consider a circuit $C:\B^{n'} \ar \B$ that when receiving input $x \in \B^{n'}$, treats it as $c$ strings $x_1,\ldots,x_c \in \B^n$. The circuit $C$ will apply $C_m$ on each of the $c$ strings, and the final output is the OR of the results. As the top gate of $C_m$ is an OR gate, adding an additional OR gate, does not increase the depth of the circuit. The size of the circuit increases by a constant factor.

We can view $z \from \BSC^{n'}_{p}$ as obtained by concatenating $c$ strings $z_1,\dots, z_c$, where each of them is sampled uniformly and independently at random from $\BSC^n_p$.
Hence, the event $C(z) = 1$ is identical to $\bigvee_{\ell \in [c]} C_m(z_\ell) = 1$.
Therefore,
$$\Pr_{z \from \BSC^{n'}_{\half-2\eps}}[C(z) = 1] \ge 1 - \left(\frac{9}{10}\right)^c,$$
and
$$\Pr_{z \from \BSC^{n'}_{\half}}[C(z)=1] \le c\cdot\sqrt{\beta}.$$
By choosing $c$ to be sufficiently large, we can see that $1 - (\frac{9}{10})^c \ge 0.99$.
By choosing $\beta$ to be sufficiently small, we also have that $c \sqrt{\beta} \le 0.01$.
\end{proof}

\section{Limitations on black-box proofs for hard-core predicates}
\label{sec:Hard-core}

In this section, we present our results regarding the limitations on black-box proofs for hard-core predicate theorems. In Section \ref{sec:HC compute}, we state our results for functions that are hard to compute, give a formal restatement of Theorem \ref{ithm:compute}, and prove the theorem. In Section \ref{sec:HC invert}, we state our results for functions that are hard to invert, give a formal restatement of Theorem \ref{ithm:invert}, and prove the theorem.

\subsection{The case of functions that are hard to compute}
\label{sec:HC compute}

\subsubsection{The model for black-box proofs}
\label{sec:HC compute definition}

In this section, we state and explain our model for black-box proofs for hard core predicates, in the setting of functions that are hard to compute. The formal definition is given in Definition \ref{dfn:HC compute}. Below, we provide a detailed explanation for the considerations made while coming up with the formal definition. The reader can skip directly to the formal definition if they wish to.

\paragraph{Explanation of the model:}
Recall that (as explained in Section \ref{sec:intro:HC:compute}) the Goldreich-Levin theorem (stated precisely in Theorem \ref{thm:GL compute}) has the following form:
\begin{itemize}
\item We are given an arbitrary hard function $g:\B^{\ell} \ar \B^{\ell}$. (Intuitively, it is assumed that it is hard to compute $g$ with success probability $\rho$).
\item There is a specified construction that transforms $g$ into a predicate $g^{\pred}:\B^{\ell'} \ar \B$ for some $\ell'$ related to $\ell$. (Intuitively, we will want to argue that $g^{\pred}$ is a hard-core predicate that is hard to compute with success $\half+\eps$).

    We will model this construction as a map $\Con$, which, given $g$ produces $g^{\pred}$. We place \emph{no limitations} on the map $\Con$ (and, in particular, do not require that $g^{\pred}$ can be efficiently computed if $g$ is). This only makes our results stronger.

    In the case of Theorem \ref{thm:GL compute}, we have that: $\Con(g)=g^{\pred}$ where $\ell'=2\ell$ and we think of the $\ell'$-bit long input of $g^{\pred}$ as two strings $x,r \in \B^{\ell}$, setting:  \[g^{\pred}(x,r)=\Enc^{\Hadamard}(g(x))_r = (\sum_{i \in [\ell]} g(x)_i \cdot r_i ) \mod 2. \]
\item We model the proof showing that $g^{\pred}$ is a hard-core predicate in the following way: The proof is a pair $(\Con,\Red)$ where $\Red^{(\cdot)}$ is an oracle procedure, such that when $\Red^{(\cdot)}$ receives oracle access to an ``adversary'' $h:\B^{\ell'} \ar \B$ that breaks the security of $g^{\pred}$, we have that $\Red^{h}$ breaks the security of $g$. More precisely, we require that: for every $g:\B^{\ell} \ar \B^{\ell}$ and for every $h:\B^{\ell'} \ar \B$ such that:
    \[ \Pr_{x \from U_{\ell'}}[h(x)=g^{\pred}(x)] \ge \half+\eps,\]
    it holds that:
    \[\Pr_{x \from U_{\ell}}[\Red^{h}(x)=g(x)] \ge \rho.\]
\item In the actual definition, we will allow the reduction to have more power (which only makes our results stronger). As we are aiming to prove a result on circuits (which are allowed to use nonuniform advice) we will allow the reduction to receive an advice string $\alpha$ of length $t$, where, this advice string can depend on $g$ and $h$. This leads to the following strengthening of the requirement above. Namely, we will require that: for every $g:\B^{\ell} \ar \B^{\ell}$ and for every $h:\B^{\ell'} \ar \B$, such that:
    \[ \Pr_{x \from U_{\ell'}}[h(x)=g^{\pred}(x)] \ge \half+\eps,\]
    there exists $\alpha \in \B^t$ such that:
    \[\Pr_{x \from U_{\ell}}[\Red^{h}(x,\alpha)=g(x)] \ge \rho.\]
    We remark that in many related settings (for example, ``hardness amplification''; see \cite{SV08,GSV18}, for a discussion) known proofs by reduction \emph{critically make use} of the ability to introduce nonuniformity, and so, we feel that when ruling out black-box proofs in scenarios involving circuits, it is necessary to consider nonuniform black-box reductions.
\item We make no restrictions on the complexity of the procedure $\Red^{(\cdot)}$, except for requiring that it makes at most $q$ queries to its oracle (for some parameter $q$). Our black-box impossibility results will follow from proving lower bounds on $q$.
\end{itemize}

\paragraph{Formal definition:}
We now give a formal definition of our model for black-box proofs for hard-core predicates.

\begin{definition}[Nonuniform black-box proofs for hard-core predicates for hard-to-compute functions]
\label{dfn:HC compute}
A pair $(\Con,\Red)$ is a \remph{nonuniform black-box proof} for \remph{hard-core predicates for hard-to-compute functions} with parameters $\ell,\ell',\rho,\eps$, that uses \remph{$q$ queries}, and \remph{$t$ bits of advice} if:
\begin{itemize}
\item $\Con$ is a \emph{construction map} which given a function $g:\B^{\ell} \ar \B^{\ell}$, produces a function $\Con(g)=g^{\pred}$, where $g^{\pred}:\B^{\ell'} \ar \B$.
\item $\Red^{(\cdot)}$ is a \emph{reduction}, that is an oracle procedure that, given oracle access to a function $h:\B^{\ell'} \ar \B$, makes at most $q$ queries to its oracle.
\end{itemize}
Furthermore, for every functions $g:\B^{\ell} \ar \B^{\ell}$ and $h:\B^{\ell'} \ar \B$ such that:
    \[ \Pr_{x \from U_{\ell'}}[h(x)=g^{\pred}(x)] \ge \half+\eps, \]
    there exists $\alpha \in \B^t$, such that:
    \[ \Pr_{x \from U_{\ell}}[\Red^h(x,\alpha)=g(x)] \ge \rho. \]
\end{definition}

\paragraph{The role of the number of queries, and black-box impossibility results: }
We now explain the role of the parameter $q$ (that measures the number of queries made by $\Red$) and why lower bounds on $q$ translate into black-box impossibility results.

For this purpose, it is illustrative to examine the argument showing that nonuniform black-box proofs yield hard-core predicates: When given a pair $(\Con,\Red)$ that is a nonuniform black-box proof for hard-core predicates for hard-to-compute functions with parameters $\ell,\ell',\rho,\eps$, that uses $q$ queries, and $t$ bits of advice, we obtain that for any function $g:\B^{\ell} \ar \B^{\ell}$, if there exists a circuit $C':\B^{\ell'} \ar \B$ of size $s'$ such that:
\[ \Pr_{x \from U_{\ell'}}[C'(x)=g^{\pred}(x)] \ge \half+\eps, \]
    then there exists $\alpha \in \B^t$, such that:
    \[ \Pr_{x \from U_{\ell}}[\Red^{C'}(x,\alpha)=g(x)] \ge \rho. \]
Note that if the reduction $\Red$ can be implemented by a circuit of size $r$, then the circuit $C(x)=\Red^{C'}(x,\alpha)$ is a circuit of size:
\[ s=r+t+q \cdot s' \]
that computes $g$ with success probability $\rho$.

It follows that in a black-box proof, with $q$ queries, and $t$ bits of advice, we get a hard-core theorem that needs to assume that the original function $g$ has hardness against circuits of size $s$, for:
\[ s \ge q +t. \]

\subsubsection{Precise statements of limitations}

Our main result on black-box proofs for hard-core predicates in the setting of functions that are hard to compute is the following theorem.

\begin{theorem}
\label{thm:hard-core compute}
There exist universal constants $\beta>0$ and $c>1$ such that the following holds for any sufficiently large $\ell$ and $\ell'$, $\eps > 0$, $t \le 2^{\ell/5}$, and $\rho \ge 2^{-\ell/3}$.
Let $(\Con,\Red)$ be a nonuniform black-box proof for hard-core predicates for hard-to-compute functions with parameters $\ell,\ell',\rho,\eps$, that uses $q$ queries, and $t$ bits of advice. Then
\[ q \ge \frac{1}{\eps^{\beta}}- c (t+\ell). \]
\end{theorem}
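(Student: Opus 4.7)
The plan is to follow the strategy used in the proof of Theorem~\ref{thm:main:code:small eps}, converting the black-box pair $(\Con,\Red)$ into a constant-depth circuit that solves the coin problem of distinguishing $\BSC^{2^{\ell'}}_{\half-2\eps}$ from $\BSC^{2^{\ell'}}_{\half}$, and then invoking Corollary~\ref{cor:coin-problem}. Under the analogy with local list-decoding, a uniformly random $g:\B^\ell\ar\B^\ell$ plays the role of the message, $g^{\pred}=\Con(g)$ plays the role of the codeword, and the advice string $\alpha\in\B^t$ plays the role of the list index $j\in[L]$. The ``received word'' is $h=g^{\pred}\oplus z$ for $z$ drawn from a binary symmetric channel.

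When $z\from\BSC^{2^{\ell'}}_{\half-2\eps}$, a Chernoff bound gives that $h$ is within relative Hamming distance $\half-\eps$ of $g^{\pred}$ with overwhelming probability, so the black-box property yields some $\alpha$ with $|\{x:\Red^h(x,\alpha)=g(x)\}|\ge\rho\cdot 2^\ell\ge 2^{2\ell/3}$. When $z\from\BSC^{2^{\ell'}}_{\half}$, $h$ is uniform and, jointly with a uniform $g$, independent of $g$. For each fixed $h$ and $\alpha$, the bits $b_{x,\alpha}(h)=[\Red^h(x,\alpha)=g(x)]$ are, over the randomness of $g$, independent $\mathrm{Bernoulli}(2^{-\ell})$ variables, so $\sum_x b_{x,\alpha}$ is $\mathrm{Binomial}(2^\ell,2^{-\ell})$ with mean $1$. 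A Chernoff tail bound combined with a union bound over the $2^t\le 2^{2^{\ell/5}}$ advice strings yields that with probability $1-o(1)$ in the joint experiment of $g$ and $z\from\BSC^{2^{\ell'}}_{\half}$, every $\alpha$ satisfies $\sum_x b_{x,\alpha}\le 2^{\ell/3}$. Thus there is a multiplicative gap of at least $2^{\ell/3}$ between the ``good'' regime (some $\alpha$ has $\sum_x b_{x,\alpha}\ge 2^{2\ell/3}$) and the ``bad'' regime (every $\alpha$ has $\sum_x b_{x,\alpha}\le 2^{\ell/3}$).

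I would then construct a constant-depth circuit $C_g$ hardwired with $g$ and $g^{\pred}$ as follows. For each $(x,\alpha)\in\B^\ell\times\B^t$, compute $b_{x,\alpha}$ as a depth-$q$ decision tree over the bits of $z$ (each oracle query $h_i$ is simply $z_i$ or $\bar{z}_i$ depending on the hardwired $g^{\pred}(i)$, and the $\ell$-bit comparison with the hardwired $g(x)$ is folded into the tree), realized as a DNF of size $O(q\cdot 2^q)$. For each $\alpha$, aggregate these bits via a constant-depth threshold circuit $T_\alpha$ that accepts inputs of relative weight $\ge 2^{-\ell/3}$ and rejects inputs of weight $\le 2^{-2\ell/3}$. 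Finally, OR the $T_\alpha$'s over $\alpha\in\B^t$. A Markov averaging argument over the random $g$ then fixes a single $g$ for which $\Pr_{z\from\BSC^{2^{\ell'}}_{\half-2\eps}}[C_g(z)=1]\ge 0.99$ and $\Pr_{z\from\BSC^{2^{\ell'}}_{\half}}[C_g(z)=1]\le 0.01$.

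The main obstacle is realizing $T_\alpha$ in constant depth, since the threshold fractions $2^{-\ell/3}$ and $2^{-2\ell/3}$ are sub-constant and Theorem~\ref{thm:Ajtai} applies only between constant thresholds. I would bridge the gap by OR-sampling: hardwire $M=2^{O(\ell)}$ random subsets $S_1,\ldots,S_M\subseteq\B^\ell$ of size $s=2^{\ell/2}$, let $y_i=\bigvee_{x\in S_i} b_{x,\alpha}$, and apply Ajtai's approximate majority to $(y_1,\ldots,y_M)$. For any input of weight $\ge 2^{-\ell/3}$, $\Pr_S[y_S=1]\ge 1-e^{-2^{\ell/6}}$; for weight $\le 2^{-2\ell/3}$, $\Pr_S[y_S=1]\le 2^{-\ell/6}$. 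A Chernoff bound over the $M$ samples together with a union bound over all $2\cdot 2^{2^\ell}$ possible inputs in the two relevant weight ranges lets us hardwire $S_1,\ldots,S_M$ so that the empirical fraction of $y_i=1$ is $\ge 0.9$ on every input of weight $\ge 2^{-\ell/3}$ and $\le 0.1$ on every input of weight $\le 2^{-2\ell/3}$, matching the gap Theorem~\ref{thm:Ajtai} handles. The resulting $C_g$ has constant depth $d=O(1)$ and size $2^t\cdot 2^\ell\cdot O(q\cdot 2^q)+2^t\cdot 2^{O(\ell)}=2^{O(t+\ell+q)}$, so Corollary~\ref{cor:coin-problem} forces $2^{O(t+\ell+q)}\ge\exp(\Omega((1/\eps)^{1/(d-1)}))$. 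Rearranging yields $q\ge 1/\eps^{\beta}-c(t+\ell)$ for universal constants $\beta>0$ and $c>1$.
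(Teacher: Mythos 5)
Your proposal follows essentially the same route as the paper's proof: treat $g^{\pred}$ as a ``codeword'' with $g$ as the ``message'' and $\alpha$ as the list index, convert the reduction into a constant-depth circuit $C_g$ distinguishing $\BSC^{2^{\ell'}}_{\half-2\eps}$ from $\BSC^{2^{\ell'}}_{\half}$, fix a good $g$ by averaging, and invoke the coin-problem lower bound (Corollary~\ref{cor:coin-problem}). Two minor deviations are worth noting. First, where you build a bespoke threshold circuit separating weights $\ge 2^{-\ell/3}$ from $\le 2^{-2\ell/3}$ via a direct OR over random $2^{\ell/2}$-sized subsets followed by Ajtai's approximate majority, the paper abstracts this as Lemma~\ref{lem:D rho} (the circuit $\D^{2^\ell}_\rho$), which separates the tighter ratio $\rho$ versus $\rho/10$ using conjunctions of $a/5$ of $a/\rho$ sampled coordinates; both are Adleman-derandomized sampling constructions of the same flavor, and your larger multiplicative gap (available because the $\BSC_{\half}$ regime really drives the count all the way down to $\le 2^{\ell/3}$, not just below $\rho\cdot2^\ell/10$) is what lets the simpler OR-aggregator suffice. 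Second, your Chernoff step in the $\BSC_{\half-2\eps}$ regime needs $\eps^2\cdot 2^{\ell'}=\Omega(1)$ to give a $0.99$-probability guarantee, so as written the argument covers only $\eps\gtrsim 2^{-\ell'/2}$, whereas the theorem asserts the bound for all $\eps>0$. The paper handles the small-$\eps$ tail by observing that $\eps\ge 2^{-\ell'}$ always (since the agreement probability is an integer multiple of $2^{-\ell'}$) and that a black-box proof for parameter $\eps$ is trivially one for $\eps^{1/3}\ge 2^{-\ell'/3}$, then applying the circuit argument with $\eps^{1/3}$ and absorbing the lost factor of $3$ into the exponent $\beta$; you should include this substitution to obtain the theorem as stated.
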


We now explain why Theorem \ref{thm:hard-core compute} implies the informal statement made in Theorem \ref{ithm:compute}. Recall that in Section \ref{sec:HC compute definition} we explained that when using a nonuniform black-box proof to obtain a hard-core predicate, we get a hard-core predicate theorem in which $s \ge q+t$.

Theorem \ref{thm:hard-core compute} implies that it is impossible for such a proof to establish $\eps=s^{-2/\beta}$ (even if $\rho$ is very small). This follows as otherwise, using the fact that $s \ge q+t \ge t$ and $s \geq \ell$ (since the input to the circuit is $x \in \{0,1\}^\ell$), we get that:
\[ q \ge \frac{1}{\eps^{\beta}} -c(t + \ell) \ge s^2 -c(t+\ell)> s,\]
which is a contradiction to $s \ge q+t \ge q$. In particular, the parameter setting considered in Theorem \ref{ithm:compute}, in which  $\eps=\frac{1}{s^{\omega(1)}}$, is impossible to achieve.

\subsubsection{Proof of Theorem \ref{thm:hard-core compute}}

Theorem \ref{thm:hard-core compute} will follow from the next lemma, showing that a proof with small $q$ can be transformed into a small constant depth circuit for the coin problem.

\begin{lemma}\label{lem:proof-implies-coin-distinguish}
\label{lem:hard-core compute exists C}
There exists a universal constant $d>1$
such that the following holds
 for any sufficiently large $\ell$ and $\ell'$, $\eps \ge 2^{-\ell'/3}$, $t \le 2^{\ell/5}$, and $\rho \ge 2^{-\ell/3}$.
Let $(\Con,\Red)$ be a nonuniform black-box proof for hard-core predicates for hard-to-compute functions with parameters $\ell,\ell',\rho,\eps$, that uses $q$ queries, and $t$ bits of advice. Then there exists a circuit $C$ of size $\poly(2^q,2^{\ell},2^t)$ and depth $d$ such that:
\begin{itemize}
\item $\Pr_{z \from \BSC^{n}_{\half-2\eps}}[C(z)=1] \ge 0.99$.
\item $\Pr_{z \from \BSC^{n}_{\half}}[C(z)=1] \le 0.01$.
\end{itemize}
\end{lemma}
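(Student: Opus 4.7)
The plan is to mimic the proof of Lemma~\ref{lem:exists C} (the analogous statement for local list-decoders) in the hard-core-predicate setting, viewing the hidden function $g:\B^{\ell}\ar\B^{\ell}$ as playing the role of the ``message'' and the reduction $\Red$ as playing the role of the ``local list-decoder.''  Set $n=2^{\ell'}$ and index the coordinates of $z \in \B^n$ by $y \in \B^{\ell'}$.  For any $g$, define the noisy oracle $h_z(y)=g^{\pred}(y)\oplus z_y$, so that $\Pr_{y \from U_{\ell'}}[h_z(y)=g^{\pred}(y)] = 1-\weight(z)$.  A Chernoff bound --- using $\eps \ge 2^{-\ell'/3}$ and $n=2^{\ell'}$, so $\eps^2 n \ge 2^{\ell'/3}$ --- implies that when $z\from\BSC^n_{\half-2\eps}$ the event $\weight(z)\le \half-\eps$ (equivalently, $h_z$ is an $\eps$-good adversary for $g^{\pred}$) holds with probability $1-\exp(-2^{\Omega(\ell')})$; on this event the reduction hypothesis supplies some $\alpha \in \B^t$ with $\Pr_x[\Red^{h_z}(x,\alpha)=g(x)] \ge \rho$.

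With a fixed $g$ (to be chosen by averaging at the end) hardwired, the circuit $C_g$ on input $z$ performs: (a) compute each $h_z(y)$ by XORing $z_y$ with a hardwired constant; (b) for every $(x,\alpha) \in \B^{\ell}\times \B^{t}$, compute the indicator $b_{x,\alpha}(z) = [\Red^{h_z}(x,\alpha) = g(x)]$ by expanding the depth-$q$ decision tree that $\Red^{h}(x,\alpha)$ induces and comparing its leaves against the hardwired $g(x)$, yielding a CNF of size $O(\ell q 2^q)$ over $z$; (c) for each $\alpha$, compute an indicator $y_\alpha$ that accepts iff the count $c_\alpha = |\set{x : b_{x,\alpha}(z)=1}|$ exceeds a threshold $k_0$, using a constant-depth approximate-threshold subcircuit; (d) output $C_g(z) = \bigvee_\alpha y_\alpha$.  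Taking $k_0 \le \rho\cdot 2^\ell$ ensures that whenever the reduction succeeds the promised $\alpha$ forces $y_\alpha=1$, so $\Pr_{z\from\BSC^n_{\half-2\eps}}[C_g(z)=1] \ge 1-\exp(-2^{\Omega(\ell')})$ for every $g$.

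To bound acceptance under $\BSC^n_\half$ I average over a uniformly random $g$: since $z$ is now uniform and independent of $g^{\pred}$, the oracle $h_z=g^{\pred}\oplus z$ is uniform and independent of $g$.  Conditioning on $h$, the values $g(x)$ are i.i.d.\ uniform in $\B^{\ell}$, independent of $\Red^{h}(x,\alpha)$, so the events $\set{\Red^{h}(x,\alpha)=g(x)}$ are independent across $x$ with probability $2^{-\ell}$; consequently $c_\alpha$ is distributed as $\mathrm{Binomial}(2^{\ell},2^{-\ell})$ (mean $1$) and the standard Chernoff-type tail gives $\Pr[c_\alpha \ge k_0] \le (e/k_0)^{k_0}$.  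Choosing $k_0 = \Theta(\max(1,\,t/\log(t+2)))$ makes this bound at most $2^{-t}/200$, and a union bound over the $2^t$ advice strings yields $\Pr_{g,z}[C_g(z)=1] \le 1/200$.  Markov's inequality then produces a single $g^*$ satisfying both $\Pr_{z\from\BSC^n_{\half-2\eps}}[C_{g^*}(z)=1]\ge 0.99$ and $\Pr_{z\from\BSC^n_\half}[C_{g^*}(z)=1]\le 0.01$.

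The main technical obstacle is realising the stage-(c) approximate-threshold subcircuit at constant depth and size $\poly(2^\ell,2^t)$, where $k_0$ may be as large as $\tilde\Theta(t)$: a naive OR of ANDs over all size-$k_0$ subsets of $\B^\ell$ has size $\binom{2^\ell}{k_0} \le 2^{\ell k_0}$, which lies within the budget in the extreme parameter regimes (small $t$, or $t$ close to $2^{\ell/5}$ where $\log t=\Theta(\ell)$ and $\ell k_0 = O(t)$) but not uniformly in $t$.  The general case is handled by an Ajtai-style approximate-counting construction built on Theorem~\ref{thm:Ajtai}, combined with duplication of the indicator bits and padding with hardwired $0$'s and $1$'s that shifts the two target counts near the midpoint of a longer padded string with a constant-bias gap, thereby reducing to approximate majority.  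Assembling all stages, $C_{g^*}$ has constant depth $d$ and total size $2^t \cdot 2^\ell \cdot O(\ell q 2^q)$ multiplied by the threshold subcircuit size, all within $\poly(2^q,2^\ell,2^t)$, completing the proof.
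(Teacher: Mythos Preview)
Your overall architecture is the same as the paper's: hardwire $g$ and $g^{\pred}$, form $h_z=g^{\pred}\oplus z$, compute each indicator $b_{x,\alpha}=[\Red^{h_z}(x,\alpha)=g(x)]$ as a small CNF, threshold over $x$, OR over $\alpha$, and finally average over $g$. The acceptance analysis (Chernoff on $\weight(z)$) and the rejection analysis (conditioning on $h$ makes $g$ uniform and independent, so $c_\alpha\sim\mathrm{Bin}(2^\ell,2^{-\ell})$) are both correct.

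The one genuine gap is your stage~(c) threshold circuit. You set the threshold at $k_0=\Theta(t/\log t)$ and then sketch a ``duplication and padding'' reduction to Ajtai's approximate majority. That sketch does not work as described: padding with $A$ zeros and $B$ ones on $n=2^\ell$ bits turns the two relative weights into $(\rho n+B)/N$ and $(k_0+B)/N$ with $N=n+A+B\ge n$, so the gap is at most $(\rho n-k_0)/n\le\rho$, which is sub-constant when $\rho$ is small; duplication does not change relative weights at all. So you never reach a constant-bias instance of approximate majority. Your fallback of an exact threshold by OR-of-ANDs has size $\binom{2^\ell}{k_0}\le 2^{\ell k_0}\approx 2^{\ell t/\log t}$, which is not $\poly(2^\ell,2^t)$ for generic $t$.

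The paper avoids this entirely by placing the threshold at $\rho$ rather than at $k_0$: it builds (Lemma~\ref{lem:D rho}) a constant-depth circuit $D^{2^\ell}_\rho$ of size $\poly(2^\ell/\rho)=\poly(2^\ell)$ that accepts when the relative weight is $\ge\rho$ and rejects when it is $\le\rho/10$, via random subsampling of $\Theta(1/\rho)$ coordinates, a fixed constant threshold on the sample, and Ajtai's approximate majority over independent repetitions. This simultaneously simplifies the rejection analysis: with threshold $\rho/10$, the per-$\alpha$ tail probability is at most $(e/(\rho' 2^\ell))^{\rho' 2^\ell}$ with $\rho'=\rho/10$, which under $\rho\ge 2^{-\ell/3}$ is doubly-exponentially small in $\ell$ and easily absorbs the union bound over $2^t$ advice strings with $t\le 2^{\ell/5}$. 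If you want to keep your choice of $k_0$, you can still plug in $D^{2^\ell}_\rho$ (since $k_0\le t\le 2^{\ell/5}\le(\rho/10)2^\ell$), but then your binomial tail at $k_0$ is superfluous; the cleaner move is to drop $k_0$ and follow the paper's threshold-at-$\rho$ route.
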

We first show that Theorem~\ref{thm:hard-core compute} follows from Lemma~\ref{lem:hard-core compute exists C}.

\begin{proof}[Proof of Theorem~\ref{thm:hard-core compute}]
First note that as $x$ is sampled from the uniform distribution on $\{0,1\}^{\ell'}$, we have that $\eps \geq 2^{-\ell'}$. Moreover, by definition it follows that if $(\Con,\Red)$ is a nonuniform black-box proof for hard-core predicates with parameter $\eps$, then it is also a nonuniform black-box proof for hard-core predicates with parameter $\eps^{1/3} \geq 2^{-\ell'/3}$.
Lemma \ref{lem:hard-core compute exists C} and Corollary \ref{cor:coin-problem} then give that:
\[ \poly(2^q,2^{\ell},2^t) \ge \exp\left(d \cdot \eps^{-\frac 1 {3(d-1)}}\right).\]
The statement of Theorem~\ref{thm:hard-core compute} follows by taking the logarithm on both sides and setting $\beta < \frac 1 {3(d-1)}$.
\end{proof}

In the remainder of this section we prove Lemma \ref{lem:hard-core compute exists C}.
Let $(\Con,\Red)$ be a nonuniform black-box proof for hard-core predicates for hard-to-compute functions with parameters $\ell,\ell',\rho,\eps$, that uses $q$ queries, and $t$ bits of advice. Throughout this section we assume that the requirements made in Lemma \ref{lem:hard-core compute exists C} are met.

We will identify functions $h:\B^{\ell} \ar \B$ with strings $h \in \B^{2^{\ell}}$. More precisely, we fix some ordering on strings $x \in \B^{\ell}$ and then, the value of string $h$ at position $x$ is the function $h$ applied on $x$.
We will use $h$ to denote these two objects (both the function and the string) and this means that a function $h$ can be given as an argument to a function that receives strings of length $2^{\ell}$.

\paragraph{High level description of the proof of Lemma \ref{lem:hard-core compute exists C}.}
The proof will use the same structure as the proof of Lemma \ref{lem:exists C}, which was the main technical lemma in the proof of Theorem \ref{thm:main:code:small eps}. Loosely speaking, the reduction $\Red$ plays the role of a local list-decoder, the function $g$ plays the role of the message, the construction map $\Con$ plays the role of the encoder, and the function $g^{\pred}$ (viewed as a $2^{\ell'}$ bit long string) plays the role of the encoding of the message.

Imitating the approach used in the proof of Lemma \ref{lem:exists C}, we will consider the function (or string) $h \in \B^{2^{\ell'}}$ defined by $h=g^{\pred} \oplus z$ where $z \from \BSC^{2^{\ell'}}_{p}$ for $p=\half$ and $p=\half-2\eps$. We will try to show that when the function $g$ is chosen uniformly, then for $p=\half-2\eps$, $\Red^h$ has to succeed, and for $p=\half$, $\Red^h$ cannot succeed. We will then leverage this difference to produce a constant depth circuit of size roughly $2^q$ that distinguishes $\BSC^{2^{\ell'}}_{\half-2\eps}$ from $\BSC^{2^{\ell'}}_{\half}$.

However, there are some complications. When the reduction $\Red^h$ succeeds for $p=\half-2\eps$, we only have that there exists an $\alpha$ with which it computes $g$ with success $\rho$ which is extremely small, and is not much larger than the success probability of $\Red^h$ for $p=\half$ (which we will show is less than $\rho/10$).

At first glance, this seems like a problem, as in general, in order to distinguish success probability $a+\rho/10$ from $a+\rho$ for an arbitrary value of $a \in [0,1]$, constant depth circuits need to have size that is $2^{(1/\rho)^{\Omega(1)}}$ which is much too large for our purposes. Fortunately, for $a=0$ (that is for the task of distinguishing success probability $\rho/10$ from $\rho$) it is possible to distinguish with circuits of size $\poly(1/\rho)$. The formal statement of this is given in Lemma \ref{lem:D rho}.

We now return to the formal proof, starting with the following lemma.

\begin{lemma}
\label{lem:D rho}
There exists a universal constant $d>1$, such
that any $n,\rho$ there exists a circuit $\D^n_{\rho}:\B^n \ar \B$ of size $\poly(n/\rho)$ and depth $d$, such that for every $x \in \B^n$:
\begin{itemize}
\item If $\weight(x) \ge \rho$ then $\D^n_{\rho}(x)=1$.
\item If $\weight(x) \le \rho/10$ then $\D^n_{\rho}(x)=0$.
\end{itemize}
\end{lemma}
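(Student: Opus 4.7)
The plan is to build $\D^n_\rho$ in two stages, following the classical template used for approximate majority: a randomized amplification layer converts the tiny weight-thresholds $\rho/10$ and $\rho$ into constants, and then Ajtai's circuit takes over. In the first stage, I compute $m$ monotone \textsc{OR} gates (for $m$ a sufficiently large polynomial in $n$ and $1/\rho$), where the $i$-th one reads a fixed subset $S_i \subseteq [n]$ of size $k = \lceil 1/\rho \rceil$, producing bits $y_i = \bigvee_{j \in S_i} x_j$. In the second stage, $y_1,\dots,y_m$ are fed into the constant-depth, polynomial-size approximate-majority circuit from Theorem~\ref{thm:Ajtai}, configured for thresholds $p=0.2$ and $P=0.5$.

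The key calculation is the behaviour of one \textsc{OR} gate when $S_i$ is chosen uniformly among size-$k$ subsets of $[n]$. If $\weight(x) \ge \rho$, then $\Pr[y_i = 0] \le (1-\rho)^k \le 1/e$, so $\Exp[\mbox{fraction of }y_i = 1] \ge 1-1/e > 0.6$. If $\weight(x) \le \rho/10$, then by the union bound $\Pr[y_i = 1] \le k \cdot (\rho/10) \le 1/10$, so the same expectation is at most $0.1$. For each fixed $x$, a Chernoff bound gives that the empirical fraction of $y_i = 1$ lies within $0.1$ of its expectation except with probability $\exp(-\Omega(m))$. Choosing $m$ a sufficiently large constant multiple of $n$ and union-bounding over all $2^n$ strings $x$, the probabilistic method supplies a fixed choice of $S_1, \dots, S_m$ that works simultaneously for every input: the fraction of ones among the $y_i$'s exceeds $0.5$ whenever $\weight(x) \ge \rho$ and is at most $0.2$ whenever $\weight(x) \le \rho/10$.

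After hard-wiring such subsets, the overall circuit has depth $1$ (the \textsc{OR} layer) plus the constant depth supplied by Ajtai's circuit on $m$ inputs, and total size $mk + \poly(m) = \poly(n/\rho)$, as required. I do not expect serious obstacles: the only parameter to balance is $k$, which must be large enough to drive $\Pr[y_i=0]$ below a constant in the heavy regime yet small enough that the union bound in the light regime still leaves $\Pr[y_i=1]$ a constant strictly below $1-1/e$. Both conditions are met comfortably by $k = \lceil 1/\rho \rceil$, and once this constant gap is opened the machinery of Theorem~\ref{thm:Ajtai} does the rest in constant depth.
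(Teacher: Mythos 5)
Your construction follows the same template as the paper's proof: sample random subsets, compare against a fixed threshold per subset, fix the subsets by a Chernoff-plus-union-bound (Adleman) argument over all $2^n$ inputs, and finish with Ajtai's approximate-majority circuit. Your first layer is a modest simplification, since you use a single \textsc{OR} of fan-in about $1/\rho$ where the paper uses a depth-2 threshold-$a/5$ circuit on samples of size $a/\rho$, but the argument is structurally identical.

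One small numerical slip: with $k = \lceil 1/\rho\rceil$ you have $k\rho \le 1+\rho$, so the union bound gives $\Pr[y_i=1]\le k\rho/10 \le (1+\rho)/10$, which can be as large as $0.2$, not $\le 1/10$ as you wrote. After the $\pm 0.1$ Chernoff margin, the light-regime empirical fraction can therefore reach $0.3$, exceeding your Ajtai threshold $p=0.2$. This is harmless to fix (for instance take $p=0.35$, $P=0.5$, or halve the Chernoff margin, or use $k=\lfloor 1/\rho\rfloor$), so the proof goes through with minor constant adjustments.
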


We note that by the lower bound of Razborov and Smolensky \cite{Razborov,Smolensky} small constant-depth circuits cannot compute the majority function. Nevertheless, by the results of Ajtai \cite{Ajtai} (stated in Theorem \ref{thm:Ajtai}) small constant-depth circuits can compute \emph{approximate majority}. That is, they can distinguish  strings with relative Hamming weight $\ge P$ from strings with relative Hamming weight $\le p$ whenever $p<P$ are constants. The proof of the lemma uses circuits for approximate majority.

\begin{proof}[Proof of Lemma \ref{lem:D rho}]
We first construct a distribution over circuits that achieves the goal. Let $a>1$ be a constant that we choose later. Let $n'=\frac{a}{\rho}$.
Let us consider the experiment $E_1$ in which a uniform multi-set $S$ of $[n]$ of size $n'$ is chosen uniformly. (That is $i_1,\ldots,i_{n'}$ are chosen independently and uniformly from $[n]$ and $S$ is the multi-set $\set{i_1,\ldots,i_{n'}}$). Note that for every $x \in \B^n$,
\begin{itemize}
\item If $\weight(x) \ge \rho$ then $\Exp_{S \from E_1}[\sum_{i \in S} x_i] \ge a$.
\item If $\weight(x) \le \rho/10$ then $\Exp_{S \from E_1}[\sum_{i \in S} x_i] \le a/10$.
\end{itemize}
For every multi-set $S \subseteq [n]$ of size $n'$ we consider the circuit $C_S:\B^n \ar \B$ that works as follows:
\begin{itemize}
\item For every choice of $a/5$ elements $j_1,\ldots,j_{a/5}$ in $S$, compute the conjunction of $x_{j_1},\ldots,x_{j_{a/5}}$.
\item Compute the  disjunction of the $(n')^{a/5}$ bits from the previous item.
\end{itemize}
This gives that there exists a constant $c$, that depends on $a$ such that $C_{S}$ is a circuit of size $(1/\rho)^c$ and depth $2$. Furthermore $C_S(x)$ answers $1$ if and only if $\sum_{i \in S}x_i \ge a/5$.

By a (multiplicative) Chernoff bound,\footnote{Specifically, we mean the following version of the Chernoff bound: If $X$ is the sum of $n$ independent variables $X_1,\ldots,X_n \in [0,1]$, and $\Exp(X)=\mu$ then for every $0 \le \delta \le 1$, $\Pr[|X-\mu| \ge \delta \mu] \le 2e^{-\frac{\delta^2 \cdot \mu}{3}}$.}
it follows that there exists a universal constant $\eta>0$ such that for a sufficiently large constant $a$, for every $x \in \B^n$:
\begin{itemize}
\item If $\weight(x) \ge \rho$ then $\Pr_{S \from E_1}[ \sum_{i \in S}x_i \le a/5] \le 2^{-\eta a} \le 1/3$ which implies $\Pr_{S \from E_1}[C_S(x)=1] \ge 2/3$.
\item If $\weight(x) \le \rho/10$ then $\Pr_{S \from E_1}[\sum_{i \in S}x_i \ge a/5] \le 2^{-\eta a} \le 1/3$ which implies $\Pr_{S \from E_1}[C_S(x)=1] \le 1/3$.
\end{itemize}

By taking $t=O(n)$ independent copies of $C_S$ and computing approximate majority, we can reduce the error probability from $1/3$ to $2^{-2n}$, and apply Adleman's argument to obtain a single circuit of size $ \poly(n/\rho)$ and constant depth. Details follow:

For $t=O(n)$ multi-sets $S_1,\ldots,S_t \subseteq [n]$ of size $n'$, we consider the circuit $C_{S_1,\ldots,S_t}(x)$ which for every $i \in [t]$ computes $b_i=C_{S_i}(x)$ and then computes approximate majority (with parameters $p=0.49$ and $P=0.51$) on the string $b=b_1,\ldots,b_t$. Note that each such circuit has constant depth and size $\poly(n/\rho)$.
Let $E_2$ denote the experiment in which the $t$ sets $S_1,\ldots,S_t$ are chosen independently, where each $S_i \subseteq [n]$ is a uniformly chosen multi-set of size $n'$. By a Chernoff bound, it follows that for every $x \in \B^n$:
\begin{itemize}
\item If $\weight(x) \ge \rho$ then $\Pr_{S_1,\ldots,S_t \from E_2}[C_{S_1,\ldots,S_t}(x)=1] \ge 1-2^{-2n}$.
\item If $\weight(x) \le \rho/10$ then $\Pr_{S_1,\ldots,S_t \from E_2}[C_{S_1,\ldots,S_t}(x)=1] \le 2^{-2n}$.
\end{itemize}
By a union bound over all $2^n$ choices of $x \in \B^n$ we obtain that there exist sets $S'_1,\ldots,S'_t$ such that setting $\D^n_{\rho}=C_{S'_1,\ldots,S'_t}$, we obtain the circuit guaranteed in the statement of the theorem.
\end{proof}

We will prove Lemma \ref{lem:hard-core compute exists C} using the following sequence of claims. The overall structure of the argument is similar to the proof of Lemma \ref{lem:exists C}.

\begin{claim}
\label{clm:hard-core compute A}
For every $x \in \B^{\ell}$ and every $\alpha \in \B^t$, there exists a circuit of size $\ell \cdot q \cdot 2^q$ and depth $2$, $A_{x,\alpha}:\B^{2^{\ell'}} \ar \B^{\ell}$ such that for every $h:\B^{\ell'} \ar \B$,
\[ A_{x,\alpha}(h)=\Red^h(x,\alpha). \]
\end{claim}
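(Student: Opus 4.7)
The plan is to observe that, for fixed $x \in \B^{\ell}$ and $\alpha \in \B^t$, the reduction $\Red^{(\cdot)}(x,\alpha)$ is a deterministic oracle procedure that makes at most $q$ (possibly adaptive) queries to its oracle $h:\B^{\ell'}\ar\B$ and produces an $\ell$-bit output. Viewing $h$ as a string of length $2^{\ell'}$ (so that each query to $h$ corresponds to reading a single input bit), the computation $\Red^{h}(x,\alpha)$ is exactly captured by a decision tree $T_{x,\alpha}$ of depth $q$ on the bits of $h$, whose internal nodes are queries and whose leaves are labeled with $\ell$-bit strings (the outputs). Adaptivity is not an issue here because a decision tree of depth $q$ already has the ability to let each query depend on previous answers.

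For each output coordinate $b \in [\ell]$, I would convert the Boolean function computed by $T_{x,\alpha}$ restricted to its $b$-th output bit into a depth-$2$ DNF in the standard way: enumerate the at most $2^q$ root-to-leaf paths of $T_{x,\alpha}$, keep those paths whose leaves label the $b$-th output coordinate as $1$, and for each such path take the conjunction of the $q$ literals (one per query-answer pair) encountered along the path; finally, take the disjunction of these conjunctions. This yields a DNF of size $O(q \cdot 2^q)$ that on input $h$ outputs the $b$-th coordinate of $\Red^{h}(x,\alpha)$.

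Running these $\ell$ DNFs in parallel (one per output bit) gives a depth-$2$ circuit $A_{x,\alpha}:\B^{2^{\ell'}} \ar \B^{\ell}$ of total size $O(\ell \cdot q \cdot 2^q)$ that satisfies $A_{x,\alpha}(h) = \Red^{h}(x,\alpha)$ for every $h$, which is exactly the claim. There is no real obstacle here: the statement is a routine translation of a bounded-query deterministic oracle procedure into a bounded-depth circuit, entirely analogous to the analogous step in the proof of Lemma~\ref{lem:exists C} (where local list-decoder computations $\Dec^w(i,j)$ were converted into depth-$2$ CNFs/DNFs of size $O(q \cdot 2^q)$), and it is this translation that will subsequently let us paste $A_{x,\alpha}$ together with the constant-depth distinguishing circuit $\D^{n}_{\rho}$ from Lemma~\ref{lem:D rho} to build the overall constant-depth circuit required in Lemma~\ref{lem:hard-core compute exists C}.
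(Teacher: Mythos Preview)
Your proposal is correct and follows essentially the same approach as the paper: view $\Red^{(\cdot)}(x,\alpha)$ as a depth-$q$ decision tree on the bits of $h$, convert each of the $\ell$ output bits to a DNF of size $O(q\cdot 2^q)$, and run them in parallel.
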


\begin{proof}
For every $x \in \B^{\ell}$ and every $\alpha \in \B^t$, the function $A_{x,\alpha}(h)$ can be computed by a depth $q$ decision tree, that has outputs of length $\ell$ bits. Each output bit of this function can be computed by a DNF of size $O(q \cdot 2^q)$ and overall, the function can be computed by a depth $2$ circuit of size $\ell \cdot q \cdot 2^q$ as required.
\end{proof}

\begin{claim}
\label{clm:Cg is good}
There exists a universal constant $d$ such that
for every $g:\B^{\ell} \ar \B^{\ell}$, there exists a circuit $C_g:\B^{2^{\ell'}} \ar \B$ of size $\poly(2^q,2^{\ell},2^t)$ and depth $d$ such that the following holds for every $z \in \B^{2^{\ell'}}$:
\begin{itemize}
\item If there exists $\alpha \in \B^t$ such that $\Pr_{x \from U_{\ell}}[\Red^{g^{\pred} \oplus z}(x,\alpha)=g(x)] \ge \rho$ then $C_g(z)=1$.
\item If for all $\alpha \in \B^t$, $\Pr_{x \from U_{\ell}}[\Red^{g^{\pred} \oplus z}(x,\alpha)=g(x)] \le \rho/10$ then $C_g(z)=0$.
\end{itemize}
\end{claim}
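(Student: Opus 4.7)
The plan is to build $C_g$ by composition, using $g$ and $g^{\pred}$ as hardwired constants and feeding $z \in \B^{2^{\ell'}}$ through the layers corresponding to (i) simulating the reduction, (ii) checking agreement with $g(x)$, (iii) approximately thresholding the success probability over $x$, and (iv) taking a disjunction over advice strings $\alpha$.

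First, for every pair $(x,\alpha)$ I would start from the circuit $A_{x,\alpha}$ of Claim~\ref{clm:hard-core compute A}, which is depth $2$ and of size $\ell\cdot q\cdot 2^q$, and satisfies $A_{x,\alpha}(h)=\Red^h(x,\alpha)$. Since the input to $C_g$ is $z$ rather than $h$, and since $h=g^{\pred}\oplus z$ with $g^{\pred}$ hardwired, the ``XOR with $g^{\pred}$'' can be absorbed into the leaves by negating the appropriate input literals, yielding a depth-$2$ circuit $A'_{x,\alpha}(z)=\Red^{g^{\pred}\oplus z}(x,\alpha)$ of the same size. Each of the $\ell$ output bits of $A'_{x,\alpha}$ is a depth-$2$ formula; since $g(x)$ is a fixed $\ell$-bit string, I can independently use either the DNF or the CNF form for each output coordinate so that the indicator of ``$A'_{x,\alpha}(z)=g(x)$'' is computed as a CNF of size $O(\ell\cdot q\cdot 2^q)$ with a top AND gate. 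Call this bit $b_{x,\alpha}(z)$.

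Next, for each fixed $\alpha\in\B^t$, the quantity $\Pr_{x\from U_\ell}[\Red^{g^{\pred}\oplus z}(x,\alpha)=g(x)]$ equals the relative Hamming weight of the $2^\ell$-bit string $(b_{x,\alpha}(z))_{x\in\B^\ell}$. The core idea is to apply the approximate-threshold circuit $\D^{2^\ell}_\rho$ of Lemma~\ref{lem:D rho} to this string: it outputs $1$ when the weight is $\ge\rho$ and $0$ when the weight is $\le\rho/10$, has depth $d$, and has size $\poly(2^\ell/\rho)=\poly(2^\ell)$ thanks to the hypothesis $\rho\ge 2^{-\ell/3}$. Finally I would take the OR over all $\alpha\in\B^t$ of these $2^t$ outputs. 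In the first case of the claim some $\alpha$ witnesses weight $\ge\rho$, forcing the OR to be $1$; in the second case every $\alpha$ has weight $\le\rho/10$, forcing every $\D^{2^\ell}_\rho$ output and thus the OR to be $0$. The total size is $2^\ell\cdot 2^t\cdot O(\ell q 2^q) + 2^t\cdot \poly(2^\ell) = \poly(2^q,2^\ell,2^t)$, and the depth is a constant $d+O(1)$.

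The only subtle point is that the natural ``count the $x$'s satisfying $b_{x,\alpha}=1$'' step wants to distinguish $\rho$ from $\rho/10$, and since $\rho$ can be as small as $2^{-\ell/3}$ an exact majority of the $b_{x,\alpha}$'s would require too much depth or size (by Razborov--Smolensky). The main obstacle, and the reason the argument is delicate, is that we only need an approximate threshold with a constant multiplicative gap, and the circuit of Lemma~\ref{lem:D rho} delivers exactly this in constant depth and polynomial size; everything else is routine bookkeeping of depth and size along the composition.
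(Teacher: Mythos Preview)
Your proposal is correct and follows essentially the same approach as the paper: hardwire $g$ and $g^{\pred}$, use the depth-$2$ circuits $A_{x,\alpha}$ (absorbing the XOR with $g^{\pred}$ into the leaves) to compute the agreement bits $b_{x,\alpha}$, feed these into the approximate-threshold circuit $\D^{2^\ell}_\rho$ of Lemma~\ref{lem:D rho} for each $\alpha$, and take the OR over $\alpha\in\B^t$. Your closing paragraph correctly identifies the one nontrivial ingredient --- that a constant multiplicative gap between $\rho$ and $\rho/10$ can be detected in constant depth via Lemma~\ref{lem:D rho} even though exact majority cannot.
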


\begin{proof}
The circuit $C_g$ will be hardwired with $g$ and $g^{\pred} = \Con(g)$. Upon receiving an input $z \in \B^{2^{\ell'}}$ it will act as follows:
\begin{itemize}
\item Prepare $w=g^{\pred} \oplus z$. (Here we think of $g^{\pred},w,z$ as strings in $\B^{2^{\ell'}}$).
\item For every $x \in \B^{\ell}$ and $\alpha \in \B^t$ compute $A_{x,\alpha}(w)$, and compute:
$b_{x,\alpha} \in \B$ defined by:
\[ b_{x,\alpha}=\left\{\begin{array}{lr}
        0  & A_{x,\alpha}(w) \ne g(x) \\
        1  & A_{x,\alpha}(w) = g(x)
        \end{array}\right. \]
\item For every $\alpha \in \B^t$, let $v_{\alpha}$ denote the $2^{\ell}$ bit long concatenation of all bits $(b_{x,\alpha})_{x \in \B^{\ell}}$ (fixing some order on $x \in \B^{\ell})$, and compute \[b_{\alpha}=\D^{2^{\ell}}_\rho(v_{\alpha}), \]
    where $\D^{2^{\ell}}_{\rho}$ is the circuit guaranteed in Lemma \ref{lem:D rho}.
\item Compute the disjunction of the $2^t$ bits $(b_{\alpha})_{\alpha \in \B^t}$ and output it.
\end{itemize}
It is immediate that the circuit $C_g$ performs the task specified in the lemma. We now explain how to implement the circuit in small size and depth.

The function $g$ can be described using $\ell \cdot 2^{\ell}$ bits. We note that when using the string $g^{\pred}$ to prepare $w$, we only need to have $g_{\pred}$ at coordinates $y \in \B^{\ell'}$ such that there exist $x,\alpha$ such that $A_{x,\alpha}(w)$ depends on $w_y$. As each circuit $A_{x,\alpha}$ is a circuit of size $O(\ell \cdot q \cdot 2^q)$ it depends on at most $O(\ell \cdot q \cdot 2^q)$ input bits. Thus, going over all choices of $x \in \B^{\ell}$ and $\alpha \in \B^t$, $C_g$ only requires $O(2^t \cdot 2^{\ell} \cdot \ell \cdot q \cdot 2^q)$ bits of $g^{\pred}$. Overall, the size of the advice of $C_g$ is $O(2^t \cdot 2^{2\ell} \cdot 2^{2q})$.
The circuit $C_g$ is constant depth by construction, and its size is indeed:
\[ \poly(2^q,2^{\ell},2^t,1/\rho)=\poly(2^q,2^{\ell},2^t), \]
since, by the requirement on $\rho$ we have that $\rho \ge 2^{-\ell}$.
\end{proof}

We will consider the case where $g:\B^{\ell} \ar \B^{\ell}$ is a uniformly chosen function, and will analyze the behavior of $C_g$ on $z \from \BSC^{2^{\ell'}}_{\half-2\eps}$ and on $z \from \BSC^{2^{\ell'}}_{\half}$.
In what follows, let $F_{\ell}$ denote the set of all functions $g:\B^{\ell} \ar \B^{\ell}$.

\begin{claim}\label{clm:hard-to-compute circuit decodes}
$\Pr_{g \from F_{\ell},z \from \BSC^{2^{\ell'}}_{\half-2\eps}}[C_g(z)=1] \ge 0.999$.
\end{claim}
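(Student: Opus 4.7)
}
The plan is to argue that for \emph{every} fixed $g \in F_\ell$, the probability over $z \from \BSC^{2^{\ell'}}_{\half-2\eps}$ that $C_g(z)=1$ is already at least $0.999$; averaging over $g$ then gives the statement. By Claim \ref{clm:Cg is good}, it suffices to show that, with probability at least $0.999$ over $z$, there exists $\alpha\in\B^t$ such that
\[ \Pr_{x\from U_\ell}[\Red^{g^{\pred}\oplus z}(x,\alpha)=g(x)] \ge \rho. \]
Fix any $g\in F_\ell$ and write $h=g^{\pred}\oplus z$. Since $\Red$ is guaranteed to produce such an $\alpha$ whenever the adversary $h$ satisfies $\Pr_{y\from U_{\ell'}}[h(y)=g^{\pred}(y)]\ge \half+\eps$, it is enough to show that this agreement bound holds with probability at least $0.999$ over $z$.

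The agreement bound is a simple Chernoff-style estimate on $z$. For each $y\in\B^{\ell'}$, we have $h(y)=g^{\pred}(y)$ exactly when $z_y=0$, and $\Pr[z_y=0]=\half+2\eps$ (since $z\from\BSC^{2^{\ell'}}_{\half-2\eps}$). Thus the random variable
\[ X \;=\; \Pr_{y\from U_{\ell'}}[h(y)=g^{\pred}(y)] \;=\; \frac{1}{2^{\ell'}}\sum_{y\in\B^{\ell'}}\mathbb{1}[z_y=0] \]
is an average of $2^{\ell'}$ i.i.d.\ Bernoulli$(\half+2\eps)$ bits, with $\Exp[X]=\half+2\eps$. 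By Hoeffding's inequality,
\[ \Pr\bigl[X<\tfrac12+\eps\bigr] \;=\; \Pr\bigl[X<\Exp[X]-\eps\bigr] \;\le\; \exp\bigl(-2\eps^2\cdot 2^{\ell'}\bigr). \]

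By the hypothesis of Lemma \ref{lem:hard-core compute exists C}, $\eps\ge 2^{-\ell'/3}$, so $\eps^2\cdot 2^{\ell'}\ge 2^{\ell'/3}$, and for $\ell'$ sufficiently large this tail is much smaller than $0.001$. Consequently, for every fixed $g$,
\[ \Pr_{z}\bigl[X\ge \tfrac12+\eps\bigr] \;\ge\; 0.999. \]
Whenever this event holds, the nonuniform black-box reduction guarantees some $\alpha\in\B^t$ with $\Pr_x[\Red^h(x,\alpha)=g(x)]\ge\rho$, which by Claim \ref{clm:Cg is good} forces $C_g(z)=1$. Taking expectation over the uniform choice $g\from F_\ell$ preserves this bound and yields the claim.

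There is no real obstacle here: the only quantitative input is that $\eps$ is not exponentially smaller than $2^{-\ell'/2}$, which is exactly what the assumption $\eps\ge 2^{-\ell'/3}$ delivers, and the translation from ``$h$ has high agreement with $g^{\pred}$'' to ``$C_g(z)=1$'' is handled entirely by the definition of the black-box proof together with Claim \ref{clm:Cg is good}. The main delicate point to state cleanly is that this argument works for every fixed $g$, so that no averaging issues arise when later pairing it with the matching bound for $\BSC^{2^{\ell'}}_{\half}$.
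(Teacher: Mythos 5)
Your proposal is correct and follows essentially the same route as the paper: fix $g$, observe that the agreement $\Pr_{y\from U_{\ell'}}[h(y)=g^{\pred}(y)]$ equals the fraction of coordinates with $z_y=0$ (equivalently, one minus the relative Hamming weight of $z$), apply a Chernoff/Hoeffding bound using $\eps \ge 2^{-\ell'/3}$, and then invoke Definition~\ref{dfn:HC compute} and Claim~\ref{clm:Cg is good}. The only cosmetic difference is that the paper phrases the concentration step in terms of $\weight(z) \le \half - \eps$ rather than directly in terms of the agreement random variable $X$.
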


\begin{proof}
Imagine that $g \from F_{\ell}$ is already chosen and fixed, and let $g^{\pred}=\Con(g)$.
By a Chernoff bound, with probability $1-2^{-\frac{1}{3} \cdot \eps^2 \cdot 2^{\ell'}}$ over $z \from \BSC_{\half - 2\eps}^{2^{\ell'}}$, we have that the relative Hamming weight of $z$ is at most $\half-\eps$. This probability is larger than $0.999$ by the requirement that $\eps \ge 2^{-\ell'/3}$ and that $\ell'$ is sufficiently large. Whenever this event occurs,  for $h=g^{\pred} \oplus z$, it holds that:
\[ \Pr_{x \from U_{\ell'}}[h(x)=g^{\pred}(x)] \ge \half+\eps. \]
Therefore, by Definition \ref{dfn:HC compute}
there exists $\alpha \in \B^t$, such that:
    \[ \Pr_{x \from U_{\ell}}[\Red^h(x,\alpha)=g(x)] \ge \rho. \]
By Claim \ref{clm:Cg is good} it follows that whenever this occurs, $C_g(z)=1$, and the claim follows.
\end{proof}

On the other hand, we can show that:

\begin{claim}\label{clm:hard-to-compute circuit does not decode}
$\Pr_{g \from F_{\ell},z \from \BSC^{2^{\ell'}}_{\half}}[C_g(z)=1] \le 0.001$.
\end{claim}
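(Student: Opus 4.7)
The plan is to exploit the fact that for $z \from \BSC^{2^{\ell'}}_{\half}$, the string $z$ is uniformly distributed on $\B^{2^{\ell'}}$, so $h := g^{\pred} \oplus z$, viewed as the function the reduction queries, is also uniform on $\B^{2^{\ell'}}$ and independent of $g$ (it is the one-time-pad encryption of $g^{\pred} = \Con(g)$). Taking the contrapositive of the second bullet of Claim~\ref{clm:Cg is good}, the event $\{C_g(z)=1\}$ is contained in the event that there exists $\alpha \in \B^t$ with $\Pr_{x \from U_\ell}[\Red^h(x,\alpha)=g(x)]>\rho/10$. Hence it suffices to upper-bound the probability of this latter event, taken over $g \from F_\ell$ and independent uniform $h \from U_{2^{\ell'}}$.

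To do so, I would fix any $\alpha \in \B^t$ and any $h \in \B^{2^{\ell'}}$, so that $f_{\alpha,h}(x) := \Red^h(x,\alpha)$ is a fixed function from $\B^\ell$ to $\B^\ell$. For $g \from F_\ell$ the values $\{g(x)\}_{x \in \B^\ell}$ are independent and each uniform in $\B^\ell$, so the indicators $\mathbf{1}[f_{\alpha,h}(x) = g(x)]$ are i.i.d.\ Bernoulli with mean $2^{-\ell}$. Letting $S$ denote their sum over $x \in \B^\ell$, we have $\Exp[S]=1$, and the event in question is exactly $S > \rho \cdot 2^\ell / 10$. Since $\rho \ge 2^{-\ell/3}$, the threshold $\rho \cdot 2^\ell / 10 \ge 2^{2\ell/3}/10$ is exponentially larger than the mean, so a standard multiplicative Chernoff bound yields
\[
\Pr_g\!\left[\,S > \rho \cdot 2^\ell / 10 \,\right] \;\le\; 2^{-\Omega(\rho \cdot 2^\ell)} \;\le\; 2^{-\Omega(2^{2\ell/3})}.
\]

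A union bound over the $2^t$ choices of $\alpha$, combined with the fact that the preceding estimate is uniform in $h$, then gives
\[
\Pr_{g,h}\!\left[\,\exists\,\alpha \in \B^t:\ \Pr_{x \from U_\ell}[\Red^h(x,\alpha)=g(x)]>\rho/10\,\right] \;\le\; 2^t \cdot 2^{-\Omega(2^{2\ell/3})} \;\le\; 0.001
\]
for sufficiently large $\ell$, using $t \le 2^{\ell/5}$ so that $2^t \le 2^{2^{\ell/5}}$ and $2^{\ell/5} \ll 2^{2\ell/3}$. Combined with the initial inclusion, this gives the bound on $\Pr_{g,z}[C_g(z)=1]$ claimed in Claim~\ref{clm:hard-to-compute circuit does not decode}. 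I do not anticipate any substantial obstacle: the argument is a routine Chernoff-plus-union-bound calculation whose success rests on the comfortable quantitative gap between the Chernoff exponent $\rho \cdot 2^\ell \ge 2^{2\ell/3}$ and the union-bound cost $\log(2^t) \le 2^{\ell/5}$, both of which are built into the hypotheses $\rho \ge 2^{-\ell/3}$ and $t \le 2^{\ell/5}$.
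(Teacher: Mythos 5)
Your proposal is correct and follows essentially the same route as the paper: observe that $h = g^{\pred}\oplus z$ is uniform and independent of $g$, fix $\alpha$ and $h$ so that $\Red^h(\cdot,\alpha)$ becomes a deterministic function, bound the probability that a random $g$ agrees with it on a $\rho/10$ fraction of inputs by a binomial tail bound, and union over the $2^t$ choices of $\alpha$. One small point in your favor: your Chernoff estimate correctly gives the doubly-exponentially small probability $2^{-\Omega(\rho\cdot 2^\ell)}\le 2^{-\Omega(2^{2\ell/3})}$, which is what is actually needed to absorb the doubly-exponential union-bound factor $2^t \le 2^{2^{\ell/5}}$; the paper's intermediate expressions $2^{-\ell/2}$ and $2^t\cdot 2^{-\ell/3}\le 0.001$ appear to be typos (a singly-exponential bound would not survive the union bound), though the true bound underlying them is the same doubly-exponential quantity you derive.
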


\begin{proof}
The first step of $C_g(z)$ is to prepare $w=g^{\pred} \oplus z$.
However, for  $g \from F_{\ell}$ and $z \from \BSC_{\half}^{2^{\ell'}}$, we have that $w=g^{\pred} \oplus z$ is uniformly chosen, and independent of $g$. This means that the bits $A_{x,\alpha}(w)$ for $x \in \B^{\ell}$ and $\alpha \in \B^t$ are independent of $g$. It follows that for every $\alpha \in \B^t$,  the function $v_{\alpha} \in F_{\ell}$ defined by $v_{\alpha}(x)=(A_{x,\alpha}(w))$, is independent of $g$. Therefore, the probability that $v_{\alpha}$ agrees with $g$ in a $\rho'=\rho/10$ fraction of inputs $x \in \B^{\ell}$ is at most:
\[ {2^{\ell} \choose \rho' \cdot 2^{\ell}} \cdot \frac{1}{2^{\rho' \cdot \ell \cdot 2^{\ell}}} \le \left(\frac{e}{\rho'}\right)^{\rho' \cdot 2^{\ell}} \cdot \left(\frac{1}{2^{\ell}} \right)^{\rho' \cdot 2^{\ell}} \le \frac{1}{2^{\ell/2}}, \]
where the first inequality follows from ${n \choose k} \le (\frac{en}{k})^k$ and the second inequality follows by our requirements on $\rho \ge 2^{-\ell/3}$.

For fixed $w \in \B^{2^{\ell'}}$, the condition that $v_{\alpha}$ agrees with $g$ in a $\rho/10$ fraction of inputs $x \in \B^{\ell}$, can also be phrased as:
\[ \Pr_{x \from U_{\ell}}[\Red^{w}(x,\alpha)=g(x)] \ge \rho/10. \]
It follows that for every $\alpha \in \B^t$, with probability at least $1- 2^{-\ell/2}$ over the choice of $g \from F_{\ell},z \from \BSC^{2^{\ell'}}_{\half}$, we have that:
\[ \Pr_{x \from U_{\ell}}[\Red^{g^{\pred} \oplus z}(x,\alpha)=g(x)] < \rho/10, \]
By a union bound over all $2^t$ choices of $\alpha \in \B^t$, with probability at least $1-2^t \cdot 2^{-\ell/2}$ over the choice of $g \from F_{\ell},z \from \BSC^{2^{\ell'}}_{\half}$, we have that
for all $\alpha \in \B^t$,
\[ \Pr_{x \from U_{\ell}}[\Red^{g^{\pred} \oplus z}(x,\alpha)=g(x)] < \rho/10, \]
which by Claim \ref{clm:Cg is good} implies that $C_g(z)=0$.
Consequently, in order to complete the proof of the claim, it remains to verify that:
\[ 2^t \cdot 2^{-\ell/3} \le 0.001. \]
This follows by the requirements that $t \le 2^{\ell/5}$.
\end{proof}

We are finally ready to prove Lemma \ref{lem:hard-core compute exists C}.

\begin{proof}[Proof of Lemma~\ref{lem:hard-core compute exists C}]
From Claims~\ref{clm:hard-to-compute circuit decodes} and~\ref{clm:hard-to-compute circuit
does not decode}, it follows that:
\begin{align*}
\Pr_{g \from F_{\ell}}\left[\Pr_{z \from \BSC^{2^{\ell'}}_{\half-2\eps}}[C_g(z)=0] > 0.01\right] &\le 0.1,\\
\Pr_{g \from F_{\ell}}\left[\Pr_{z \from \BSC^{2^{\ell'}}_{\half}}[C_g(z)=1] > 0.01 \right] &\le 0.1.
\end{align*}
Thus, by a union bound, there exists $g:\B^\ell \to \B^\ell$ such that
\begin{align*}
\Pr_{z \from \BSC^{2^{\ell'}}_{\half-2\eps}}[C_g(z)=1] &\ge 0.99,\\
\Pr_{z \from \BSC^{2^{\ell'}}_{\half}}[C_g(z)=1] &\le 0.01.
\end{align*}
This completes the proof of the lemma, as $C_g$ satisfies all the other requirements as well.
\end{proof}

\subsection{The case of functions that are hard to invert}
\label{sec:HC invert}

\subsubsection{The model for black-box proofs}
\label{sec:HC invert definition}

In this section we state and explain our model for black-box proofs for hard core predicates, in the setting of functions that are hard to invert. The precise formal definition is given in concise form in Definition \ref{dfn:HC invert}. Below, we provide a detailed explanation for the considerations made in the formal definition. The reader can skip directly to the formal definition if they wish to.

This setting is very similar to the case of functions that are hard to compute, but there are several key differences that we explain below.

\paragraph{Explanation of the model:} Recall that (as explained in Section \ref{sec:intro:HC:invert}) the Goldreich-Levin theorem (stated precisely in Theorem \ref{thm:GL invert}) has the following form:
\begin{itemize}
\item We are given an arbitrary function $f:\B^{\ell} \ar \B^{\ell}$. (Intuitively, it is assumed that $f$ is a one-way function, meaning that it is hard to invert $f$ with success probability $\rho$)
\item There is a specified construction that transforms $f$ into two functions: A ``new one-way function'' $f^{\newowf}:\B^{\ell'} \ar \B^{\ell'}$ and a predicate $f^{\pred}:\B^{\ell'} \ar \B$ for some $\ell'$ related to $\ell$. (Intuitively, we will want to argue that $f^{\pred}$ is a hard-core predicate such that for $x \from U_{\ell'}$, $f^{\pred}(x)$ is hard to compute with success $\half+\eps$ when given $f^{\newowf}(x)$).

    We will model this construction as a map $\Con$ which given $f$ produces a pair of functions $(f^{\newowf},f^{\pred})$. Once again, we place \emph{essentially no limitations} on the map $\Con$ (and in particular do not require that $f^{\newowf},f^{\pred}$ can be efficiently computed if $f$ is). This only makes our results stronger. We remark that unlike the case of functions that are hard to compute, in this setting, we do place some limitations on the construction map $\Con$ (specifically, that the construction of hard-core predicate is nontrivial, in a precise sense explained right after Definition \ref{dfn:HC invert}).

    In the case of Theorem \ref{thm:GL invert}, we have that: $\Con(f)=(f^{\newowf},f^{\pred})$ where $\ell'=2\ell$ and we think of the $\ell'$ bit long input of $(f^{\newowf},f^{\pred})$ as two strings $x,r \in \B^{\ell}$, setting:
    \[ f^{\newowf}(x,r)=(f(x),r), \mbox{and} \]
     \[f^{\pred}(x,r)=\Enc^{\Hadamard}(x)_r = (\sum_{i \in [\ell]} x_i \cdot r_i ) \mod 2. \]
\item We model the proof showing that $f^{\pred}$ is a hard-core predicate, in the following way: The proof is a pair $(\Con,\Red)$ where $\Red^{(\cdot)}$ is an oracle procedure, such that when $\Red^{(\cdot)}$ receives oracle access to an ``adversary'' $h:\B^{\ell'} \ar \B$ that breaks the security of $f^{\pred}$, we have that $\Red^{h}$ breaks the security of $f$.

    It is illustrative to consider the case where $f,f^{\newowf}$ are \emph{permutations}, and with this choice, the model we have introduced so far is identical to the one considered in Section \ref{sec:HC compute} if we set $g=f^{-1}$.

    In the setup of functions that are \emph{hard to invert}, a reduction $\Red$ can potentially want to \emph{compute the function $f$} (as we are implicitly assuming that $f$ is efficiently computable). Many reductions in the cryptographic literature (e.g. from one-way functions to pseudorandom generators) critically rely on this ability, and so, if we want to handle a general case, we should allow the reduction $\Red$ to also \emph{receive oracle access to $f$}, allowing it to compute $f$ on chosen values, if it wants to.

    This means that in the actual definition, $\Red^{(\cdot,\cdot)}$ is an oracle procedure with \emph{two} oracles: It receives oracle access both to $h$ and to $f$.
    More precisely, we require that: for every $f:\B^{\ell} \ar \B^{\ell}$ and for every $h:\B^{\ell'} \ar \B$, such that:
    \[ \Pr_{x \from U_{\ell'}}[h(f^{\newowf}(x))= f^{\pred}(x)] \ge \half+\eps,\]
    it holds that:
    \[\Pr_{x \from U_{\ell}}[\Red^{h,f}(f(x)) \in f^{-1}(f(x))] \ge \rho.\footnote{In fact, we should also allow $h$ to be an oracle procedure $h^{(\cdot)}$ that receives oracle access to $f$. However, as we want to prove lower bounds on black-box proofs, we choose not to do that, as the lower bounds that we prove obviously also rule out this case. This can be intrepretted as saying that the choice of $h$ that we use in our lower bound, does not make calls to $f$.}\]
    Note that this means that even in the case that $f,f^{\newowf}$ are permutations, reductions (in the setting of functions that are hard to invert) are more powerful then reductions (in the setting of functions that are hard to compute) for the function $g=f^{-1}$, and indeed, it is more difficult to prove impossibility results for the case of functions that are hard to invert.
\item As explained in the case of functions that are hard to compute, we are once again aiming to prove a result for circuits (which are allowed to use nonuniform advice) and as in the case of functions that are hard to compute, we will allow the reduction to receive an advice string $\alpha$ of length $t$. (Intuitively, this advice string can depend on $f$ and $h$). This leads to the following strengthening of the requirement above. Namely, we will require that: for every $f:\B^{\ell} \ar \B^{\ell}$ and for every $h:\B^{\ell'} \ar \B$, such that:
    \[ \Pr_{x \from U_{\ell'}}[h(f^{\newowf}(x))= f^{\pred}(x)] \ge \half+\eps,\]
    there exists $\alpha \in \B^t$ such that:
    \[\Pr_{x \from U_{\ell}}[\Red^{h,f}(f(x),\alpha) \in f^{-1}(f(x))] \ge \rho.\]
\item Once again, we make no restrictions on the complexity of the procedure $\Red^{(\cdot,\cdot)}$ except for requiring that it makes at most $q$ queries to each of its two oracles (for some parameter $q$). Our black-box impossibility results will follow from proving lower bounds on $q$.
\end{itemize}

\paragraph{Formal definition: }
Following this discussion, we now give a formal definition.

\begin{definition}[Nonuniform black-box proof for hard-core predicates for hard to invert functions]
\label{dfn:HC invert}
A pair $(\Con,\Red)$ is a \remph{nonuniform black-box proof} for \remph{hard-core predicates for hard to invert functions} with parameters $\ell,\ell',\rho,\eps$, that uses \remph{$q$ queries}, and \remph{$t$ bits of advice} if:
\begin{itemize}
\item $\Con$ is a \emph{construction map} which given a function $f:\B^{\ell} \ar \B^{\ell}$, produces two functions $\Con(f)=(f^{\newowf},f^{\pred})$ such that $f^{\newowf}:\B^{\ell'} \ar \B^{\ell'}$ and $f^{\pred}:\B^{\ell'} \ar \B$.
\item $\Red^{(\cdot,\cdot)}$ is a \emph{reduction}, that is an oracle procedure that given oracle access to functions $h:\B^{\ell'} \ar \B$, and $f:\B^{\ell} \ar \B$, makes at most $q$ queries to each of its two oracles.
\end{itemize}
Furthermore, for every functions $f:\B^{\ell} \ar \B^{\ell}$ and $h:\B^{\ell'} \ar \B$ such that:
    \[ \Pr_{x \from U_{\ell'}}[h(f^{\newowf}(x))=f^{\pred}(x)] \ge \half+\eps, \]
    there exists $\alpha \in \B^t$, such that:
    \[ \Pr_{x \from U_{\ell}}[\Red^{h,f}(f(x),\alpha) \in f^{-1}(f(x))] \ge \rho. \]
\end{definition}

\paragraph{Avoiding trivial constructions: }
We now explain that it is possible to have black-box proofs that are \emph{trivial} and provide hard-core predicates that are hard because of trivial reasons. We need to avoid such trivial constructions if we want to prove interesting limitations.

Specifically, it is easy to obtain hard-core predicates that are hard because information on $f^{\pred}(x)$ is not present in $f^{\newowf}(x)$. Indeed, consider the construction $\Con(f)=(f^{\newowf},f^{\pred})$ with:
\[ f^{\pred}(x)=x_1, \mbox{and} \]
\[ f^{\newowf}(x)=x_2,\ldots,x_{\ell}. \]
In this case, $f^{\pred}$ is a hardcore-predicate, because it is impossible (even for unbounded adversaries) to compute $f^{\pred}(x)$ when given $f^{\newowf}(x)$. This means that for this construction map, there is a reduction that makes $q=0$ queries. Consequently, in order to prove lower bounds, we need to avoid such trivial (and uninteresting) construction maps, and require that for every $f$, there exists a function $\phi_f:\B^{\ell'} \ar \B$ such that for every $x \in \B^{\ell'}$, $\phi_f(f^{\newowf}(x))=f^{\pred}(x)$, meaning that there is information on $f^{\pred}(x)$ in $f^{\newowf}(x)$.

An additional case of a trivial construction that we want to avoid, is the case in which \[ \Hi(f^{\newowf}(U_{\ell'})) < \log(1/\rho). \] Such a construction is uninteresting, because in such a case, if we define the function $\psi_f:\B^{\ell'} \ar \B^{\ell'}$ to output the constant $x \in \B^{\ell}$ such that
\[ \Pr[f^{\newowf}(U_{\ell'})=f^{\newowf}(x)] \ge \rho, \]
(and note that such an $x$ exists if $\Hi(f^{\newowf}(U_{\ell'})) < \log(1/\rho)$) then we get that there is a constant function $\psi_f$ such that $\psi_f$ inverts the function $f^{\newowf}$ with probability $\rho$. Such a construction is uninteresting because in that case $f^{\newowf}$ is obviously not a one-way function.

This leads to the following characterization of nontrivial construction maps, in which we require that $\Con$ avoids these two trivial examples.

\begin{definition}[Nontrivial construction map]
\label{dfn:nontrivial construction}
We say that a construction map $\Con(f)=(f^{\newowf},f^{\pred})$ is $\rho$-\remph{non-trivial} if it satisfies the following two requirements:
\begin{itemize}
\item  For every $f$, the functions $(f^{\newowf},f^{\pred})$ produced by $\Con(f)$ are such that there exists a function $\phi_f:\B^{\ell'} \ar \B$ such that for every $x \in \B^{\ell'}$, $\phi_f(f^{\newowf}(x))=f^{\pred}(x)$.
\item $\Hi(f^{\newowf}(U_{\ell'})) \ge \log(1/\rho)$.
\end{itemize}
We say that a pair $(\Con,\Red)$ is $\rho$-\remph{nontrivial}, if $\Con$ is $\rho$-non-trivial.
\end{definition}

\paragraph{The role of the number of queries, and black-box impossibility results: }
We now explain the role of the parameter $q$ (that measures the number of queries made by $\Red$) and why lower bounds on $q$ translate into black-box impossibility results. This explanation is similar to the one given in Section \ref{sec:HC compute} (with the modifications explained above).

For this purpose, it is illustrative to examine the argument showing that nonuniform black-box proofs yield hard-core predicates: When given a pair $(\Con,\Red)$ that is a nonuniform black-box proof for hard-core predicates for hard to invert functions with parameters $\ell,\ell',\rho,\eps$, that uses $q$ queries, and $t$ bits of advice, we obtain that for any function $f:\B^{\ell} \ar \B^{\ell}$, if there exists a circuit $C':\B^{\ell'} \ar \B$ of size $s'$ such that:
\[ \Pr_{x \from U_{\ell'}}[C'(f^{\newowf}(x))=f^{\pred}(x)] \ge \half+\eps, \]
    then there exists $\alpha \in \B^t$, such that:
    \[ \Pr_{x \from U_{\ell}}[\Red^{C',f}(f(x),\alpha) \in f^{-1}(f(x))] \ge \rho. \]
Note that if the reduction $\Red$ can be implemented by a circuit of size $r$, and the function $f$ can be computed by a circuit of size $m$,
then the circuit $C(y)=\Red^{C',f}(y,\alpha)$ is a circuit of size:
\[ s=r+t+q \cdot m + q \cdot s' \]
that inverts $f$ with success probability $\rho$.

It follows that in a black-box proof, with $q$ queries, and $t$ bits of advice, we get a hard-core theorem that needs to assume that the original function $f$ cannot be inverted by circuits of size $s$, for:
\[ s \ge q +t. \]

\subsubsection{Precise statements of limitations}

Our main result on black-box proofs for hard-core predicates in the setting of functions that are hard to invert is the following theorem.

\begin{theorem}
\label{thm:hard-core invert}
There exist universal constants $\beta>0$ and $c>1$ such that the following holds for any sufficiently large $\ell$ and $\ell'$,
$t \le 2^{\ell/5}$, $\rho \ge 2^{-\ell/5}$, and $\rho \le \beta \cdot \eps^2$. Let $(\Con,\Red)$ be a $\rho$-nontrivial nonuniform black-box proof for hard-core predicates for hard to invert functions with parameters $\ell,\ell',\rho,\eps$, that uses $q$ queries, and $t$ bits of advice. Then
\[ q \ge \frac{1}{\eps^{\beta}}-c(t+\ell). \]
\end{theorem}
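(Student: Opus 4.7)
The plan is to mirror the proof of Theorem~\ref{thm:hard-core compute}: I would convert a low-query nonuniform black-box proof $(\Con,\Red)$ into a small constant-depth circuit distinguishing $\BSC^{2^{\ell'}}_{\half-2\eps}$ from $\BSC^{2^{\ell'}}_{\half}$, then invoke Corollary~\ref{cor:coin-problem} to extract the query lower bound. The two features absent from the hard-to-compute case---namely, $\Red$'s oracle access to $f$ and the need to simulate an adversary that predicts $f^{\pred}$ through $f^{\newowf}$---are both handled by invoking the $\rho$-nontriviality hypothesis.

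For uniformly random $f:\B^{\ell}\to\B^{\ell}$, I would build a circuit $C_f$ hardwired with $f$ and with the values of $\phi_f$ (from Definition~\ref{dfn:nontrivial construction}) at the $O(2^{\ell}\cdot 2^{t}\cdot 2^{q})$ positions that can appear as $h$-queries of $\Red$. On input $z\in\B^{2^{\ell'}}$, $C_f$ internally sets $h=\phi_f\oplus z$ and, for every $(x,\alpha)\in\B^{\ell}\times\B^{t}$, simulates $\Red^{h,f}(f(x),\alpha)$ as a depth-$q$ decision tree in $h$ (the $f$-queries compile away since $f$ is hardwired). A further hardwired evaluation of $f$ at the output tests membership in $f^{-1}(f(x))$, producing, for each $\alpha$, a $2^{\ell}$-bit success vector. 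Feeding each vector through the approximate threshold circuit $\D^{2^{\ell}}_{\rho}$ of Lemma~\ref{lem:D rho} and OR-ing over $\alpha$ yields $C_f$ of size $\poly(2^{q},2^{\ell},2^{t})$ and constant depth, in direct analogy with Claim~\ref{clm:Cg is good}.

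For the completeness direction ($z\from\BSC^{2^{\ell'}}_{\half-2\eps}$), the defining identity $\phi_f(f^{\newowf}(x))=f^{\pred}(x)$ gives
\[
\Pr_{x\from U_{\ell'}}\bigl[h(f^{\newowf}(x))=f^{\pred}(x)\bigr]=\sum_{y}p_{y}\cdot\mathbf{1}[z_{y}=0],
\]
with $p_{y}=\Pr_{x\from U_{\ell'}}[f^{\newowf}(x)=y]$. Over $z$ this sum has mean $\half+2\eps$ and variance at most $\tfrac{1}{4}\sum_{y}p_{y}^{2}\le\tfrac{1}{4}\max_{y}p_{y}\le\rho/4$, where the last inequality uses $\Hi(f^{\newowf}(U_{\ell'}))\ge\log(1/\rho)$ from Definition~\ref{dfn:nontrivial construction}. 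Chebyshev combined precisely with the hypothesis $\rho\le\beta\eps^{2}$ then forces the agreement probability above $\half+\eps$ with probability close to $1$ over $z$, so the black-box reduction supplies some $\alpha$ with $\Pr_{x\from U_{\ell}}[\Red^{h,f}(f(x),\alpha)\in f^{-1}(f(x))]\ge\rho$, triggering $\D^{2^{\ell}}_{\rho}$ and hence $C_{f}(z)=1$.

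For soundness ($z\from\BSC^{2^{\ell'}}_{\half}$), $h=\phi_{f}\oplus z$ is uniform and independent of $f$, so $\Red^{h,f}$ is effectively a $q$-query adaptive inverter for a uniformly random function. I would prove by a standard lazy-sampling argument---each fresh $f$-query hits $f^{-1}(f(x))$ with probability at most $2^{-\ell}$, and an unqueried output does so with probability at most $2^{-\ell}$---that $\Pr_{f,h,x}[\mathrm{success}]\le O(q/2^{\ell})$. Markov over $f$, a union bound over the $2^{t}$ advice strings, and the hypothesis $\rho\ge 2^{-\ell/5}$ together force $C_{f}(z)=0$ with high probability. Averaging over $f$ fixes a single $f$ satisfying both directions, and Corollary~\ref{cor:coin-problem} applied to $C_f$ then gives $\poly(2^{q},2^{\ell},2^{t})\ge\exp(\Omega(\eps^{-1/(d-1)}))$, which after taking logarithms yields $q\ge\eps^{-\beta}-c(t+\ell)$. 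The main obstacle I expect is making the soundness step survive the $2^{t}$-size union bound; the completeness argument is essentially a clean second-moment calculation once one recognizes that $z$ must be XORed with $\phi_{f}$ rather than $f^{\pred}$ to produce a genuine hardcore-predicate distinguisher.
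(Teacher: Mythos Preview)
Your high-level architecture is the same as the paper's, and your completeness analysis (using the nontriviality min-entropy bound plus a second-moment/Chernoff estimate on the weighted sum $\sum_y p_y \mathbf{1}[z_y=0]$) is essentially correct. The gap is precisely where you anticipated it: the soundness step does not survive the union bound over advice.

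Recall that the hypothesis allows $t\le 2^{\ell/5}$, so there are up to $2^{2^{\ell/5}}$ advice strings $\alpha$. Your ``standard lazy-sampling'' argument gives, for each fixed $\alpha$, only $\Pr_{f,h,x}[\text{success}]\le O(q/2^{\ell})$. Markov's inequality then yields $\Pr_{f,h}[\Pr_x[\text{success}]\ge \rho/10]\le O(q/(\rho\,2^{\ell}))\le O(q\cdot 2^{-4\ell/5})$, a merely \emph{singly} exponential bound. Union-bounding this over $2^t$ advice strings multiplies by a \emph{doubly} exponential factor $2^{2^{\ell/5}}$, and the product explodes. So the soundness claim fails as stated.

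The paper circumvents this by (i) restricting $f$ to a uniformly random \emph{permutation}, and (ii) invoking the Gennaro--Trevisan incompressibility theorem (Theorem~\ref{thm:GT}), which already quantifies over all $\alpha\in\B^t$ with $t\le 2^{\ell/5}$ and gives a doubly exponential failure bound $2^{-2^{\ell/2}}$. That strength is exactly what is needed to absorb the advice. The underlying reason is that any $q$-query procedure inverting a $2^{-\ell/5}$ fraction of a permutation can be turned into a short description of that permutation, so only a $2^{-2^{\Omega(\ell)}}$ fraction of permutations can be ``easy'' for any single procedure; an analogous compression argument exists for random functions, but a plain lazy-sampling bound does not supply it. If you replace your Markov step by such an incompressibility argument (or simply cite Gennaro--Trevisan after switching to permutations), the rest of your outline goes through.
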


We now explain why Theorem \ref{thm:hard-core invert} implies the informal statement made in Theorem \ref{ithm:invert}. This explanation is essentially identical to the one following Theorem \ref{thm:hard-core compute}.

Recall that in Section \ref{sec:HC invert definition} we explained that when using a nonuniform black-box proof to obtain hard-core predicates, we get a hard-core predicate theorem in which $s \ge q+t$.

Theorem \ref{thm:hard-core invert} implies that  it is impossible for such a proof to establish $\eps=s^{-2/\beta}$ (even if $\rho$ is very small). This follows as otherwise, using the fact that $s \ge q+t \ge t$ and $s \geq \ell$, we get that:
\[ q \ge \frac{1}{\eps^{\beta}} -c(t + \ell) \ge s^2 -c(t+\ell)> s,\]
which is a contradiction to $s \ge q+t \ge q$.

Next note that we may assume that $\rho = 2^{-\ell/5}$, as by definition, if $(\Con,\Red)$ is a nonuniform black-box proof for hard-core predicates with parameter $\rho' > \rho $, then it is also a nonuniform black-box proof for hard-core predicates with parameter $\rho$.
Similarly, we may take $\eps=\frac{1}{s^{\omega(1)}}$ to be sufficiently large so that $\eps \geq 2^{-o(\ell)}$ for $s = 2^{-o(\ell)}$.
Under these assumptions, we have that
$\rho \le \beta \cdot \eps^2$, and therefore the parameter setting considered in Theorem \ref{ithm:invert} is impossible to achieve.

\subsubsection{Proof of Theorem \ref{thm:hard-core invert}}

The proof of Theorem \ref{thm:hard-core invert} is similar in structure to the proof of Theorem \ref{thm:hard-core compute} with three main differences:
\begin{itemize}
\item Rather than choosing the initial function uniformly from $F_{\ell}$ (the set of all functions from $\ell$ bits to $\ell$ bits) we will restrict the choice to permutations. This is helpful because for a permutation $f$, the function $f^{-1}$ is well defined, and inverting $f$ (that is producing an element in $f^{-1}(f(x))$ when given $f(x)$) can be thought of as computing $f^{-1}$ that is producing $x$ on input $f(x)$.
\item A more significant difference, is that as explained in detail in Section \ref{sec:HC invert definition}, in the setup of functions that are hard to invert, the reduction $\Red$ has oracle access to $f$ (in addition to oracle access to $h$). This means that it is no longer the case that the answer of the reduction on inputs $x,\alpha$ and oracle $h$ is \emph{determined} by $h,x,\alpha$ (as the answer depends on $f$). Therefore, it is not the case that there exists circuits $A_{x,\alpha}(h)$ that simulate the reduction (as we argued in Claim \ref{clm:hard-core compute A}) and we need to be more careful when showing that for every function $f$, there exists a circuit $C_f(z)$ that is analogous to the circuit guaranteed in Claim \ref{clm:Cg is good}.

    Furthermore, as $\Red$ gets oracle access to $f$, we can no longer claim that when $h$ is independent of $f$, then $\Red$ has no information on $f$. Instead, we use results by Gennaro and Trevisan \cite{GT} showing that an oracle circuit that makes a subexponential number of queries to a random permutation $f$, cannot invert $f$ with high probability.
\item Unlike the case of Section \ref{sec:HC compute}, the distribution that is given as input to $h$ is not necessarily uniform. More precisely, the distribution of $f^{\newowf}(U_{\ell'})$ (on which $h$ needs to predict the hard-core predicate) is not necessarily uniform. However, by the nontriviality condition in Definition \ref{dfn:nontrivial construction} we have that this distribution has high min-entropy, and we need to adjust the argument to hold with this weaker requirement.
\end{itemize}

Theorem \ref{thm:hard-core invert} will follow from the next lemma.

\begin{lemma}
\label{lem:hard-core invert exists C}
There exists a universal constant $d>1$
such that the following holds
for any sufficiently large $\ell$ and $\ell'$, $t \le 2^{\ell/5}$, $\rho \ge 2^{-\ell/5}$ and $\rho \le \frac{\eps^2}{d}$. Let $(\Con,\Red)$ be a $\rho$-nontrivial nonuniform black-box proof for hard-core predicates for hard to invert functions with parameters $\ell,\ell',\rho,\eps$, that uses $q$ queries, and $t$ bits of advice. Then there exists a circuit $C$ of size $\poly(2^q,2^{\ell},2^t)$ and depth $d$ such that:
\begin{itemize}
\item $\Pr_{z \from \BSC^{n}_{\half-2\eps}}[C(z)=1] \ge 0.99$.
\item $\Pr_{z \from \BSC^{n}_{\half}}[C(z)=1] \le 0.01$.
\end{itemize}
\end{lemma}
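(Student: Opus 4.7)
I will follow the structure of the proof of Lemma~\ref{lem:hard-core compute exists C}, replacing the random function $g$ by a random permutation $f:\B^\ell\ar\B^\ell$ and incorporating the fact that $\Red$ now has a second oracle into the circuit. For each permutation $f$ with $(f^{\newowf},f^{\pred})=\Con(f)$ and $\phi_f:\B^{\ell'}\ar\B$ the function given by the $\rho$-nontriviality condition (so $\phi_f\circ f^{\newowf}=f^{\pred}$), I build $C_f:\B^{2^{\ell'}}\ar\B$ that, on input $z$, forms $h=\phi_f\oplus z$, then for every $x\in\B^\ell$ and $\alpha\in\B^t$ computes $b_{x,\alpha}\in\B$ which is $1$ iff $\Red^{h,f}(f(x),\alpha)=x$, next applies the approximate threshold circuit $\D^{2^\ell}_\rho$ of Lemma~\ref{lem:D rho} to the vector $(b_{x,\alpha})_{x\in\B^\ell}$ to obtain $b_\alpha$, and finally outputs $\bigvee_{\alpha}b_\alpha$. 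Because the truth tables of $f,f^{\newowf},f^{\pred},\phi_f$ are hardwired into the circuit, each $f$-query of $\Red$ is an internal gate and $b_{x,\alpha}$ depends on at most $q$ bits of $z$ and has a depth-$2$ realization of size $O(q\cdot 2^q)$. The resulting circuit has constant depth and size $\poly(2^q,2^\ell,2^t,1/\rho)=\poly(2^q,2^\ell,2^t)$, using $\rho\ge 2^{-\ell/5}$.

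\textbf{Completeness.} Fix any permutation $f$ and set $p_y=\Pr[f^{\newowf}(U_{\ell'})=y]$. For $z\from\BSC^{2^{\ell'}}_{\half-2\eps}$, the success probability of $h=\phi_f\oplus z$ against $f^{\pred}$ equals $X:=\sum_y p_y\cdot\mathbf{1}[z_y=0]$, a weighted sum of independent Bernoullis with mean $\half+2\eps$ and variance at most $\tfrac14\sum_y p_y^2\le\tfrac14\max_y p_y\le\rho/4$, the last inequality being precisely the nontriviality condition $\Hi(f^{\newowf}(U_{\ell'}))\ge\log(1/\rho)$. Since $\rho\le\eps^2/d$, Chebyshev's inequality gives $\Pr[X<\half+\eps]\le\rho/(4\eps^2)\le 1/(4d)\le 0.001$ for $d$ sufficiently large. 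Whenever $X\ge\half+\eps$, the black-box property of $(\Con,\Red)$ produces some $\alpha$ with $\Pr_x[\Red^{h,f}(f(x),\alpha)\in f^{-1}(f(x))]\ge\rho$; as $f$ is a permutation this says that the vector $(b_{x,\alpha})_x$ has relative Hamming weight $\ge\rho$, so $\D^{2^\ell}_\rho$ fires and $C_f(z)=1$. Hence for \emph{every} permutation $f$, $\Pr_z[C_f(z)=1]\ge 0.999$.

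\textbf{Soundness and conclusion.} I now draw $f$ uniformly from the set of permutations of $\B^\ell$ and $z\from\BSC^{2^{\ell'}}_{\half}$; then $h=\phi_f\oplus z$ is a uniformly random function $\B^{\ell'}\ar\B$ independent of $f$. For each fixed $\alpha$, $\Red^{h,f}(\cdot,\alpha)$ is a $q$-query oracle algorithm with no advice attempting to invert a random permutation, and the bits of $h$ may be absorbed as internal randomness without improving the inversion rate. By the Gennaro--Trevisan compressibility argument~\cite{GT}, the probability over random permutation $f$ (and random $h$) that $\Red^{h,f}(\cdot,\alpha)$ inverts more than a $\rho/10$ fraction of points is at most $\exp(-\Omega(\rho\cdot 2^\ell/q))$. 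Union-bounding over the $2^t\le 2^{2^{\ell/5}}$ choices of $\alpha$ and using $\rho\ge 2^{-\ell/5}$, we get $\Pr_{f,z}[C_f(z)=1]\le 2^t\exp(-\Omega(2^{4\ell/5}/q))\le 0.001$ in the regime where $q$ is small enough for this bound to be meaningful; otherwise $q$ is already so large that Theorem~\ref{thm:hard-core invert} follows trivially. Combining completeness and soundness via Markov's inequality together with a union bound over permutations $f$, exactly as at the end of the proof of Lemma~\ref{lem:hard-core compute exists C}, locates a particular permutation $f$ for which $C_f$ distinguishes $\BSC^{2^{\ell'}}_{\half-2\eps}$ from $\BSC^{2^{\ell'}}_{\half}$ with the required thresholds $0.99$ and $0.01$. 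The main obstacle is the soundness step: because $\Red$ has oracle access to $f$, the argument used in the hard-to-compute case (where $h$ independent of $g$ carried no information about $g$) no longer applies, and one must invoke a Gennaro--Trevisan-style random-permutation inversion lower bound and carefully balance its strength against the $2^t$ union bound and the threshold $\rho/10$.
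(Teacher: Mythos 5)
Your proposal follows the same route as the paper: the same circuit $C_f$ (hardwire $f,\phi_f$, form $w=\phi_f\oplus z$, compute $b_{x,\alpha}$, apply $\D^{2^\ell}_\rho$, disjunction over $\alpha$), the same case analysis over $\BSC_{\half-2\eps}$ versus $\BSC_{\half}$, and the same final averaging/union-bound over $f$ to fix a particular permutation. The only genuine divergences are technical and minor. For completeness, you bound $\Pr_z[\sum_y p_y\mathbf{1}[z_y=0]<\half+\eps]$ by Chebyshev using $\operatorname{Var}\le\rho/4$, while the paper applies a Hoeffding/Chernoff-style bound for variables in $[0,\rho]$ to get $e^{-\Omega(\eps^2/\rho)}$; both routes close with the same hypothesis $\rho\le\eps^2/d$, and yours is arguably the more elementary argument for this threshold. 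For soundness, you invoke a per-$\alpha$ Gennaro--Trevisan bound of the form $\exp(-\Omega(\rho\cdot2^\ell/q))$ and union-bound over $2^t$ advice strings yourself, whereas the paper quotes a pre-packaged version of GT (their Theorem on random-permutation inversion) that already absorbs the $\exists\alpha$ quantifier and yields $2^{-2^{\ell/2}}$; the paper's approach hides the $q$-dependence inside the bound on the number of oracle queries to $f$ (implicitly requiring $q\lesssim 2^{\ell/5}$), so both presentations share the same hand-wave about the large-$q$ regime that you flag at the end of your soundness paragraph. One detail you gesture at but do not fully resolve is that the threshold $\D^{2^\ell}_\rho$ only certifies weight $>\rho/10$, so the GT bound must be applied at threshold $\rho/10$ rather than $\rho$; the paper smooths this over by choosing the GT threshold to be $2^{-\ell/5}\le\rho$, which carries the same $O(1)$-factor subtlety. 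In short, the proof is correct in its essentials and matches the paper's structure.
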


Once again, just like in the previous section, by reduction to the coin problem, Theorem \ref{thm:hard-core invert} follows from Lemma \ref{lem:hard-core invert exists C}.

\begin{proof}[Proof of Theorem~\ref{thm:hard-core invert}]
The theorem follows directly from Lemma \ref{lem:hard-core invert exists C} and Corollary \ref{cor:coin-problem}, which give that:
\[ \poly(2^q,2^{\ell},2^t) \ge \exp(d \cdot \eps^{-\frac 1 {d-1}}).\]
The statement of Theorem~\ref{thm:hard-core invert} follows by taking the logarithm on both sides and setting $\beta < \frac 1 {d-1}$.
\end{proof}

In the remainder of this section we prove Lemma \ref{lem:hard-core invert exists C}.
Let $(\Con,\Red)$ be a nontrivial nonuniform black-box proof for hard-core predicates for hard to invert functions with parameters $\ell,\ell',\rho,\eps$, that uses $q$ queries, and $t$ bits of advice. Throughout this section we assume that the requirements made in Lemma \ref{lem:hard-core invert exists C} are met.

We will prove Lemma \ref{lem:hard-core invert exists C} using the following sequence of claims. The proof uses the same structure as the proof of Lemma \ref{lem:hard-core compute exists C} however, the reduction is now more powerful as it has oracle access to the function $f$, and the setting is more general as the distribution $f^{\newowf}(U_{\ell})$ (on which the oracle is judged) is not necessarily uniform.
In this setting, there is no direct analog of Claim \ref{clm:hard-core compute A}, and instead we prove an analog of Claim \ref{clm:Cg is good} directly.

\begin{claim}
\label{clm:Cf is good}
There exists a universal constant $d$ such that
for every permutation $f:\B^{\ell} \ar \B^{\ell}$, there exists a circuit $C_f:\B^{2^{\ell'}} \ar \B$ of size $\poly(2^q,2^{\ell},2^t)$ and depth $d$ such that the following holds for every $z \in \B^{2^{\ell'}}$:
\begin{itemize}
\item If there exists $\alpha \in \B^t$ such that $\Pr_{x \from U_{\ell}}[\Red^{\phi_f \oplus z,f}(f(x),\alpha) = x] \ge \rho$ then $C_f(z)=1$.
\item If for all $\alpha \in \B^t$, $\Pr_{x \from U_{\ell}}[\Red^{\phi_f \oplus z,f}(f(x),\alpha)=x] \le \rho/10$ then $C_f(z)=0$.
\end{itemize}
\end{claim}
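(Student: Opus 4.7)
The plan is to mirror the proof of Claim~\ref{clm:Cg is good}, with a single substantive modification to handle the reduction's second oracle. The key observation is that since $C_f$ is built for a \emph{fixed} permutation $f$, I can hardwire both $f$ and the function $\phi_f$ guaranteed by the nontriviality condition of Definition~\ref{dfn:nontrivial construction} into the circuit. With $f$ hardwired, every query that $\Red$ makes to its $f$-oracle becomes a deterministic function of the preceding $h$-queries, so the computation $w \mapsto \Red^{w,f}(f(x),\alpha)$ collapses to a depth-$q$ decision tree on $w \in \B^{2^{\ell'}}$, whose structure depends on $f,x,\alpha$. Each output bit of this tree is therefore computable by a depth-$2$ DNF (or CNF) of size $O(q \cdot 2^q)$, exactly as in Claim~\ref{clm:hard-core compute A}.

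Given this, I would construct $C_f(z)$ as follows. First, prepare $w = \phi_f \oplus z$ via a single XOR layer against the hardwired bits of $\phi_f$ that actually get queried; this layer can be absorbed into the bottom literals of the subsequent DNFs/CNFs without increasing depth. Then, for every $x \in \B^\ell$ and every $\alpha \in \B^t$, compute the indicator bit $b_{x,\alpha}$ of the event $\Red^{w,f}(f(x),\alpha) = x$ as a CNF of size $O(\ell \cdot q \cdot 2^q)$, obtained by AND-ing across $i \in [\ell]$ either the depth-$2$ formula for the $i$-th output bit of the decision tree or its negation, according to the hardwired value $x_i$. For each $\alpha$, feed the $2^\ell$-bit vector $v_\alpha = (b_{x,\alpha})_{x \in \B^\ell}$ into the circuit $\D^{2^\ell}_\rho$ from Lemma~\ref{lem:D rho} to obtain $b_\alpha$, and finally output the disjunction of $(b_\alpha)_{\alpha \in \B^t}$.

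For correctness, since $f$ is a permutation, $x \from U_\ell$ makes $f(x)$ uniform as well, so the relative Hamming weight of $v_\alpha$ equals $\Pr_{x \from U_\ell}[\Red^{\phi_f \oplus z,f}(f(x),\alpha) = x]$; Lemma~\ref{lem:D rho} then forces $b_\alpha = 1$ whenever this probability is at least $\rho$ and $b_\alpha = 0$ whenever it is at most $\rho/10$, so the two required properties of $C_f$ follow directly from the top OR. The depth is $d + O(1)$, which can be folded into $d$ by adjusting the constant; the size is $\poly(2^q, 2^\ell, 2^t, 1/\rho)$, which is $\poly(2^q, 2^\ell, 2^t)$ since $\rho \ge 2^{-\ell/5}$. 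Only $O(2^\ell \cdot 2^t \cdot q)$ bits of $\phi_f$ need to be hardwired, since only those bits can be queried by some decision tree $A_{x,\alpha,f}$, so the size does not depend on $2^{\ell'}$.

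The main obstacle is verifying the first of these steps: I need to argue carefully that hardwiring $f$ really does reduce the two-oracle computation to a depth-$q$ decision tree on $z$, and in particular that the $f$-oracle queries do not interact with the $h$-oracle queries in a way that inflates either the depth or the dependence on bits of $\phi_f$ beyond what a single-oracle analysis would give. Once this bookkeeping is in place, the rest of the argument is directly parallel to the proof of Claim~\ref{clm:Cg is good}, with Lemma~\ref{lem:D rho} replacing the trivial majority comparison used there (needed because the success threshold $\rho$ is now small rather than constant).
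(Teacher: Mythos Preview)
Your proposal is correct and follows essentially the same approach as the paper: hardwire $f$ and $\phi_f$, observe that with $f$ fixed the two-oracle computation $\Red^{w,f}(f(x),\alpha)$ becomes a depth-$q$ decision tree in $w$ alone, build $b_{x,\alpha}$ as a small CNF, feed $v_\alpha$ into $\D^{2^\ell}_\rho$, and OR over $\alpha$. One minor slip: the number of bits of $\phi_f$ that a single depth-$q$ decision tree can touch is $O(2^q)$, not $O(q)$ (different branches may query different positions), so the hardwired-advice count should be $O(2^\ell \cdot 2^t \cdot 2^q)$; this is harmless since the final size bound is $\poly(2^q,2^\ell,2^t)$ anyway. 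Also, Claim~\ref{clm:Cg is good} already uses $\D^{2^\ell}_\rho$, so there is no ``replacement'' here---the construction is the same, the only new ingredient being the observation that the $f$-oracle queries become constants.
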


\begin{proof}
The circuit $C_f$ will be hardwired with $f$ and $\phi_f$ (where $\phi_f$ is the function whose existence is guaranteed for $f$ by the nontriviality condition in Definition \ref{dfn:nontrivial construction}). Upon receiving an input $z \in \B^{2^{\ell'}}$ it will act as follows:
\begin{itemize}
\item Prepare $w=\phi_f \oplus z$. (Here we think of $\phi_f,z,w$ as  strings in $\B^{2^{\ell'}}$).
\item For every $x \in \B^{\ell}$ and $\alpha \in \B^t$ compute $\Red^{w,f}(f(x),\alpha)$, and compute
$b_{x,\alpha} \in \B$ defined by:
\[ b_{x,\alpha}=\left\{\begin{array}{lr}
        0  & \Red^{w,f}(f(x),\alpha) \ne x \\
        1  & \Red^{w,f}(f(x),\alpha) = x
        \end{array}\right. \]
\item For every $\alpha \in \B^t$, let $v_{\alpha}$ denote the $2^{\ell}$ bit long concatenation of all bits $(b_{x,\alpha})_{x \in \B^{\ell}}$ (fixing some order on $x \in \B^{\ell})$, and compute \[b_{\alpha}=\D^{2^{\ell}}_\rho(v_{\alpha}), \]
        where $\D^{2^{\ell}}_{\rho}$ is the circuit guaranteed in Lemma \ref{lem:D rho}.
\item Compute the disjunction of the $2^t$ bits $(b_{\alpha})_{\alpha \in \B^t}$ and output it.
\end{itemize}
It is immediate that the circuit $C_f$ performs the task specified in the lemma.
We now explain how to implement the circuit in small size and depth.

The string $f$ can be described by $\ell \cdot 2^{\ell}$ bits. We note that when using the string $\phi_f$ to prepare $w$, we only need to have $\phi_f$ at coordinates $y \in \B^{\ell'}$ such that there exist $x,\alpha$ such that $\Red^{w,f}(f(x),\alpha)$ depends on $w_y$. As on every pair $(f(x),\alpha)$ the reduction $\Red^{w,f}(f(x),\alpha)$ makes at most $q$ queries to its oracle $w$, it can depend on at most $2^q$ choices of $y \in \B^{\ell'}$.
Thus, going over all choices of $x \in \B^{\ell}$ and $\alpha \in \B^t$, $C_g$ only requires $O(2^t \cdot 2^{\ell} \cdot 2^q)$ bits of $\phi_f$. Note that any query that the reduction makes to its oracle $f$, is a constant that $C_f$ has hardwired (because $C_f$ is hardwired with $f$).
Overall, the size of the advice of $C_f$ is $O(2^t \cdot 2^{\ell} \cdot 2^q)$.
The circuit $C_f$ is constant depth by construction, and its size is indeed:
\[ \poly(2^q,2^{\ell},2^t,1/\rho)=\poly(2^q,2^{\ell},2^t), \]
by the requirement that $\rho \ge 2^{-\ell}$.
\end{proof}

We will consider the case where $f:\B^{\ell} \ar \B^{\ell}$ is a uniformly chosen permutation, and will analyze the behavior of $C_f$ on $z \from \BSC^{2^{\ell'}}_{\half-2\eps}$ and on $z \from \BSC^{2^{\ell'}}_{\half}$.
In what follows, let $\Pi_{\ell}$ denote the set of all permutations $f:\B^{\ell} \ar \B^{\ell}$.

\begin{claim}
\label{clm:hard-to-invert circuit decodes}
$\Pr_{f \from \Pi_{\ell},z \from \BSC^{2^{\ell'}}_{\half-2\eps}}[C_f(z)=1] \ge 0.999$.
\end{claim}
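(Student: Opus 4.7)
The plan is to show that for every permutation $f \in \Pi_\ell$, with probability at least $0.999$ over $z \from \BSC^{2^{\ell'}}_{\half - 2\eps}$, the function $h = \phi_f \oplus z$ is an $\eps$-good predictor, in the sense that
\[ \Pr_{x \from U_{\ell'}}[h(f^{\newowf}(x)) = f^{\pred}(x)] \ge \half + \eps. \]
Once this is established, since $f$ is a permutation (so that $f^{-1}(f(x)) = \{x\}$), Definition \ref{dfn:HC invert} guarantees the existence of $\alpha \in \B^t$ with $\Pr_{x \from U_\ell}[\Red^{h,f}(f(x),\alpha) = x] \ge \rho$, which by the first bullet of Claim \ref{clm:Cf is good} implies $C_f(z) = 1$. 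The claim then follows even without averaging over $f$.

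The key observation is that by the nontriviality condition (Definition \ref{dfn:nontrivial construction}), $\phi_f(f^{\newowf}(x)) = f^{\pred}(x)$ for every $x \in \B^{\ell'}$, and therefore
\[ h(f^{\newowf}(x)) = f^{\pred}(x) \oplus z_{f^{\newowf}(x)}, \]
so the prediction is correct precisely when $z_{f^{\newowf}(x)} = 0$. Letting $D$ denote the distribution of $f^{\newowf}(U_{\ell'})$ over $\B^{\ell'}$, the goal reduces to showing
\[ \sum_{y \in \B^{\ell'}} D(y) \cdot z_y \le \half - \eps, \]
with probability at least $0.999$ over $z \from \BSC^{2^{\ell'}}_{\half - 2\eps}$.

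For any fixed $f$, the coordinates $\{z_y\}$ are independent Bernoulli$(\half - 2\eps)$ variables, so $\Exp[\sum_y D(y) z_y] = \half - 2\eps$ and
\[ \mathrm{Var}\left(\sum_y D(y) z_y\right) = \sum_y D(y)^2 \cdot (\tfrac{1}{4} - 4\eps^2) \le \tfrac{1}{4}\sum_y D(y)^2 \le \tfrac{1}{4} \max_y D(y) \le \tfrac{\rho}{4}, \]
where the final inequality uses the nontriviality condition $\Hi(f^{\newowf}(U_{\ell'})) \ge \log(1/\rho)$. By Chebyshev's inequality, the probability of deviating from the mean by more than $\eps$ is at most $\frac{\rho}{4\eps^2}$, and since $\rho \le \eps^2/d$ for the universal constant $d$ in the statement of Lemma \ref{lem:hard-core invert exists C}, this probability is at most $0.001$ by choosing $d$ large enough.

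The only delicate point is the variance bound, which is exactly where the nontriviality condition on $\Hi(f^{\newowf}(U_{\ell'}))$ is needed: if the distribution $D$ were allowed to be concentrated (as in the trivial construction ruled out by Definition \ref{dfn:nontrivial construction}), then a single coordinate $z_y$ could flip the prediction and prevent Chebyshev-type concentration. The inequality $\sum_y D(y)^2 \le \max_y D(y) \le 2^{-\Hi(D)} \le \rho$ converts the min-entropy assumption into the required second-moment bound, and this is what makes the argument go through uniformly in $f$.
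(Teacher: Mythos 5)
Your proof is correct and follows the same approach as the paper's: fix a permutation $f$, use the nontriviality condition $\phi_f(f^{\newowf}(x)) = f^{\pred}(x)$ to reduce the event $h(f^{\newowf}(x)) = f^{\pred}(x)$ to the event $z_{f^{\newowf}(x)} = 0$, and then prove concentration of the weighted sum $\sum_y D(y)\,z_y$ around its mean $\half - 2\eps$, using $\max_y D(y) \le 2^{-\Hi(f^{\newowf}(U_{\ell'}))} \le \rho \le \eps^2/d$. The only (small) difference is that the paper closes the concentration step with a Chernoff bound for independent variables lying in $[0,\rho]$, whereas you use Chebyshev via the second-moment bound $\sum_y D(y)^2 \le \max_y D(y) \le \rho$; both give the required failure probability $\le 0.001$, and your variant is marginally more elementary.
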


\begin{proof}
Imagine that $f \from \Pi_{\ell}$ is already chosen and fixed. Let $(f^{\newowf},f^{\pred})=\Con(f)$, and let $\phi_f:\B^{\ell'} \ar \B$ be the function guaranteed by the fact that $\Con$ is nontrivial.
We now consider the additional experiment of choosing $z \from \BSC^{2^{\ell'}}_{\half-2\eps}$.
Let $h:\B^{\ell'} \ar \B$ be defined by $h=\phi_f \oplus z$. Our goal is to show that with probability at least $0.999$ over choosing $z \from \BSC^{2^{\ell'}}_{\half-2\eps}$, we have that:
\begin{equation}
\label{eqn:newowf and pred}
\Pr_{x \from U_{\ell'}}[h(f^{\newowf}(x))=f^{\pred}(x)] \ge \half+\eps.
\end{equation}
This is because whenever $f,z$ satisfy the condition above, then by the properties of $\Red$, we have that there exists $\alpha \in \B^t$ such that:
\[ \Pr_{x \from U_{\ell}}[\Red^{h,f}(f(x),\alpha)=x] \ge \rho. \]
which in turn by Claim \ref{clm:Cf is good} implies that $C_f(z)=1$.

By definition, for every $x \in \B^{\ell'}$, $\phi_f(f^{\newowf}(x))=f^{\pred}(x)$. Consequently, the event \[\set{h(f^{\newowf}(x))=f^{\pred}(x)}\] that appears in (\ref{eqn:newowf and pred}) can be expressed as \[\set{h(f^{\newowf}(x))=\phi_f(f^{\newowf}(x))}.\] As $h=\phi_f \oplus z$, this event can also be expressed as \[ \set{z(f^{\newowf}(x))=0}.\]
Thus, in order to prove the claim, it is sufficient to prove that with probability at least $0.999$ over choosing $z \from \BSC^{2^{\ell'}}_{\half-2\eps}$, we have that:
\[ \Pr_{x \from U_{\ell'}}[z(f^{\newowf}(x))=0] \ge \half+\eps. \]
Note that $Y=f^{\newowf}(U_{\ell'})$ is not necessarily uniform. For every $y \in \B^{\ell'}$, we define $p_y=\Pr[Y=y]$.
By the nontriviality of $\Con$ we have that $\Hi(Y) \ge \log(1/\rho)$, which means that for every $y \in \B^{\ell'}$, $p_y \le \rho$.
Thus, in order to conclude the proof, it is sufficient to show that:
\[ \Pr_{z \from \BSC^{2^{\ell'}}_{\half-2\eps}}[\sum_{y \in \B^{\ell'}} p_y \cdot z_y < \half-\eps] \ge 0.999. \]
(This can be thought of as a ``weighted version'' of Hamming weight in which the $p_y$ are not all the same).
When $z \from \BSC^{2^{\ell'}}_{\half-2\eps}$, the $2^{\ell'}$ random variables $x_y=p_y \cdot z_y$ (one for each choice of $y \in \B^{\ell'})$ are independent, and lie in the interval $[0,\rho]$. We have that:
\[\Exp _{z \from \BSC^{2^{\ell'}}_{\half-2\eps}}[\sum_{y \in \B^{\ell'}} p_y \cdot z_y] = \half-2\eps.\]
We can apply a Chernoff bound to bound the probability of deviation from the expectation and obtain that:
\[ \Pr_{z \from \BSC^{2^{\ell'}}_{\half-2\eps}}[\sum_{y \in \B^{\ell'}} p_y \cdot z_y < \half-\eps] \le e^{-\Omega(\eps^2/\rho)}, \]
By our requirement that $\rho \leq \beta \cdot \eps^2$ for a sufficiently small $\beta >0$, we get that the probability is indeed larger than $0.999$.
\end{proof}

On the other hand, we can show that:

\begin{claim}
\label{clm:Cf fails}
$\Pr_{f \from \Pi_{\ell},z \from \BSC^{2^{\ell'}}_{\half}}[C_f(z)=1] \le 0.001$.
\end{claim}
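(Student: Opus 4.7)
The plan is to mirror the structure of Claim~\ref{clm:hard-to-compute circuit does not decode}, but with the counting argument there replaced by a Gennaro--Trevisan style~\cite{GT} compression bound for inverting a random permutation, which is needed because $\Red$ now consults an $f$-oracle in addition to its $w$-oracle.

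First I would observe that when $z \from \BSC^{2^{\ell'}}_{\half}$, the string $w := \phi_f \oplus z$ is uniform on $\B^{2^{\ell'}}$ and independent of $f$. By Claim~\ref{clm:Cf is good}, the event $C_f(z)=1$ forces the existence of some $\alpha \in \B^t$ with $\Pr_x[\Red^{w,f}(f(x),\alpha)=x] \ge \rho/10$, so by a union bound over the $2^t$ choices of $\alpha$ it suffices to show that for every fixed $\alpha$,
\[
p(\alpha) \;:=\; \Pr_{f \from \Pi_\ell,\; w \from U_{2^{\ell'}}}\!\left[\,\Pr_{x \from U_\ell}[\Red^{w,f}(f(x),\alpha)=x] \ge \rho/10\,\right] \;\le\; 2^{-t}/1000.
\]

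Unlike in the hard-to-compute setting, the output of $\Red^{w,f}(\cdot,\alpha)$ is not independent of $f$ (because $\Red$ queries $f$), so the pure counting argument of Claim~\ref{clm:hard-to-compute circuit does not decode} does not apply directly. Instead, I would invoke a Gennaro--Trevisan style incompressibility bound: any (possibly randomized) oracle algorithm $A^f$ that makes $q$ queries to $f$ and uses $a$ bits of nonuniform advice can invert a uniformly random $f \from \Pi_\ell$ on at least a $\rho/10$ fraction of inputs with probability at most $2^{a - \Omega(\rho \cdot \ell \cdot 2^\ell)}$, provided that $q$ and $\log(1/\rho)$ are not too large relative to $\ell$. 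Applied with $a = t$, this yields $p(\alpha) \le 2^{\,t - \Omega(\rho \cdot \ell \cdot 2^\ell)}$, and the hypotheses $t \le 2^{\ell/5}$, $\rho \ge 2^{-\ell/5}$ make the ``saving'' term $\rho \cdot \ell \cdot 2^\ell = \Omega(\ell \cdot 2^{4\ell/5})$ dominate $2t$ by a wide margin. Hence $2^t \cdot p(\alpha) \le 0.001$, as required.

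The main obstacle will be adapting the Gennaro--Trevisan argument to a reduction that consults \emph{two} oracles rather than one. The plan for handling this is to exploit that $w$ is drawn uniformly and independently of $f$: without loss of generality, $\Red$ never queries the same coordinate of $w$ twice within a single invocation, so the bits of $w$ read during any such invocation are i.i.d.\ fair coins. Thus $\Red^{w,f}(\cdot,\alpha)$ may be viewed as a randomized, $q$-query oracle algorithm in $f$ alone that uses $t$ bits of fixed advice, and the Gennaro--Trevisan compression argument can be executed on this randomized algorithm, either by fixing a best random string (absorbing the negligible additive cost into the already generous compression exponent) or by averaging the compression bound over $w$ and using Markov. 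Either route yields the required estimate of $p(\alpha)$, completing the proof.
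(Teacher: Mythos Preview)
Your proposal is correct and follows essentially the same approach as the paper: observe that $w=\phi_f\oplus z$ is uniform and independent of $f$, view the $w$-oracle answers as internal randomness so that $\Red^{w,f}(\cdot,\alpha)$ becomes a $q$-query oracle procedure in $f$ alone, and then invoke the Gennaro--Trevisan bound to conclude that such a procedure inverts a random permutation with the required success only with negligible probability.

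Two minor remarks. First, the paper states its Gennaro--Trevisan theorem (Theorem~\ref{thm:GT}) with the quantifier $\exists\,\alpha$ already inside the probability, so no separate union bound over $\alpha$ is taken; your version fixes $\alpha$ and then still invokes GT ``with $a=t$ bits of advice,'' which double-counts the $2^t$ factor (harmlessly, since the saving term swamps it). Second, the cleanest way to dispose of the correlation of $w$-answers across different inputs $x$ is the one you list first: fix a best $w$ (for each $f$ this yields a deterministic $q$-query procedure in $f$), then apply GT to that fixed procedure and average over $w$; this is exactly what the paper does implicitly.
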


In order to prove Claim \ref{clm:Cf fails} we will use the following result by Gennaro and Trevisan \cite{GT}:

\begin{theorem}[\cite{GT}]
\label{thm:GT}
For sufficiently large $\ell$, for every oracle procedure $P^{(\cdot)}$ that makes at most $2^{\ell/5}$ queries to its oracle, and accepts inputs $x \in \B^{\ell}$ and $\alpha \in \B^{t}$ for $t \le 2^{\ell/5}$, it holds that:
\[\Pr_{f \from \Pi_{\ell}}[\exists \alpha \in \B^t:\ \Pr_{x \from U_{\ell}}[P^f(f(x),\alpha)=x] \ge 2^{-\ell/5}] \le 2^{-2^{\ell/2}} \]
\end{theorem}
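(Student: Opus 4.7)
The plan is a compression argument in the style of Gennaro--Trevisan. Assume for contradiction that the set $B \subseteq \Pi_\ell$ of permutations for which some witnessing advice $\alpha_f \in \B^t$ exists has density greater than $2^{-2^{\ell/2}}$ inside $\Pi_\ell$; write $N = 2^\ell$ and $\rho = 2^{-\ell/5}$. For each $f \in B$, fix such an $\alpha_f$ and set $S_f = \{y \in \B^\ell : P^f(y,\alpha_f) = f^{-1}(y)\}$, so $|S_f| \ge \rho N$. The goal is to exhibit an injective encoding of $f \in B$ whose bit-length is strictly less than $\log(N!) - 2^{\ell/2}$; this will force $|B| \le N!\cdot 2^{-2^{\ell/2}}$, contradicting the density assumption.

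First I greedily build a ``reconstructable'' subset $T_f \subseteq S_f$. Initialize $T_f = \emptyset$ and an auxiliary ``used'' set $K_f = \emptyset$, then scan $y \in S_f$ in lexicographic order, adding $y$ to $T_f$ exactly when $f^{-1}(y) \notin K_f$ and, in that case, inserting into $K_f$ the (at most $q$) oracle locations that $P^f(y,\alpha_f)$ queries. Each addition enlarges $K_f$ by at most $q$ and each skipped $y$ is charged to a distinct element $f^{-1}(y) \in K_f$, so $|S_f| \le (q+1)|T_f|$, i.e., $|T_f| \ge \rho N/(q+1) \ge 2^{3\ell/5 - 1}$ using $\rho \ge 2^{-\ell/5}$ and $q \le 2^{\ell/5}$. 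Writing $k = |T_f|$, the encoding I will use is $E(f) = (\alpha_f,\, T_f,\, f^{-1}(T_f),\, f|_{[N] \setminus f^{-1}(T_f)})$, prefixed by $\log N$ bits specifying $k$; its total bit-length is at most $O(\log N) + t + 2\log\binom{N}{k} + \log((N-k)!)$.

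The main obstacle is verifying that $E$ is injective, which requires a decoder that simulates $P^f(y,\alpha_f)$ for $y \in T_f$ in lex order with only the partial knowledge of $f$ carried by $E(f)$. For each query $x'$ the decoder either (i)~reads $f(x')$ from the stored partial bijection when $x' \notin f^{-1}(T_f)$, (ii)~returns the already-decoded value $y''$ when $x' = f^{-1}(y'')$ for some $y'' \in T_f$ earlier than $y$, or (iii)~returns $y$ when $x' \in f^{-1}(T_f)$ but no earlier decoding matches (thereby learning $f^{-1}(y) = x'$). The greedy invariant rules out the troublesome case $x' = f^{-1}(y'')$ for some \emph{later} $y'' \in T_f$: such a $y''$ would have had $f^{-1}(y'') \in Q(y) \subseteq K_f$ at the moment of its own processing, and would therefore have been skipped. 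Hence case (iii) must correspond to $y'' = y$, so the simulation is faithful, its output is $f^{-1}(y)$, and $f$ is uniquely recovered from $E(f)$.

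Comparing lengths via $\log(N!) = \log\binom{N}{k} + \log(k!) + \log((N-k)!)$, the net saving of $E$ over the trivial encoding is $\log(k!) - \log\binom{N}{k} - t - O(\log N) \ge k(2\log k - \log N - O(1)) - t - O(\log N)$. Plugging in $k \ge 2^{3\ell/5 - 1}$ and $t \le 2^{\ell/5}$, for $\ell$ sufficiently large this saving is $\Omega(k\ell) = \Omega(2^{3\ell/5}\cdot \ell)$, which vastly exceeds $2^{\ell/2}$. This completes the contradiction and proves the theorem.
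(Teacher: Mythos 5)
The theorem you are proving is cited by the paper from Gennaro and Trevisan \cite{GT} without proof, so there is no paper-internal argument to compare against; you are supplying the missing proof. Your sketch is a faithful execution of the standard Gennaro--Trevisan compression argument, and I believe it is correct with the parameters the paper uses.

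A few remarks on the details. The greedy construction of $T_f$ and the charging argument giving $|S_f|\le (q+1)|T_f|$ are right: every skipped $y$ is charged to the distinct element $f^{-1}(y)$ of the final $K_f$, and $|K_f|\le q|T_f|$. The decoder's case analysis is the crux, and your greedy invariant (if $y''\in T_f$ with $y''>y$, then $f^{-1}(y'')\notin Q(y)$, since $Q(y)\subseteq K_f$ when $y''$ is scanned and $y''$ would have been skipped) is exactly the right reason why case~(iii) can only identify $x'$ as $f^{-1}(y)$. One tiny bookkeeping point: your case~(ii) is phrased as ``$y''$ earlier than $y$,'' but the decoder should also reuse a case-(iii) discovery made earlier \emph{within the same} simulation of $y$ (if $P$ re-queries $f^{-1}(y)$); this is trivial to repair and changes nothing. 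You should also say explicitly that $P$ is deterministic (the theorem statement implies this, and it is needed for the reconstruction), and that the $\log N$ prefix specifying $k$ makes the encoding self-delimiting so the union over $k$ costs only another $\log N$ bits, which your slack easily absorbs. Finally, the arithmetic checks out: $k\ge \rho N/(q+1)\ge 2^{3\ell/5-1}$, the savings $\log(k!)-\log\binom{N}{k}-t-O(\log N)\ge k(2\log k-\log N - O(1)) - t - O(\ell)$ is $\Omega(2^{3\ell/5}\ell)$, and $2^{3\ell/5}\ell \gg 2^{\ell/2}$ for large $\ell$, giving $|B|\le N!\cdot 2^{-2^{\ell/2}}$ as required. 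I see no gap.
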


We now prove Claim \ref{clm:Cf fails}.
\begin{proof}[Proof of Claim \ref{clm:Cf fails}]
The first step of $C_f(z)$ is to prepare $w=\phi_f \oplus z$.
However, for  $f \from \Pi_{\ell}$ and $z \from \BSC_{\half}^{2^{\ell'}}$, we have that $w=\phi_f \oplus z$ is uniformly chosen, and independent of $f$.

Therefore, for any choice of advice $\alpha \in\B^t$ oracle access to $w$ does not help the reduction $\Red(\cdot,\alpha)$ to invert $f$. More precisely, Let $P^{(\cdot)}(x,\alpha)$ be an implementation of $\Red^{(\cdot,\cdot)}$ where whenever $\Red$ makes a query to its first oracle $h$, the query is answered by a fresh uniform random bit.
We have that:
\[ \Pr_{f \from \Pi_{\ell},z \from \BSC^{2^{\ell'}}_{\half}}[C_f(z)=1] \]
\[\ \le \Pr_{f \from \Pi_{\ell},z \from \BSC^{2^{\ell'}}_{\half}}[\exists \alpha \in \B^t:\ \Pr_{x \from U_{\ell}}[\Red^{\phi_f \oplus z,f}(f(x),\alpha)=x] \ge 2^{-\ell/5}]  \]
\[\ = \Pr_{f \from \Pi_{\ell},w \from \BSC^{2^{\ell'}}_{\half}}[\exists \alpha \in \B^t:\ \Pr_{x \from U_{\ell}}[\Red^{w,f}(f(x),\alpha)=x] \ge 2^{-\ell/5}]  \]
\[ \le \Pr_{f \from \Pi_{\ell}}[\exists \alpha \in \B^t:\ \Pr_{x \from U_{\ell}}[P^f(f(x),\alpha)=x] \ge 2^{-\ell/5}]  \]
\[ \le 2^{-2^{\ell/2}} \le 0.001, \]
where the first inequality follows from Claim \ref{clm:Cf is good} and assumption that $\rho \geq 2^{-\ell/5}$, and
where the penultimate inequality follows from Theorem \ref{thm:GT}.
\end{proof}

The proof of Lemma \ref{lem:hard-core invert exists C} now follows  from Claims~\ref{clm:hard-to-invert circuit decodes} and~\ref{clm:Cf fails} in exactly the same way as in the end of the previous section. Specifically:

\begin{proof}[Proof of Lemma~\ref{lem:hard-core invert exists C}]
From Claims~\ref{clm:hard-to-invert circuit decodes} and~\ref{clm:Cf fails}, it follows that:
\begin{align*}
\Pr_{f \from \Pi_{\ell}}\left[\Pr_{z \from \BSC^{2^{\ell'}}_{\half-2\eps}}[C_f(z)=0] > 0.01\right] &\le 0.1,\\
\Pr_{f \from \Pi_{\ell}}\left[\Pr_{z \from \BSC^{2^{\ell'}}_{\half}}[C_f(z)=1] > 0.01 \right] &\le 0.1.
\end{align*}
Thus, by a union bound, there exists $f:\B^\ell \to \B^\ell$ such that
\begin{align*}
\Pr_{z \from \BSC^{2^{\ell'}}_{\half-2\eps}}[C_f(z)=1] &\ge 0.99,\\
\Pr_{z \from \BSC^{2^{\ell'}}_{\half}}[C_f(z)=1] &\le 0.01.
\end{align*}
This completes the proof of the lemma, as $C_f$ satisfies all the other requirements as well.
\end{proof}

\section*{Acknowledgment}

We are grateful to Ilan Newman for participating in early stages of this research and for many helpful discussions.

Noga Ron-Zewi was partially supported by ISF grant 735/20. Ronen Shaltiel was partially supported by ISF grant 1628/17. Nithin Varma was partially supported by ISF grant 497/17,
and Israel PBC Fellowship for Outstanding Postdoctoral Researchers from India and China.
\bibliographystyle{alpha}
\bibliography{references}

\newcommand{\etalchar}[1]{$^{#1}$}
\begin{thebibliography}{DJMW12}

\bibitem[Aar10]{Aaronson10}
Scott Aaronson.
\newblock {BQP} and the polynomial hierarchy.
\newblock In Leonard~J. Schulman, editor, {\em Proceedings of the 42nd {ACM}
  Symposium on Theory of Computing, {STOC} 2010, Cambridge, Massachusetts, USA,
  5-8 June 2010}, pages 141--150. {ACM}, 2010.

\bibitem[AASY16]{ASSY}
Benny Applebaum, Sergei Artemenko, Ronen Shaltiel, and Guang Yang.
\newblock Incompressible functions, relative-error extractors, and the power of
  nondeterministic reductions.
\newblock {\em Comput. Complex.}, 25(2):349--418, 2016.

\bibitem[Ajt83]{Ajtai}
Mikl{\'o}s Ajtai.
\newblock {$\Sigma \sp{1}\sb{1}$}-formulae on finite structures.
\newblock {\em Annals of Pure and Applied Logic}, 24(1):1--48, 1983.

\bibitem[AS11]{AS11}
Sergei Artemenko and Ronen Shaltiel.
\newblock Lower bounds on the query complexity of non-uniform and adaptive
  reductions showing hardness amplification.
\newblock In Leslie~Ann Goldberg, Klaus Jansen, R.~Ravi, and Jos{\'{e}} D.~P.
  Rolim, editors, {\em Approximation, Randomization, and Combinatorial
  Optimization. Algorithms and Techniques - 14th International Workshop,
  {APPROX} 2011, and 15th International Workshop, {RANDOM} 2011, Princeton, NJ,
  USA, August 17-19, 2011. Proceedings}, volume 6845 of {\em Lecture Notes in
  Computer Science}, pages 377--388. Springer, 2011.

\bibitem[BFNW93]{BFNW93}
L{\'a}szl{\'o} Babai, Lance Fortnow, Noam Nisan, and Avi Wigderson.
\newblock {BPP} has subexponential time simulations unless {EXPTIME} has
  publishable proofs.
\newblock {\em Computational Complexity}, 3(4):307--318, 1993.

\bibitem[CGR14]{CGR14}
Gil Cohen, Anat Ganor, and Ran Raz.
\newblock Two sides of the coin problem.
\newblock In Klaus Jansen, Jos{\'{e}} D.~P. Rolim, Nikhil~R. Devanur, and
  Cristopher Moore, editors, {\em Approximation, Randomization, and
  Combinatorial Optimization. Algorithms and Techniques, {APPROX/RANDOM} 2014,
  September 4-6, 2014, Barcelona, Spain}, volume~28 of {\em LIPIcs}, pages
  618--629. Schloss Dagstuhl - Leibniz-Zentrum f{\"{u}}r Informatik, 2014.

\bibitem[DJMW12]{DJMW12}
Yevgeniy Dodis, Abhishek Jain, Tal Moran, and Daniel Wichs.
\newblock Counterexamples to hardness amplification beyond negligible.
\newblock In Ronald Cramer, editor, {\em Theory of Cryptography - 9th Theory of
  Cryptography Conference, {TCC} 2012, Taormina, Sicily, Italy, March 19-21,
  2012. Proceedings}, volume 7194 of {\em Lecture Notes in Computer Science},
  pages 476--493. Springer, 2012.

\bibitem[GL89]{GL89}
Oded Goldreich and Leonid~A. Levin.
\newblock A hard-core predicate for all one-way functions.
\newblock In David~S. Johnson, editor, {\em Proceedings of the 21st Annual
  {ACM} Symposium on Theory of Computing, May 14-17, 1989, Seattle, Washigton,
  {USA}}, pages 25--32. {ACM}, 1989.

\bibitem[GR08]{GR08}
Dan Gutfreund and Guy~N. Rothblum.
\newblock The complexity of local list decoding.
\newblock In Ashish Goel, Klaus Jansen, Jos{\'{e}} D.~P. Rolim, and Ronitt
  Rubinfeld, editors, {\em Approximation, Randomization and Combinatorial
  Optimization. Algorithms and Techniques, 11th International Workshop,
  {APPROX} 2008, and 12th International Workshop, {RANDOM} 2008, Boston, MA,
  USA, August 25-27, 2008. Proceedings}, volume 5171 of {\em Lecture Notes in
  Computer Science}, pages 455--468. Springer, 2008.

\bibitem[GSV18]{GSV18}
Aryeh Grinberg, Ronen Shaltiel, and Emanuele Viola.
\newblock Indistinguishability by adaptive procedures with advice, and lower
  bounds on hardness amplification proofs.
\newblock In Mikkel Thorup, editor, {\em 59th {IEEE} Annual Symposium on
  Foundations of Computer Science, {FOCS} 2018, Paris, France, October 7-9,
  2018}, pages 956--966. {IEEE} Computer Society, 2018.

\bibitem[GT00]{GT}
Rosario Gennaro and Luca Trevisan.
\newblock Lower bounds on the efficiency of generic cryptographic
  constructions.
\newblock In {\em 41st Annual Symposium on Foundations of Computer Science,
  {FOCS} 2000, 12-14 November 2000, Redondo Beach, California, {USA}}, pages
  305--313. {IEEE} Computer Society, 2000.

\bibitem[Gur06]{Gur-survey06}
Venkatesan Guruswami.
\newblock Algorithmic results in list decoding.
\newblock {\em Foundations and Trends in Theoretical Computer Science}, 2(2),
  2006.

\bibitem[IJKW10]{IJKW10}
Russell Impagliazzo, Ragesh Jaiswal, Valentine Kabanets, and Avi Wigderson.
\newblock Uniform direct product theorems: Simplified, optimized, and
  derandomized.
\newblock {\em {SIAM} J. Comput.}, 39(4):1637--1665, 2010.

\bibitem[KM93]{KM93}
Eyal Kushilevitz and Yishay Mansour.
\newblock Learning decision trees using the {F}ourier spectrum.
\newblock {\em SIAM Journal on Computing}, 22(6):1331--1348, 1993.

\bibitem[Lip90]{Lip90}
Richard~J. Lipton.
\newblock Efficient checking of computations.
\newblock In {\em proceedings of the 7th Annual ACM Symposium on Theoretical
  Aspects of Computer Science (STACS)}, volume 415 of {\em lncs}, pages
  207--215. Springer, 1990.

\bibitem[LSS{\etalchar{+}}19]{LSSTV19}
Nutan Limaye, Karteek Sreenivasaiah, Srikanth Srinivasan, Utkarsh Tripathi, and
  S.~Venkitesh.
\newblock A fixed-depth size-hierarchy theorem for
  {AC}\({}^{\mbox{0}}\)[{\(\oplus\)}] via the coin problem.
\newblock In Moses Charikar and Edith Cohen, editors, {\em Proceedings of the
  51st Annual {ACM} {SIGACT} Symposium on Theory of Computing, {STOC} 2019,
  Phoenix, AZ, USA, June 23-26, 2019}, pages 442--453. {ACM}, 2019.

\bibitem[Raz87]{Razborov}
Alexander Razborov.
\newblock Lower bounds on the dimension of schemes of bounded depth in a
  complete basis containing the logical addition function.
\newblock {\em Akademiya Nauk SSSR. Matematicheskie Zametki}, 41(4):598--607,
  1987.
\newblock English translation in Mathematical Notes of the Academy of Sci. of
  the USSR, 41(4):333-338, 1987.

\bibitem[RRV18]{RRV18}
Sofya Raskhodnikova, Noga {Ron-Zewi}, and Nithin Varma.
\newblock Erasures versus errors in local decoding and property testing.
\newblock {\em Electronic Colloquium on Computational Complexity (ECCC)},
  25:195, 2018.

\bibitem[RSV21]{RonZewiSV21}
Noga Ron{-}Zewi, Ronen Shaltiel, and Nithin Varma.
\newblock Query complexity lower bounds for local list-decoding and hard-core
  predicates (even for small rate and huge lists).
\newblock In James~R. Lee, editor, {\em 12th Innovations in Theoretical
  Computer Science Conference, {ITCS} 2021, January 6-8, 2021, Virtual
  Conference}, volume 185 of {\em LIPIcs}, pages 33:1--33:18. Schloss Dagstuhl
  - Leibniz-Zentrum f{\"{u}}r Informatik, 2021.

\bibitem[RTV04]{RTV04}
Omer Reingold, Luca Trevisan, and Salil~P. Vadhan.
\newblock Notions of reducibility between cryptographic primitives.
\newblock In Moni Naor, editor, {\em Theory of Cryptography, First Theory of
  Cryptography Conference, {TCC} 2004, Cambridge, MA, USA, February 19-21,
  2004, Proceedings}, volume 2951 of {\em Lecture Notes in Computer Science},
  pages 1--20. Springer, 2004.

\bibitem[Sha20]{S20}
Ronen Shaltiel.
\newblock Is it possible to improve {Y}ao's {XOR} lemma using reductions that
  exploit the efficiency of their oracle?
\newblock In Jaroslaw Byrka and Raghu Meka, editors, {\em Approximation,
  Randomization, and Combinatorial Optimization. Algorithms and Techniques,
  {APPROX/RANDOM} 2020, August 17-19, 2020, Virtual Conference}, volume 176 of
  {\em LIPIcs}, pages 10:1--10:20. Schloss Dagstuhl - Leibniz-Zentrum f{\"{u}}r
  Informatik, 2020.

\bibitem[Smo87]{Smolensky}
Roman Smolensky.
\newblock Algebraic methods in the theory of lower bounds for {B}oolean circuit
  complexity.
\newblock In {\em Proceedings of the 19th Annual {ACM} Symposium on Theory of
  Computing}, pages 77--82, 1987.

\bibitem[STV01]{STV01}
Madhu Sudan, Luca Trevisan, and Salil Vadhan.
\newblock Pseudorandom generators without the {XOR} lemma.
\newblock {\em Journal of Computer and System Sciences}, 62(2):236--266, 2001.

\bibitem[SV10]{SV08}
Ronen Shaltiel and Emanuele Viola.
\newblock Hardness amplification proofs require majority.
\newblock {\em {SIAM} J. Comput.}, 39(7):3122--3154, 2010.

\bibitem[SV22]{SV22}
Ronen Shaltiel and Emanuele Viola.
\newblock On hardness assumptions needed for "extreme high-end" prgs and fast
  derandomization.
\newblock In Mark Braverman, editor, {\em 13th Innovations in Theoretical
  Computer Science Conference, {ITCS} 2022, January 31 - February 3, 2022,
  Berkeley, CA, {USA}}, volume 215 of {\em LIPIcs}, pages 116:1--116:17.
  Schloss Dagstuhl - Leibniz-Zentrum f{\"{u}}r Informatik, 2022.

\bibitem[TZ04]{TaShmaZuckerman}
Amnon Ta{-}Shma and David Zuckerman.
\newblock Extractor codes.
\newblock {\em {IEEE} Trans. Inf. Theory}, 50(12):3015--3025, 2004.

\bibitem[Vio06]{ViolaThesis}
Emanuele Viola.
\newblock The complexity of hardness amplification and derandomization, 2006.

\bibitem[Yek12]{Yekhanin12}
Sergey Yekhanin.
\newblock Locally decodable codes.
\newblock {\em Found. Trends Theor. Comput. Sci.}, 6(3):139--255, 2012.

\end{thebibliography}

\appendix

\section{Proof of Lemma \ref{lem:high-ent-guess}}
\label{sec:app:high-ent-guess}

In this section we prove Lemma \ref{lem:high-ent-guess}, restated below.

\highentguess*

In this proof we use the following notation. For two distributions $X,Y$ over $\B^n$, we say that they are $\eps$-close if for every event $A \subseteq \B^n$, $|\Pr[X \in A] - \Pr[Y \in A]| \le \eps$. We will also use Shannon's entropy   given by $$H(X) = \sum_x \Pr(X=x) \cdot \log\left( \frac {1} {\Pr(X=x)} \right),$$ and the following statement of Pinsker's lemma:

\begin{lemma}[Pinsker's lemma]
If $X$ is a distribution over $\B^n$ and $H(X) \ge n-\eps$, then $X$ is $\sqrt{\eps}$-close to $U_n$.
\end{lemma}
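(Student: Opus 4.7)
The plan is to derive the claim from Pinsker's inequality, applied in the special case where one distribution is uniform and the Kullback--Leibler divergence collapses to the entropy deficit. I would organize the proof in three short steps.

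\emph{Step 1 (KL to uniform is entropy deficit).} Writing $D(X\|U_n) = \sum_x \Pr[X=x] \log\bigl(\Pr[X=x]/2^{-n}\bigr)$, a line of algebra splits the log and gives
\[ D(X \| U_n) \;=\; n - H(X) \;\le\; \eps. \]

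\emph{Step 2 (reduction to the binary case via the log-sum inequality).} Fix an arbitrary event $A \subseteq \B^n$, and set $p = \Pr[X \in A]$ and $q = \Pr[U_n \in A] = |A|/2^n$. Splitting the KL sum according to whether $x \in A$, and applying the log-sum inequality
\[ \sum_i a_i \log \frac{a_i}{b_i} \;\ge\; \Bigl(\sum_i a_i\Bigr) \log \frac{\sum_i a_i}{\sum_i b_i} \]
to each of the two groups separately, I would obtain the data-processing bound
\[ D(X \| U_n) \;\ge\; p \log \frac{p}{q} + (1-p) \log \frac{1-p}{1-q} \;=:\; d(p \| q). \]

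\emph{Step 3 (binary Pinsker).} I would prove $d(p \| q) \ge 2(p-q)^2$ by fixing $q$ and letting $g(p) := d(p\|q) - 2(p-q)^2$. A direct computation shows $g(q) = 0$, $g'(q) = 0$, and
\[ g''(p) \;=\; \frac{1}{p(1-p)\ln 2} - 4, \]
which is positive on $(0,1)$ because $p(1-p) \le 1/4$ and $\ln 2 < 1$. Hence $g \ge 0$ on $[0,1]$, as desired.

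Combining the three steps, $2(p-q)^2 \le d(p\|q) \le D(X \| U_n) \le \eps$ for every event $A$, and therefore $|\Pr[X \in A] - \Pr[U_n \in A]| \le \sqrt{\eps/2} \le \sqrt{\eps}$, which is exactly the stated $\sqrt{\eps}$-closeness. The only genuinely technical step is the binary Pinsker inequality, whose proof is the short convexity computation above; the two other steps are simply unwinding the definitions of Shannon entropy and of KL divergence.
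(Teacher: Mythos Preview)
Your proof is correct and is the standard derivation of Pinsker's inequality specialized to the uniform reference measure. The paper, however, does not supply a proof of this lemma at all: it simply states it as a known fact (under the name ``Pinsker's lemma'') and immediately applies it in the proof of Lemma~\ref{lem:high-ent-guess}. So there is nothing in the paper to compare your argument against; you have filled in a step that the authors deliberately left as a citation to a classical inequality. Your three-step route (entropy deficit equals $D(X\|U_n)$, data-processing via log-sum to reduce to the binary divergence, then the convexity proof of binary Pinsker) is exactly the textbook proof, and the constant you obtain, $\sqrt{\eps/2}$, is in fact the sharp Pinsker constant, slightly better than the $\sqrt{\eps}$ the paper quotes.
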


\begin{proof}[Proof of lemma \ref{lem:high-ent-guess}]
By the requirements on $M$, we have that $H(M) \ge k-k^{0.99}$.
The Shannon entropy function satisfies $H(M_1) + \ldots + H(M_k) \ge H(M_1,\ldots,M_k)$ and therefore:
\[ H(M_1) + \ldots + H(M_k) \ge k-k^{0.99} \]
It follows that:
\[ \Exp_{i \from [k]} H(M_i) = 1- k^{-0.01}. \]
By Markov's inequality, for every $c$, the fraction of $i \in [k]$ such that $H(M_i) < 1-c \cdot k^{-0.01}$ is less than $1/c$.
By Pinsker's lemma, for every $i$ such that $H(M_i) \ge 1-c \cdot k^{-0.01}$, we have that $M_i$ is $\sqrt{c \cdot k^{-0.01}}$-close to $U_1$.
Therefore,
\[ \Pr_{m \from M,i \from [k]}[D(i)=m_i] \le \half+\sqrt{c \cdot k^{-0.01}} + 1/c \le 0.5001, \]
for a sufficiently large constant $c$, and sufficiently large $k$.
\end{proof}

\section{Proof of Corollary \ref{cor:coin-problem}}
\label{sec:app:coin problem}

In this Section we prove Corollary \ref{cor:coin-problem}, restated below.

\coin*

The proof is by reduction to Theorem \ref{thm:coin-problem}.

\begin{proof}[Proof of Corollary \ref{cor:coin-problem}]
For $\eps'=\Theta(\eps$), given a circuit $C$ that satisfies:
\begin{itemize}
\item $\Pr_{z \from \BSC^{n}_{\half-\eps'}}[C(z)=1] \ge 0.99$,
\item $\Pr_{z \from \BSC^{n}_{\half}}[C(z)=1] \le 0.01$.
\end{itemize}
We will show the existence of a circuit $C'$ that satisfies:
\begin{itemize}
\item $\Pr_{z \from \BSC^{n}_{\half-\eps}}[C'(z)=1] \ge 0.9$,
\item $\Pr_{z \from \BSC^{n}_{\half+\eps}}[C'(z)=1] \le 0.1$.
\end{itemize}
We will start by constructing a randomized circuit $C'$ which upon receiving input $x \in \B^n$, for every $i \in [n]$ independently, $C'$ replaces input bit $x_i$ by zero with probability $p=\frac{2\eps}{1+2\eps}$, let $x'_i$ denote the obtained bit, and let $C'(x)=C(x')$.
The choice of $p$ is made so that for every $i$:
\begin{itemize}
\item If $x \from \BSC^n_{\half+\eps}$ then $x' \from \BSC^n_{\half}$.
\item if $x \from \BSC^n_{\half-\eps}$ then $x' \from \BSC^n_{\half-\eps'}$ for $\eps'=\eps + \eps \cdot \frac{1-2\eps}{1+2\eps} = \Theta(\eps)$.
\end{itemize}
It follows that if $C$ distinguishes between $\BSC^n_{\half-\eps'}$ and $\BSC^n_{\half}$ then $C'$ satisfies:
\begin{itemize}
\item $\Pr_{z \from \BSC^{n}_{\half-\eps}}[C'(z)=1] \ge 0.99$,
\item $\Pr_{z \from \BSC^{n}_{\half+\eps}}[C'(z)=1] \le 0.01$.
\end{itemize}
This gives that:
\[ \Pr_{z \from \BSC^{n}_{\half-\eps}}[C'(z)=1] - \Pr_{z \from \BSC^{n}_{\half+\eps}}[C'(z)=1] \ge 0.98, \]
where the probability in the expressions above is also over the randomness of $C'$.
Therefore, there exists a fixing of the random coins of $C'$ which achieves this gap of $0.98$ and, hence, satisfies the requirements on $C'$ (this follows because for numbers $0 \le p \le P < 1$ that satisfy $P-p \ge 0.98$, it holds that: $P \ge 0.9$ and $p \le 0.1$).
We note that the size and depth of $C'$ are bounded by the size and depth of $C$ respectively, and the corollary follows.
\end{proof}

\end{document}